\newlength{\figurewidth}
\newlength{\smallfigurewidth}
\newcounter{thmcounter}
\numberwithin{thmcounter}{section}
\newtheorem{theorem}[thmcounter]{Theorem}
\newtheorem{lemma}[thmcounter]{Lemma}
\theoremstyle{definition}
\newtheorem{definition}[thmcounter]{Definition}
\theoremstyle{remark}
\newtheorem*{remark}{Remark}
\newcommand{\rank}{\mathsf{rank}}
\newcommand{\att}{\mathsf{att}}
\newcommand{\lbl}{\mathsf{lab}}
\newcommand{\tlbl}{\lambda}
\newcommand{\rt}{\mathsf{root}}
\newcommand{\ext}{\mathsf{ext}}
\newcommand{\N}{\mathbb{N}}
\newcommand{\HGR}{\mathsf{HGR}}
\newcommand{\rhs}{\mathsf{rhs}}
\newcommand{\val}{\mathsf{val}}
\newcommand{\nodes}{\mathsf{i\text{-}nodes}}
\newcommand{\dt}{\mathsf{dt}}
\newcommand{\internal}{\mathsf{internal}}
\newcommand{\outEdge}{\mathsf{traverse}}
\newcommand{\NTSet}{E^{\nt}}
\newcommand{\sib}{\ensuremath{\mathsf{sib}}}
\newcommand{\firstID}{\ensuremath{\mathsf{first}}}
\newcommand{\cur}{\ensuremath{\mathsf{cNode}}}
\newcommand{\tab}{\ensuremath{t}}
\newcommand{\curTab}{\ensuremath{\mathsf{cTab}}\xspace}
\newcommand{\maxRank}{\ensuremath{\kappa}\xspace}
\newcommand{\pos}{\ensuremath{\mathsf{pos}}\xspace}
\newcommand{\reg}{\ensuremath{\mathsf{nxt}}\xspace}
\newcommand{\nde}{\ensuremath{\mathsf{vtx}}\xspace}
\newcommand{\id}{\ensuremath{\mathsf{vertex}}\xspace}
\newcommand{\nt}{\ensuremath{\mathsf{nt}}\xspace}
\newcommand{\lhs}{\ensuremath{\mathsf{lhs}}\xspace}
\newcommand{\height}{\ensuremath{\mathsf{height}}\xspace}
\newcommand{\SLOrder}{\ensuremath{\leq_{\mathsf{NT}}}\xspace}
\newcommand{\nr}{\ensuremath{\rho}\xspace}
\newcommand{\findNode}{\ensuremath{\mathsf{findVertex}}\xspace}
\newcommand{\attach}{\ensuremath{\mathsf{attach}}\xspace}
\newcommand{\offset}{\ensuremath{\mathsf{off}}\xspace}
\newcommand{\tbl}{\ensuremath{\mathsf{tab}}\xspace}
\newcommand{\row}{\ensuremath{i}\xspace}
\newcommand{\success}{\ensuremath{\mathsf{succ}}\xspace}
\newcommand{\bitLen}{\ensuremath{\beta}}
\newcommand{\lblsize}{\small}
\newcommand{\idsize}{\small}
\newcommand{\tabBox}[4]{
    \coordinate[#1] (#2o1) {};
    \coordinate[above=35pt of #2o1] (#2o2) {};
    \coordinate[right=50pt of #2o2] (#2o3) {};
    \coordinate[below=35pt of #2o3] (#2o4) {};
    \coordinate[above right=5pt and 5pt of #2o1] (#2i1) {};
    \coordinate[above=20pt of #2i1] (#2i2) {};
    \coordinate[right=20pt of #2i2] (#2i3) {};
    \coordinate[right=20pt of #2i3] (#2i4) {};
    \coordinate[below=20pt of #2i4] (#2i5) {};
    \coordinate[left=20pt of #2i5] (#2i6) {};
    \coordinate[below=8pt of #2i2] (#2x1) {};
    \coordinate[below=8pt of #2i4] (#2x2) {};

    \coordinate[above=17.5pt of #2o1] (#2ol);
    \coordinate[below=17.5pt of #2o3] (#2or);
    \coordinate[above=10pt of #2i1] (#2il);
    \coordinate[below=10pt of #2i4] (#2ir);

    \draw (#2o1) -- (#2o2) -- (#2o3) -- (#2o4) -- (#2o1);
    \draw (#2i1) -- (#2i2) -- (#2i3) -- (#2i4) -- (#2i5) -- (#2i6) -- (#2i1);
    \draw (#2i3) -- (#2i6);
    \draw[dotted] (#2x1) -- (#2x2);

    \node[below right=-2pt and -2pt of #2o2] (#2idx) {\scriptsize $#3$};
    \node[below right=-1pt and -2pt of #2i2, anchor=north west] (#2nxt) {\scriptsize $\reg$};
    \node[below right=-2pt and -2pt of #2i3, anchor=north west] (#2nde) {\scriptsize $\nde$};
    \node[below right=8pt and 4pt of #2i3] (#2node) {\scriptsize $#4$};
    \coordinate[below right=15pt and 10pt of #2i2] (#2p) {};
    \coordinate[below right=13pt and 7pt of  #2i2] (#2pl) {};
    \coordinate[below right=13pt and 13pt of #2i2] (#2pr) {};
}
\algrenewcommand\textproc{\textsf}
\tikzset{
    ncbar angle/.initial=90,
    ncbar/.style={
        to path=(\tikztostart)
        -- ($(\tikztostart)!#1!\pgfkeysvalueof{/tikz/ncbar angle}:(\tikztotarget)$)
        -- ($(\tikztotarget)!($(\tikztostart)!#1!\pgfkeysvalueof{/tikz/ncbar angle}:(\tikztotarget)$)!\pgfkeysvalueof{/tikz/ncbar angle}:(\tikztostart)$)
        -- (\tikztotarget)
    },
    ncbar/.default=0.5cm,
}
\tikzset{square left brace/.style={ncbar=0.2cm}}
\tikzset{square right brace/.style={ncbar=-0.2cm}}
\begin{document}
\title{Constant Delay Traversal of Grammar-Compressed Graphs with Bounded Rank}

\author[bre]{Sebastian Maneth\corref{cor1}}
\ead{maneth@uni-bremen.de}

\author[oxf]{Fabian Peternek}
\ead{fabian.peternek@cs.ox.ac.uk}

\cortext[cor1]{Corresponding author}

\address[bre]{Universit\"{a}t Bremen, PO Box 330 440, 283334 Bremen, Germany}
\address[oxf]{University of Oxford, Wolfson Building, Parks Road, Oxford, OX1 3QD, United Kingdom}

\begin{abstract}
  We present a pointer-based data structure for constant time traversal of the edges of an
  edge-labeled (alphabet $\Sigma$) directed hypergraph (a graph where edges can be incident to more
  than two vertices, and the incident vertices are ordered) given as hyperedge-replacement grammar $G$.
  It is assumed  that the grammar has a fixed rank $\maxRank$ (maximal number of vertices connected to a
  nonterminal hyperedge) and that each vertex of the represented graph is incident to at most one
  $\sigma$-edge per direction ($\sigma \in \Sigma$). Precomputing the data structure needs
  $O(|G||\Sigma|\maxRank r h)$ space and $O(|G||\Sigma|\maxRank r h^2 )$ time, where $h$ is the
  height of the derivation tree of $G$ and $r$ is the maximal rank of a terminal edge occurring in
  the grammar.
\end{abstract}

\maketitle
\thispagestyle{empty}

\section{Introduction}

Straight-line grammars are context-free grammars that produce one object only (see,
e.g.~\cite{DBLP:reference/algo/Bannai16,loh12,man18}). The term ``straight-line'' refers to the
restriction that such grammars have no recursion (i.e., there is no derivation from a nonterminal to
itself) and have no non-determinism (i.e., for each nonterminal there is exactly one rule). The
most well-known type of straight-line (SL) grammars are string grammars which are also known as
\emph{straight-line programs} (SLPs). SLPs are a convenient formalism for data compression due to
several reasons: they generalize a number of classical compression schemes, such as Lempel-Ziv and
they are mathematically clean and simple, which has fostered the development of a wealth of
algorithms for compressed data. Note that an SLP can be exponentially smaller than the object it
represents. For many problems algorithms exist that run over the compressed (grammar-based)
representation within a similar time bound (say, polynomial time) as over the original uncompressed
object; examples include pattern matching, testing equivalence, and evaluating simple queries (e.g.,
queries expressed by finite state automata), and more as given in a survey~\cite{loh12}. SLPs have been
generalized to trees~\cite{DBLP:journals/is/BusattoLM08,DBLP:journals/is/LohreyMM13}
(see the survey~\cite{DBLP:conf/dlt/Lohrey15}) by the use of so called ``straight-line linear
context-free tree grammars'', for short \emph{SL tree grammars}. In an SL tree grammar,
nonterminals may have argument trees, i.e., nonterminals may appear at inner nodes of the right-hand
side of a rule. A rule with right-hand side $t$ is applied to such an inner node, by replacing the
node by $t$ and by afterwards inserting copies of the subtrees of the node at leaves of $t$ that are
labeled by special place-holders called ``parameters''. In SL tree grammars, each parameter may
occur at most once, so that at most one copy of a subtree is inserted. This \emph{linearity}
restriction guarantees that at most exponential compression ratios are achieved (non-linear SL
context-free tree grammars can have double-exponential ratios; such grammars have not been studied
much~\cite{DBLP:journals/jcss/LohreyMS12}). Note that an SL tree grammar that produces a
monadic tree is an SLP. 

Recently, SLPs have been generalized to (hyper) graphs~\cite{DBLP:journals/is/ManethP18} by using
\emph{straight-line context-free hyperedge replacement
grammars}~\cite{DBLP:conf/gg/DrewesKH97,eng97}, for short, \emph{SL HR grammars}. In such a
grammar, hyperedges labeled by nonterminals are replaced by hypergraphs. If such a hyperedge is
incident with the vertices $v_1,\dots,v_k$ (in that order, i.e., hyperedges are ordered), then the
replacement consists of 
\begin{enumerate}
  \item removing the nonterminal hyperedge, 
  \item inserting a disjoint copy of the right-hand side $g$ of the rule, and 
  \item identifying the $j$-th ``external vertex'' of $g$ with $v_j$.
\end{enumerate}
Thus, external vertices are used in a similar way as the parameters in a tree grammar (in a linear
tree grammar, to be precise). In fact, every SL tree grammar can be seen as a particular tree
generating SL HR grammar~\cite[Section 4.4]{DBLP:journals/is/ManethP18}. SL HR grammars are more
complex than SL tree grammars: e.g., it is not known yet whether isomorphism of the graphs
represented by two SL HR grammars is decidable in polynomial time (for SLPs this is well-known and
can easily be used to solve also the equivalence problem of SL tree grammars in polynomial time).

Let us now consider the traversal of objects represented by SL grammars. E.g., in an SLP, we would
like to traverse the letters of the represented string from left-to-right. If implemented naively, a
single traversal step may require $O(h)$ time, where $h$ is the height of the derivation tree of the
grammar: the current letter is produced by the right-most leaf of a derivation subtree (say, a
subtree of the root node), and the next letter to the right is produced by the left-most leaf of the
next derivation subtree. We would like to have a data structure which allows to carry out any
traversal step in constant time, and which can be computed from the grammar in linear (or
polynomial) time. This problem was solved for SLPs by Gasieniec et
al.~\cite{DBLP:conf/dcc/GasieniecKPS05} and was later extended to SL tree
grammars~\cite{DBLP:journals/algorithmica/LohreyMR18}. The idea of these solutions is to generate
terminal objects only in the leaves of the derivation tree and to represent the leaves of the
derivation tree by certain ``left-most derivations''. Such derivations allow to move from one leaf
of the derivation tree to the next leaf (or the previous one) in constant time (see also
Section~\ref{sss:naive_delay} below).
An
independent method provides navigation and certain other tree queries in time $O(\log |t|)$ for an
unranked tree $t$, and is able to represent tree nodes in $O(\log |t|)$
space~\cite{DBLP:journals/siamcomp/BilleLRSSW15}.

One drawback of the left-most-derivation approach is that it is hard to adapt it to more general
grammar formalisms. For instance, we were not able to generalize the approach to SL graph grammars.
Here, the difficulty arises from the fact that we can \emph{not} assume that terminal objects are
only produced in leaves of the derivation tree. In fact, also for SL tree grammars the
left-most-derivation approach can only be applied after bringing the grammar in a particular
(``string like'') normal form where each nonterminal uses at most one
parameter~\cite{DBLP:journals/jcss/LohreyMS12}. It remains open in the
paper~\cite{DBLP:journals/algorithmica/LohreyMR18} and completely unclear how to apply the
left-most-derivation approach to SL tree grammars with nonterminals using several parameters. This
is unfortunate because the normal form causes a blow-up in grammar size that can be bounded by a
factor of $r$, where $r$ is the maximal rank of terminal
symbols~\cite{DBLP:journals/corr/abs-1902-03568,DBLP:journals/iandc/JezL16,DBLP:journals/jcss/LohreyMS12}.

We present a new data structure for constant delay traversal of hypergraphs represented by SL HR
grammars. A single traversal step consists of moving from a vertex $x$ along a direction of a
hyperedge $e$ to a vertex $y$. A direction is given by a pair $(i,j)$ of integers and means the
following: $x$ must be the $i$-th incident vertex of $e$, and $y$ must be the $j$-th incident vertex
of $e$. Our graph grammars must obey the following restriction: each vertex of the represented
graph is incident with at most one $\sigma$-labeled edge per direction, for every edge label
$\sigma\in\Sigma$. Thus, the graphs have a vertex degree that is bounded by $O(|\Sigma|r)$, where
$r$ is the maximal number of nodes incident to any given edge. This is a strong restriction. We
discuss whether it is necessary in Section~\ref{sse:limits}. We note that any tree grammar is
covered by this restriction and our data structure can therefore be used to navigate arbitrary SL
tree grammars (in particular ones not in normal form). 

Let us consider our solution in more detail. A rule of an SL graph grammar is of the form $A\to g$
where $A$ is a nonterminal of rank $\maxRank$, for some natural number $\maxRank$, and $g$ is a
hypergraph with $\maxRank$ external vertices (this is an ordered set of specially marked vertices,
as described earlier). The rule is applied to a hyperedge labeled $A$ by replacing the edge by the
graph $g$ and identifying the $i$-th vertex of the edge with the $i$-th external vertex of $g$.
This means that the vertices identified with external vertices have already been produced earlier in
the derivation; more precisely, they have been produced by rules where these vertices are
\emph{internal} (= not external). The key idea of our constant delay traversal is to maintain
mappings from external vertices to the rules that produced the corresponding internal vertices.
These mappings allow constant-time ``jumps'' in the derivation tree, similar to (but more general
than) left-most derivations. Intuitively, for every nonterminal, every of its internal
vertices~$x$, and every edge label~$\sigma$, we precompute a ``tableau'' that contains the shortest
derivation of the $\sigma$-edge incident to $x$ (if it exists). In a traversal step, we merge
tableaux, which requires to manipulate up to $\maxRank$-many pointers, where $\maxRank$ is the
maximal rank of a nonterminal in $G$, defined as the maximal number of external vertices appearing
in any right-hand side of $G$'s rules. Thus, our traversals are constant
delay for grammars of fixed maximal rank, and our precomputed data structures occupy space in
$O(|G||\Sigma|\maxRank hr)$ where $\maxRank$ is the maximal rank of the nonterminals in $G$, $h$ is
the height of the derivation tree of the grammar $G$, and $r$ is the maximal rank of a symbol in
$\Sigma$. 

\medskip

A preliminary version of this paper appeared at DCC 2018~\cite{DBLP:conf/dcc/ManethP18}. The present version
is extended by allowing for traversal on general hyperedges, commenting on limits of the methods,
and providing proofs.

\section{Preliminaries}
For $k \geq 0$ we denote by $[k]$ the set $\{1,\ldots,k\}$.
A \emph{ranked alphabet} consists of an alphabet (that is, a finite set of symbols) $\Sigma$
together with a mapping $\rank: \Sigma \rightarrow \N$ that assigns a rank to every symbol in
$\Sigma$. For the rest of the paper we assume that $\Sigma$ is a fixed set $\{\sigma, \sigma_1,
\ldots \sigma_m\}$ for some $m \in \N$.
The empty string is denoted by $\varepsilon$, and for two strings $u,v$ the
concatenation of $u$ and $v$ is denoted by $u\cdot v$, or $uv$ if that is unambiguous. 
The length of a string $u$ is denoted by $|u|$.
We denote by $u[i]$ the symbol at position $i$ of $u$ (starting with~$1$).
Let $k = \max\{\rank(\sigma) \mid \sigma \in \Sigma\}$ be the maximal rank of any symbol within
$\Sigma$. We define ranked, ordered trees as a tuple $(V, \tlbl, \rt, E)$ where $V$ is a set of
nodes, $\tlbl:V \to \Sigma$ a node labeling, $\rt \in V$ is the root node, and $E \subseteq V \times [k]
\times V$ is a set of edges such that $(u,i,v) \in E$ means that $v$ is the $i$-th child of $u$ and
we call $u$ the parent of $v$. For a node $u$ we denote by $E(u) = \{(u,i,v) \in E\}$ the set of
edges originating in $u$.
We require that 
\begin{enumerate} 
  \item every node $v$ has exactly $\rank(\tlbl(v))$ children (i.e., $|E(v)| = \rank(\tlbl(v))$),
  \item for any node $u$ there is at most one pair $i,v$ such that $(u,i,v) \in E$, and
  \item every node has exactly one parent, except the node $\rt$ which has no parent.
\end{enumerate}
For a node $u$ in the tree $(V, \tlbl, \rt, E)$, the \emph{subtree rooted in $u$} is the tree $(V',
\tlbl, u, E')$ where $V' \subseteq V$ contains the nodes reachable from $u$ and $E'\subseteq E$ the
edges of $E$ involving nodes in $V'$.

The nodes of a tree $s$ can be addressed by their \emph{Dewey addresses}
$D(s)$. For a tree $s = (V, \tlbl, \rt, E)$, $D(s)$ is recursively defined as $D(s) =
\{\varepsilon\} \cup \bigcup_{(\rt, i, u) \in E(\rt)} i.D(s_u)$ where $s_u$ is the subtree of $s$
rooted at $u$. Thus $\varepsilon$ addresses the root node and $u.i$ the $i$-th child of the node
$u$. Intuitively the Dewey address of a node $u$ is the sequence of edge indices encountered on the path
from the root of the tree to $u$. We generally use a node $u \in V$ and its Dewey address
interchangeably. The size of $s$ is $|s| = |E|$.

A \emph{hypergraph over $\Sigma$} (or
simply, a \emph{graph}) is a tuple $g = (V, E, \att, \lbl, \ext)$ where $V$ is the
finite, non-empty set of vertices, $E$ is a finite set of edges, $\att: E \rightarrow V^+$ is the
attachment mapping, $\lbl: E \rightarrow \Sigma$ is the label mapping, and $\ext \in V^*$ is a
string of \emph{external vertices}. A vertex $x \in V$ is \emph{external} if $x \in \ext$ and is
\emph{internal} otherwise. The vertex $\ext[i]$ for $i \in [|\ext|]$ is also referred to as the
$i$-th external vertex of $g$.
We refer to the
components of $g$ by  $V_g, E_g, \att_g, \lbl_g$, and $\ext_g$. We define the rank of an edge as
$\rank(e) = |\att(e)|$ and require that $\rank(e) = \rank(\lbl(e)) \geq 1$ for every $e$ in $E$.
Thus an edge labeled with $\sigma$ has $\rank(\sigma)$ many attached nodes and we do not allow edges
without attached vertices. We add the following restriction: for every $e \in E$, $\att(e)$ contains
no vertex twice. 
The rank of a hypergraph $g$ is defined as $\rank(g) = |\ext_g|$. We call a graph \emph{unique-labeled},
if for any vertex $x$, $\sigma \in \Sigma$, and $i$ there is at most one edge $e$ with
$\att(e)[i] = x$ and $\lbl(e) = \sigma$. We denote the set of all hypergraphs over $\Sigma$ by
$\HGR(\Sigma)$.
Figure~\ref{fig:hypergraph_example} shows an example of a hypergraph with 2 external vertices. Formally, the
pictured graph has $V = \{v_1,v_2,v_3\}$, $E = \{e_1,e_2,e_3\}$, $\att = \{e_1 \mapsto v_2\cdot v_1, e_2
\mapsto v_1\cdot v_3, e_3 \mapsto v_2\cdot v_1\cdot v_3\}$, $\lbl = \{e_1 \mapsto a, e_2 \mapsto b, e_3
\mapsto A\}$, and $\ext = v_2 \cdot v_3$. Note that external vertices are filled black and their position
within $\ext$ is given by a number under or above the vertex. Hyperedges of rank greater than $2$ are
represented by labeled boxes. The boxes have edges to their attached vertices; these edges are labeled
by indices indicating the order of the attached vertices. Rank-2 edges are drawn directed, from their
first to second attached vertex.

\begin{figure}[!t]
    \centering
    \begin{tikzpicture}
    \begin{scope}[every node/.style={circle, draw, inner sep=2pt, minimum size=9pt, node
        distance=25pt}]
        \node[fill=black] (1) {\idsize \color{white}$v_2$};
        \node[right=of 1] (2) {\idsize $v_1$};
        \node[right=of 2,fill=black] (3) {\idsize \color{white} $v_3$};
    \end{scope}
    \node[above=20pt of 2, rectangle, draw, inner sep=2pt, outer sep=0pt] (A) {\lblsize $A$};
    \node[above=-.2pt of A] {\scriptsize $e_3$};

    \draw[-stealth'] (1) -- node[above=-1pt] {\lblsize $a$} node[below] {\scriptsize $e_1$} (2);
    \draw[-stealth'] (2) -- node[above=-1pt] {\lblsize $b$} node[below] {\scriptsize $e_2$} (3);

    \draw (A) -- node[above left=-2pt, pos=.1] {\scriptsize $1$} (1);
    \draw (A) -- node[left=-2pt, pos=.25] {\scriptsize $2$} (2);
    \draw (A) -- node[above right=-2pt, pos=.1] {\scriptsize $3$} (3);

    \node[below=-1pt of 1] {\scriptsize 1};
    \node[below=-1pt of 3] {\scriptsize 2};
\end{tikzpicture}
    \caption{Example of a drawing of a hypergraph with three vertices (two of which external) and three
    edges.}
    \label{fig:hypergraph_example}
\end{figure}
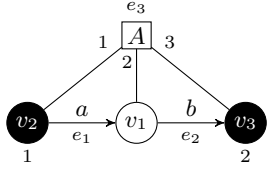

\begin{remark}
  Both, trees and graphs, are structures with ``nodes''. Since we use both types of structures in
  combination, we use the following convention to avoid confusion: we always refer to nodes of a
  tree as ``node(s)'', and exclusively use ``vertex/vertices'' for graphs.
\end{remark}
\paragraph*{Computation model}
We assume the use of a word RAM model with registers containing $\bitLen$-bit words for some number
$\beta \in \N$. The space of a data structure is measured by the number of registers it uses.
Registers are addressed by numbers, which we refer to as pointers, and can be randomly accessed in
constant time. Arithmetic operations (we only need addition) and comparisons of the contents of two
registers can be carried out in constant time. 

\section{Representing graphs using SL HR grammars}
This section defines the grammar model used to represent graphs. We use hyperedge replacement
grammars, see e.g.~\cite{DBLP:books/sp/Habel92,DBLP:conf/gg/DrewesKH97,eng97}. The
straight-line restriction, as known from string (see e.g.~\cite{loh12,DBLP:reference/algo/Bannai16}) and tree
grammars (see~\cite{DBLP:conf/dlt/Lohrey15} for a survey) guarantees that a grammar produces
at most one object. It was studied for hyperedge replacement grammars
in~\cite{DBLP:journals/is/ManethP18}. Note that contrary to strings and trees, the restriction no
longer guarantees that the language of a grammar contains only one element. It merely guarantees
that every graph in the language of a graph grammar is isomorphic. We discuss the formal details
below and also explain how we get from a language of isomorphic graphs to one unique graph
(Section~\ref{sss:unique}).
\begin{definition}
  A \emph{hyperedge replacement grammar over $\Sigma$} (for short, HR grammar) is a tuple $G = (N, P, S)$,
    where $N$ is a ranked alphabet of \emph{nonterminals} with $N \cap \Sigma=\emptyset$, $P
    \subseteq N \times \HGR(\Sigma \cup N)$ is the set of \emph{rules} (or \emph{productions}) such
    that $\rank(A) = \rank(g)$ for every $(A,g) \in P$, and $S \in
    \HGR(\Sigma\cup N)$ is the \emph{start graph} with $\rank(S) = 0$.
\end{definition}
We often write $p:A \rightarrow g$ for a rule $p = (A,g)$ and call $A$ the left-hand side $\lhs(p)$
and $g$ the right-hand side $\rhs(p)$ of $p$. We call symbols in $\Sigma$ \emph{terminals}.
Consequently an edge is called \emph{terminal} if it is labeled by a terminal and \emph{nonterminal}
otherwise.
To define a derivation we first need to define how to replace a nonterminal edge $e$ by a graph $h$.
This replacement intuitively removes $e$ and replaces it with an isomorphic copy of $h$ such that
the external vertices of $h$ are merged, in order, with the vertices attached to $e$. The formal
definition is slightly more involved, see also Figure~\ref{fig:edgeReplacementEx} for a specific
example of an edge replacement (replacing the red edge $e$ in the graph $g$ with the graph $h$ given
below), annotated with all the formal details. We need to introduce some notation. Let $g$ be a
hypergraph and $\mathcal{V}$ the set of all possible vertices. Let $\nr: V_g \rightarrow \mathcal{V}$
be an injective function, and $\nr^*: V_g^* \rightarrow \mathcal{V}^*$ its extension to strings. We
call $\nr$ a \emph{vertex renaming on $g$}. We define $g[\nr] = (\{\nr(v) \mid v \in V_g\}, E_g,
\att^\nr, \lbl_g, \nr(\ext_g))$, where $\att^\nr(e) = \nr^*(\att_g(e))$ for all $e \in E_g$. For a
hypergraph $g$ and an edge $e \in E_g$ we denote by $g[-e]$ the hypergraph obtained from $g$ by
removing the edge $e$. For two hypergraphs $g,h$ we denote by $g \cup h$ the \emph{union} of the two
hypergraphs defined by $(V_g \cup V_h, E_g \uplus E_h, \att_g \uplus \att_h, \lbl_g \uplus \lbl_h,
\ext_g)$. Note that this union
\begin{enumerate}
    \item
merges vertices that exist in both $g$ and $h$ and
    \item 
creates disjoint copies of $E_h$, $\att_h$, and $\lbl_h$, and
    \item
uses the external vertices of $g$ only.
\end{enumerate}
Now, let $g,h$ be hypergraphs. Let $e \in E_g$ such that $\rank(e) = \rank(h)$.
Then let $\nr$ be a renaming of $h$ that identifies the external vertices of $h$ with the vertices
$e$ is attached to (i.e., $\nr(\ext_h[i]) = \att_g(e)[i]$ for all $i \in [\rank(h)]$) and uses fresh
identifiers for all internal nodes of $h$ (i.e., $\nr(\{v \mid v \in V_h, v $ is internal$\}) \cap
V_g = \emptyset$). The \emph{replacement of $e$ by $h$ in $g$} is defined as
\[g[e/h] = g[-e] \cup h[\nr].\]
\begin{figure}[!t]
  \centering
  \begin{tikzpicture}
  \begin{scope}[every node/.style={circle, draw, inner sep=0pt, minimum size=12pt, node
    distance=20pt}]
    \node (g1) {\idsize $v_1$};
    \node[below=of g1, text=white, fill=black] (g4) {\idsize $v_4$};
    \node[right=of g1, text=white, fill=black] (g3) {\idsize $v_3$};
    \node[right=of g4] (g2) {\idsize $v_2$};

    \node[below=1.5 of g4, text=white, fill=black] (h4) {\idsize $v_4$};
    \node[below=of h4, text=white, fill=black] (h3) {\idsize $v_3$};
    \node[right=of h4] (h1) {\idsize $v_1$};
    \node[right=of h3] (h2) {\idsize $v_2$};
  \end{scope}
  \node[left=5pt of g1] {$g$:};
  \node[left=5pt of h4] {$h$:};
  \node[below=-1pt of g4] {\scriptsize 2};
  \node[above=-1pt of g3] {\scriptsize 1};
  \node[below=-1pt of h4] {\scriptsize 2};
  \node[below=-1pt of h3] {\scriptsize 1};

  \begin{scope}[every path/.style={-stealth'}]
    \draw (g1) -- node[left=-1pt] {\lblsize $a$} (g4);
    \draw (g1) -- node[above=-1pt] {\lblsize $a$} (g3);
    \draw (g4) -- node[below=-1pt] {\lblsize $a$} (g2);
    \draw[color=Red] (g2) -- 
      node[left=-1pt] {\lblsize $e$}
      node[right=-1pt,color=black] {\lblsize $A$} (g3);

    \draw (h3) -- node[below=-1pt] {\lblsize $b$} (h2);
    \draw (h2) -- node[right=-1pt] {\lblsize $b$} (h1);
    \draw (h1) -- node[above=-1pt] {\lblsize $b$} (h4);
  \end{scope}

  \coordinate[below right=.5 and .5 of g3] (gx1);
  \coordinate[right=3.5 of gx1] (gx2);
  \coordinate[below right=.5 and .5 of h1] (hx1);
  \coordinate[right=3.5 of hx1] (hx2);

  \begin{scope}[every path/.style={-stealth'}]
    \draw (gx1) --node[below=-1pt] {$g' = g[-e]$}
    (gx2);
    \draw (hx1) --node[below=-1pt] {$h' = h[\nr]$}
      node[above] {$
        \begin{array}[h]{l}
          \nr_h(v_1) = v_5\\
          \nr_h(v_2) = v_6\\
          \nr_h(v_3) = \att(e)[1] = v_2\\
          \nr_h(v_4) = \att(e)[2] = v_3
        \end{array}$
      }
    (hx2);
  \end{scope}

  \begin{scope}[every node/.style={circle, draw, inner sep=0pt, minimum size=12pt, node
    distance=20pt}]
    \node[right=4.5 of g3] (2g1) {\idsize $v_1$};
    \node[below=of 2g1, text=white, fill=black] (2g4) {\idsize $v_4$};
    \node[right=of 2g1, text=white, fill=black] (2g3) {\idsize $v_3$};
    \node[right=of 2g4] (2g2) {\idsize $v_2$};

    \node[below=1.5 of 2g4] (2h4) {\idsize $v_3$};
    \node[below=of 2h4] (2h3) {\idsize $v_2$};
    \node[right=of 2h4] (2h1) {\idsize $v_5$};
    \node[right=of 2h3] (2h2) {\idsize $v_6$};
  \end{scope}
  \node[below=-1pt of 2g4] {\scriptsize 2};
  \node[above=-1pt of 2g3] {\scriptsize 1};
  
  \begin{scope}[every path/.style={-stealth'}]
    \draw (2g1) -- node[left=-1pt] {\lblsize $a$} (2g4);
    \draw (2g1) -- node[above=-1pt] {\lblsize $a$} (2g3);
    \draw (2g4) -- node[below=-1pt] {\lblsize $a$} (2g2);

    \draw (2h3) -- node[below=-1pt] {\lblsize $b$} (2h2);
    \draw (2h2) -- node[right=-1pt] {\lblsize $b$} (2h1);
    \draw (2h1) -- node[above=-1pt] {\lblsize $b$} (2h4);
  \end{scope}

  \coordinate[below right=.5 and .5 of 2g3] (2gx1);
  \coordinate[below right=.5 and .5 of 2h1] (2hx1);
  \coordinate[below right=.85 and 1.5 of 2g2] (ghx1);
  \coordinate[right=1.5 of ghx1] (ghx2);
  
  \draw (2gx1) -- (ghx1);
  \draw (2hx1) -- (ghx1);
  \draw[-stealth'] (ghx1) --node[below=-1pt] {$g' \cup h'$} (ghx2);

  \begin{scope}[every node/.style={circle, draw, inner sep=0pt, minimum size=12pt, node
    distance=20pt}]
    \node[above right=.5 and .5 of ghx2] (3g1) {\idsize $v_1$};
    \node[below=of 3g1, text=white, fill=black] (3g4) {\idsize $v_4$};
    \node[right=of 3g1, text=white, fill=black] (3g3) {\idsize $v_3$};
    \node[right=of 3g4] (3g2) {\idsize $v_2$};

    \node[right=of 3g3] (3h1) {\idsize $v_5$};
    \node[right=of 3g2] (3h2) {\idsize $v_6$};
  \end{scope}

  \begin{scope}[every path/.style={-stealth'}]
    \draw (3g1) -- node[left=-1pt] {\lblsize $a$} (3g4);
    \draw (3g1) -- node[above=-1pt] {\lblsize $a$} (3g3);
    \draw (3g4) -- node[below=-1pt] {\lblsize $a$} (3g2);

    \draw (3g2) -- node[below=-1pt] {\lblsize $b$} (3h2);
    \draw (3h2) -- node[right=-1pt] {\lblsize $b$} (3h1);
    \draw (3h1) -- node[above=-1pt] {\lblsize $b$} (3g3);
  \end{scope}
  \node[below=-1pt of 3g4] (s1) {\scriptsize 2};
  \node[above=-1pt of 3g3] (s2) {\scriptsize 1};

  \node[draw, color=Red, thick, rectangle, fit=(3g1) (3g2) (s1) (s2)] (gBorder) {};
  \node[draw, color=Blue, thick, rectangle, inner sep=6pt, fit=(3h1) (3h2) (3g2) (s2)] (hBorder) {};
  \node[above right=0pt and -15pt of gBorder.north west, Red] {$g[-e]$};
  \node[below left=0pt and -15pt of hBorder.south east, Blue] {$h[\nr_h]$};
\end{tikzpicture}
  \caption{Example of an edge replacement.}
  \label{fig:edgeReplacementEx}
\end{figure}
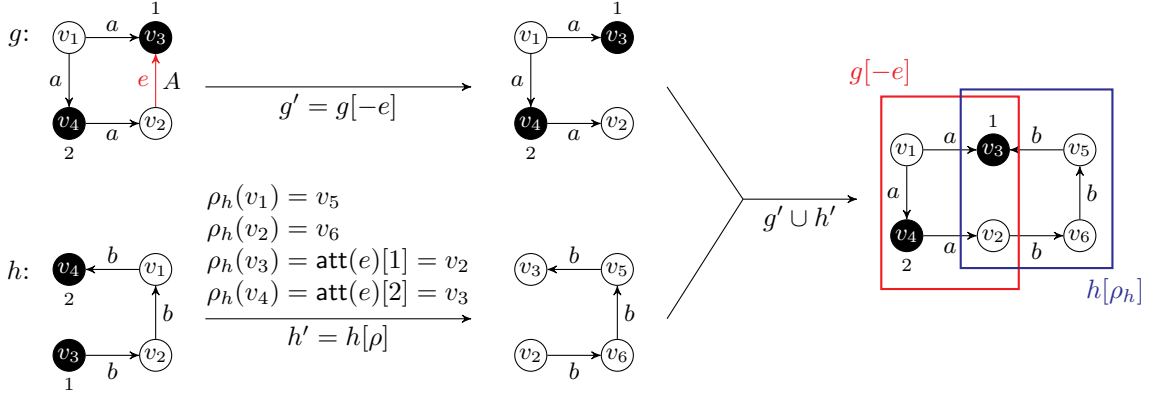
Let $g$ be a graph with a nonterminal edge $e$. A \emph{derivation step} is defined as $g
\Rightarrow h$ if there exist a nonterminal edge $e$ in $g$ and a rule $\lbl(e) \to g'$ such that $h
= g[e/g']$. A derivation $g \Rightarrow^* h$ consists of any number of derivation steps and we
define the \emph{language of $G$} as $L(G) = \{g \in \HGR(\Sigma) \mid S \Rightarrow^* g\}$.

\begin{remark}
  This grammar formalism may be unfamiliar to some readers. We therefore give an example of a string
  grammar and an equivalent HR grammar to show that HR grammars are a generalization of string
  grammars. Consider the grammar
  \begin{align*}
    S &\to AB \\
    A &\to ab \mid aAb \\
    B &\to c \mid cB
  \end{align*}
  It generates the language $\{a^nb^nc^m \mid n,m \geq 1\}$. Figure~\ref{fig:string_grammar_ex}
  shows an equivalent HR grammar, which generates graphs such that their edge labels, read in order,
  generate the same language. To derive the word $aaabbbcc$ from the above string grammar the
  following derivation could be used:
  \begin{align*}
    S &\Rightarrow AB \\
      &\Rightarrow aAbB \\
      &\Rightarrow aaAbbB \\
      &\Rightarrow aaabbbB \\
      &\Rightarrow aaabbbcB \\
      &\Rightarrow aaabbbcc
  \end{align*}
  An equivalent derivation of the HR grammar is shown in 
  Figure~\ref{fig:string_derivation_ex}. In this case, the external vertices of the graphs in the
  right-hand sides mark the beginning and end of the string. This is not necessary in string
  grammars, because strings implicitly have a given beginning and end.
  In the figures we identify different internal vertices by using
  different colors. To show their origin in the grammar, only the saturation changes during
  derivation, not the hue.
  \label{rem:comparison}
\end{remark}

\begin{definition}
    An HR grammar $G = (N,P,S)$ is called \emph{straight-line} (SL HR grammar) if 
    \begin{enumerate}
      \item the binary relation 
        \begin{align*}
          \SLOrder =& \{(A_1, A_2) \in N^2 \mid \exists g: (A_1,g)\in P, \exists e \in E_g:\lbl_g(e)
          = A_2\} \\
          &\cup \{(S, A) \in \{S\}\times N \mid \exists e \in E_S: \lbl_S(e) = A\}
        \end{align*}
      defines a single (ordered) directed acyclic graph with vertices $N \cup \{S\}$ rooted at $S$, and
      \item for every $A \in N$ there exists exactly one rule $p \in P$ with
        $\lhs(p) = A$.
    \end{enumerate}
    \label{def:SLHR}
\end{definition}

Note that $L(G) \neq \emptyset$ and $L(G)$ is a (potentially infinite) set of isomorphic graphs.
As the right-hand side for a nonterminal is unique in
SL HR grammars we denote the right-hand side $g$ of $p = (A,g)$ by $\rhs(A)$. By convention
whenever we state something over all right-hand sides of a grammar this includes the start graph.
The height of an SL HR grammar $\height(G)$ is the longest distance of any two nonterminals within
the DAG induced by $\SLOrder$.
We refer by $\maxRank = \max_{A\in N}\{\rank(A)\}$ to the maximal nonterminal rank in the SL HR
grammar, which we further assume to be fixed for the rest of this paper. In the following we often
say \emph{grammar} instead of \emph{SL HR grammar}. An SL HR grammar is \emph{unique-labeled} if the
graphs in $L(G)$ are unique-labeled. For the remainder of this paper we will assume every grammar
$G$ to be unique-labeled. 
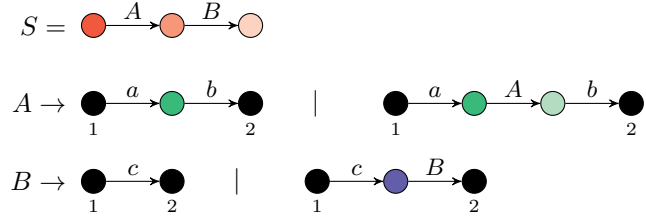
\begin{figure}[!t]
  \centering
  \begin{tikzpicture}
  \begin{scope}[every node/.style={circle, draw, inner sep=0pt, minimum size=9pt, node
    distance=20pt}]
    \node[fill=Red!80] (S1) {};
    \node[right=of S1, fill=Red!50] (S2) {};
    \node[right=of S2, fill=Red!20] (S3) {};

    \node[below=20pt of S1, fill=black, text=white] (A2) {};
    \node[right=of A2, fill=Green!70] (A1) {};
    \node[right=of A1, fill=black, text=white] (A3) {};

    \node[right=.5 of A3, draw=none] (Amid) {$\mid$};

    \node[right=of Amid, fill=black, text=white] (A4) {};
    \node[right=of A4, fill=Green!70] (A5) {};
    \node[right=of A5, fill=Green!30] (A6) {};
    \node[right=of A6, fill=black, text=white] (A7) {};

    \node[below=20pt of A2, fill=black, text=white] (B2) {};
    \node[right=of B2, fill=black, text=white] (B1) {};

    \node[right=.5 of B1, draw=none] (Bmid) {$\mid$};

    \node[right=of Bmid, fill=black, text=white] (B4) {};
    \node[right=of B4, fill=Blue!70] (B5) {};
    \node[right=of B5, fill=black, text=white] (B6) {};
  \end{scope}
  \node[left=2pt of S1] (Srule) {$S=$};
  \node[left=2pt of A2] (Arule) {$A\to$};
  \node[left=2pt of B2] (Brule) {$B\to$};

  \node[below=-1pt of A2] {\scriptsize $1$};
  \node[below=-1pt of A3] {\scriptsize $2$};
  \node[below=-1pt of A4] {\scriptsize $1$};
  \node[below=-1pt of A7] {\scriptsize $2$};

  \node[below=-1pt of B2] {\scriptsize $1$};
  \node[below=-1pt of B1] {\scriptsize $2$};
  \node[below=-1pt of B4] {\scriptsize $1$};
  \node[below=-1pt of B6] {\scriptsize $2$};

  \begin{scope}[every path/.style={-stealth'}]
    \draw (S1) --node[above=-1pt] {\lblsize $A$} (S2);
    \draw (S2) --node[above=-1pt] {\lblsize $B$} (S3);

    \draw (A2) --node[above=-1pt] {\lblsize $a$} (A1);
    \draw (A1) --node[above=-1pt] {\lblsize $b$} (A3);

    \draw (A4) --node[above=-1pt] {\lblsize $a$} (A5);
    \draw (A5) --node[above=-1pt] {\lblsize $A$} (A6);
    \draw (A6) --node[above=-1pt] {\lblsize $b$} (A7);

    \draw (B2) --node[above=-1pt] {\lblsize $c$} (B1);

    \draw (B4) --node[above=-1pt] {\lblsize $c$} (B5);
    \draw (B5) --node[above=-1pt] {\lblsize $B$} (B6);
  \end{scope}
\end{tikzpicture}
  \caption{HR grammar generating graphs representing the string language $\{a^nb^nc^m \mid n,m \geq
  1\}$.}
  \label{fig:string_grammar_ex}
\end{figure}
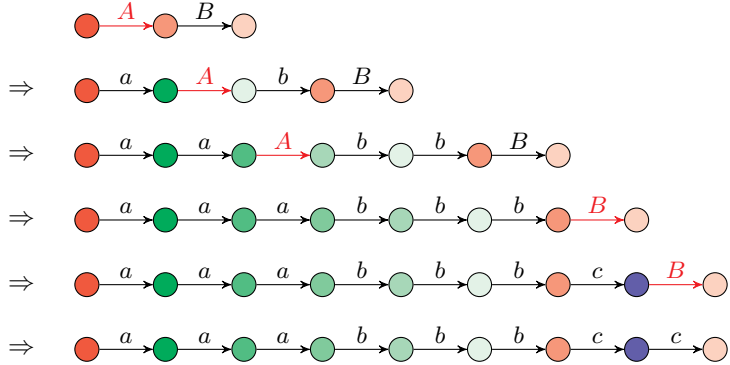
\begin{figure}[!t]
  \centering
  \begin{tikzpicture}
  \begin{scope}[every node/.style={circle, draw, inner sep=0pt, minimum size=9pt, node
    distance=20pt}]
    \node[fill=Red!80] (11) {};
    \node[right=of 11, fill=Red!50] (12) {};
    \node[right=of 12, fill=Red!20] (13) {};

    \node[below=15pt of 11, fill=Red!80] (21) {};
    \node[right=of 21, fill=Green!90] (22) {};
    \node[right=of 22, fill=Green!10] (23) {};
    \node[right=of 23, fill=Red!50] (24) {};
    \node[right=of 24, fill=Red!20] (25) {};
    \node[left=.5 of 21, draw = none] {$\Rightarrow$};

    \node[below=15pt of 21, fill=Red!80] (31) {};
    \node[right=of 31, fill=Green!90] (32) {};
    \node[right=of 32, fill=Green!65] (33) {};
    \node[right=of 33, fill=Green!35] (34) {};
    \node[right=of 34, fill=Green!10] (35) {};
    \node[right=of 35, fill=Red!50] (36) {};
    \node[right=of 36, fill=Red!20] (37) {};
    \node[left=.5 of 31, draw = none] {$\Rightarrow$};

    \node[below=15pt of 31, fill=Red!80] (41) {};
    \node[right=of 41, fill=Green!90] (42) {};
    \node[right=of 42, fill=Green!65] (43) {};
    \node[right=of 43, fill=Green!50] (44) {};
    \node[right=of 44, fill=Green!35] (45) {};
    \node[right=of 45, fill=Green!10] (46) {};
    \node[right=of 46, fill=Red!50  ] (47) {};
    \node[right=of 47, fill=Red!20 ] (48) {};
    \node[left=.5 of 41, draw = none] {$\Rightarrow$};

    \node[below=15pt of 41, fill=Red!80] (51) {};
    \node[right=of 51, fill=Green!90] (52) {};
    \node[right=of 52, fill=Green!65] (53) {};
    \node[right=of 53, fill=Green!50] (54) {};
    \node[right=of 54, fill=Green!35] (55) {};
    \node[right=of 55, fill=Green!10] (56) {};
    \node[right=of 56, fill=Red!50  ] (57) {};
    \node[right=of 57, fill=Blue!70] (58) {};
    \node[right=of 58, fill=Red!20 ] (59) {};
    \node[left=.5 of 51, draw = none] {$\Rightarrow$};

    \node[below=15pt of 51, fill=Red!80] (61) {};
    \node[right=of 61, fill=Green!90] (62) {};
    \node[right=of 62, fill=Green!65] (63) {};
    \node[right=of 63, fill=Green!50] (64) {};
    \node[right=of 64, fill=Green!35] (65) {};
    \node[right=of 65, fill=Green!10] (66) {};
    \node[right=of 66, fill=Red!50  ] (67) {};
    \node[right=of 67, fill=Blue!70] (68) {};
    \node[right=of 68, fill=Red!20 ] (69) {};
    \node[left=.5 of 61, draw = none] {$\Rightarrow$};
  \end{scope}

  \begin{scope}[every path/.style={-stealth'}]
    \draw[Red] (11) --node[above=-1pt] {\lblsize $A$} (12);
    \draw (12) --node[above=-1pt] {\lblsize $B$} (13);

    \draw (21) --node[above=-1pt] {\lblsize $a$} (22);
    \draw[Red] (22) --node[above=-1pt] {\lblsize $A$} (23);
    \draw (23) --node[above=-1pt] {\lblsize $b$} (24);
    \draw (24) --node[above=-1pt] {\lblsize $B$} (25);

    \draw (31) --node[above=-1pt] {\lblsize $a$} (32);
    \draw (32) --node[above=-1pt] {\lblsize $a$} (33);
    \draw[Red] (33) --node[above=-1pt] {\lblsize $A$} (34);
    \draw (34) --node[above=-1pt] {\lblsize $b$} (35);
    \draw (35) --node[above=-1pt] {\lblsize $b$} (36);
    \draw (36) --node[above=-1pt] {\lblsize $B$} (37);

    \draw (41) --node[above=-1pt] {\lblsize $a$} (42);
    \draw (42) --node[above=-1pt] {\lblsize $a$} (43);
    \draw (43) --node[above=-1pt] {\lblsize $a$} (44);
    \draw (44) --node[above=-1pt] {\lblsize $b$} (45);
    \draw (45) --node[above=-1pt] {\lblsize $b$} (46);
    \draw (46) --node[above=-1pt] {\lblsize $b$} (47);
    \draw[Red] (47) --node[above=-1pt] {\lblsize $B$} (48);

    \draw (51) --node[above=-1pt] {\lblsize $a$} (52);
    \draw (52) --node[above=-1pt] {\lblsize $a$} (53);
    \draw (53) --node[above=-1pt] {\lblsize $a$} (54);
    \draw (54) --node[above=-1pt] {\lblsize $b$} (55);
    \draw (55) --node[above=-1pt] {\lblsize $b$} (56);
    \draw (56) --node[above=-1pt] {\lblsize $b$} (57);
    \draw (57) --node[above=-1pt] {\lblsize $c$} (58);
    \draw[Red] (58) --node[above=-1pt] {\lblsize $B$} (59);

    \draw (61) --node[above=-1pt] {\lblsize $a$} (62);
    \draw (62) --node[above=-1pt] {\lblsize $a$} (63);
    \draw (63) --node[above=-1pt] {\lblsize $a$} (64);
    \draw (64) --node[above=-1pt] {\lblsize $b$} (65);
    \draw (65) --node[above=-1pt] {\lblsize $b$} (66);
    \draw (66) --node[above=-1pt] {\lblsize $b$} (67);
    \draw (67) --node[above=-1pt] {\lblsize $c$} (68);
    \draw (68) --node[above=-1pt] {\lblsize $c$} (69);
  \end{scope}
\end{tikzpicture}
  \caption{One possible derivation of the HR grammar given in Figure~\ref{fig:string_grammar_ex}.
    The resulting graph represents the string $aaabbbcc$.}
  \label{fig:string_derivation_ex}
\end{figure}
\subsection{Uniquely identifying vertices}\label{sss:unique}
As mentioned in the previous section, $L(G)$ for an SL HR grammar $G$ is a potentially infinite set
containing only isomorphic copies of one graph. Since we intend to traverse the vertices of the graph
represented by $G$ it is necessary to specify some way to uniquely identify one such vertex. This
consequently requires fixing one graph within $L(G)$ to be the value of $G$. In this section we
introduce methods to achieve this. Specifically we
\begin{enumerate}
  \item define vertices to be numbered with identifiers instead of the arbitrary symbolic sets used
    up to here,
  \item specify exactly which vertex replacement is used during a derivation step, and
  \item fix a specific derivation of $G$ by defining an order on the nonterminal edges.
\end{enumerate}
Of these steps the latter two are necessary to fix one graph within $L(G)$. The first one 
simplifies the definition of the vertex replacement and provides a sensible way of representing
graphs in an implementation. 

We begin with numbering the vertices. We require $V = [n]$ for some $n > 0$ for
every graph. Furthermore a graph with $k$ external vertices is required to have $\ext = (n-k+1)
\cdots n$, i.e., the last $k$ vertices of a graph are its external vertices. Since the vertex
numbers thus also imply their position within $\ext$ we now no longer add subscripts in the
drawings of graphs indicating this order. These restrictions allow us to specify a vertex renaming
during an edge replacement by simple arithmetic. Let $g, h$ be graphs with $n$ and $m$ vertices,
respectively. Let $e$ an edge in $g$ with $\rank(e) = \rank(h)$. Then we define $g[e/h]$ as above,
but specify $\nr$ as
\[
  \nr(x) = \begin{cases}
    \att_g(e)[i] & \text{ if $x$ is the $i$-th external node of $h$} \\
    n + x & \text{ otherwise}
  \end{cases}.
\]
That is, the identification of external vertices with the vertices attached to $e$ is the same as
before. The internal vertices of $h$ on the other hand are defined to count consecutively starting from
$n$, which is the largest vertex number in $g$. Note that, if both $g$ and $h$ have external
vertices the graph $g[e/h]$ is not guaranteed to conform to the above restrictions on graphs. In
particular it does not necessarily have its external vertices at the end. However, if $g$ is of rank
$0$ then $g[e/h]$ is also of rank $0$ and has the vertices $\{1,\ldots,n+(m-k)\}$.

The changes above already have the effect that $L(G)$ for an SL HR grammar $G$ is now always finite,
since the vertex renaming during a derivation step is fixed and there is only a finite number of
possible derivations. To now fix one particular graph $\val(G)$ from $L(G)$ as the graph represented
by $G$ we fix the order in which nonterminal edges are derived. We do this by defining a derivation
tree, which has a node for every nonterminal edge appearing in a full derivation of $G$. The
derivation order is then fixed by the depth-first left to right traversal of this tree. We define an
order on the nonterminal edges of a graph. Let $\leq_N$ be some (arbitrary but fixed) order on the
nonterminals of $G$. For a graph $g$ let $\NTSet_g$ be the set of nonterminal edges in $g$.
We define the derivation order $\leq_{\dt}$ on $\NTSet_g$ in the following way: let $e_1,e_2$ be two edges
from $\NTSet_g$. Then $e_1 \leq_{\dt} e_2$ if
\begin{itemize}
    \item $\att_g(e_i) <_{\text{lex}} \att_g(e_j)$ (here $<_{\text{lex}}$ is the lexicographical
        order), or
    \item $\att_g(e_i) = \att_g(e_j)$ and $\lbl_g(e_i) \leq_N \lbl_g(e_j)$.
\end{itemize}

Note that $e_1 =_{\dt} e_2$ is possible if $\att(e_1) = \att(e_2)$ and $\lbl(e_1) = \lbl(e_2)$. In
these cases it is irrelevant for the final result in which order the edges are derived, so they can
be ordered arbitrarily.

We now define $\val(A)$ for every nonterminal $A$ and define $\val(G)$ as $\val(S)$.  Simultaneously
we define the \emph{derivation-tree} $\dt(A)$ of $A$ and the derivation-tree of $G$ as $\dt(S)$.
Note that $\dt(A)$ is a ranked ordered tree with nodes labeled by rules of $G$; the number of children of a
node is equal to the number of nonterminal edges in the right-hand side of the rule in the label.
Let $g=\rhs(A)$.  If $\NTSet_g=\emptyset$ then $\val(A)=g$ and $\dt(A)=$ a single node labeled $A
\to g$. Otherwise let $\NTSet_g = \{e_1,\ldots,e_n\}$ such that $e_1
\leq_{\dt} e_2 \leq_{\dt} \cdots \leq_{\dt} e_n$ and for every $i \in [n]$ let $e_i$ be labeled by
$A_i$ and let $s_i = \dt(A_i)$ be trees with disjoint sets of nodes $V_i$. Then we define
\begin{itemize}
  \item 
$\val(A)=g[e_1/\val(A_1)]\dots [e_n/\val(A_n)]$, and
  \item
 $\dt(A)$ as a tree with nodes $\{v\}
\cup \bigcup_{i \in \rank(A)} V_i$, where $v$ is a new node labeled $A\to g$ and the $i$-th child of
$v$ is the root-node of $s_i$.
\end{itemize}

Thus $\val(A)$ is obtained from $\rhs(A)$ by replacing
all its nonterminal edges in the order specified by $\leq_{\dt}$.

Figures~\ref{fig:annotatedDT} and~\ref{fig:full_graph} show a full example of these notions. This
derivation tree and graph will serve as a running example for the methods introduced below. On the
left of Figure~\ref{fig:annotatedDT} is an SL HR grammar. Its derivation tree $\dt(G)$ as defined
above is given to the right of the grammar, each node of the tree contains its Dewey-address. For an
example of the order $\leq_{\dt}$ consider the graph $\rhs(A)$: it has two nonterminal edges $e_1$ and
$e_2$, with $\att(e_1) = 1\cdot 4$, $\att(e_2) = 3\cdot 2\cdot 1$, $\lbl(e_1) = B$, and $\lbl(e_2) =
C$. Since $\att(e_1) <_{\text{lex}} \att(e_2)$ the derivation order is $e_1 \leq_{\dt} e_2$ and
consequently the first child of node $u_2$ in the derivation tree is $B \to \rhs(B)$ followed by $C
\to \rhs(C)$. Figure~\ref{fig:full_graph} shows the graph $\val(G)$. All the nodes in the graph are
colored in the same colors as the internal nodes in the derivation tree given in
Figure~\ref{fig:annotatedDT} to show which specific derivation they originate from. We invite the
reader to reconstruct the full step by step derivation of $\val(G)$ from the given grammar and
derivation tree.

Let $A \in N$ and let 
$u$ be a node in $\dt(A)$. By convention we will use $A_u \to g_u$ to refer to the label of $u$.
For an example consider again the derivation tree in
Figure~\ref{fig:annotatedDT}. For the node $u_4 = 1.2$ the graph $g_{u_4}$ is $\rhs(C)$ and the
nonterminal $A_{u_4} = C$.
\begin{figure}[t!]
  \centering
  \scalebox{1}{\begin{tikzpicture}[remember picture]
  \node[] (S) {$S =$};
  \begin{scope}[every node/.style={circle, draw, minimum size=9pt, inner sep=2pt, outer sep=0pt,
    node distance=20pt}]
    \node[right=5pt of S] (S1) {\idsize 1};
    \node[right=of S1] (S2) {\idsize 2};
  \end{scope}
  \begin{scope}[every path/.style={-stealth'}]
    \draw[bend left] (S2) to node[above] {\lblsize $A$} (S1);
    \draw[bend left] (S1) to node[above] {\lblsize $A$} (S2);
  \end{scope}

  \node[below=50pt of S] (A) {$A\to$};
  \begin{scope}[every node/.style={circle, draw, minimum size=9pt, inner sep=2pt, outer sep=0pt,
    node distance=20pt}]
    \node[right=5pt of A, fill=black, text=white] (A3) {\idsize 3};
    \node[right=of A3, rectangle, inner sep=2pt] (AC) {\lblsize $C$};
    \node[below=of AC] (A2) {\idsize 2};
    \node[above=of AC] (A1) {\idsize 1};
    \node[left=of A1, fill=black, text=white] (A4) {\idsize 4};
  \end{scope}
  \begin{scope}[every path/.style={-stealth'}]
    \draw (A1) -- node[above] {\lblsize $B$} (A4);
  \end{scope}
  \draw (AC) -- node[above, pos=.3] {\scriptsize $1$} (A3);
  \draw (AC) -- node[right, pos=.3] {\scriptsize $2$} (A2);
  \draw (AC) -- node[right, pos=.3] {\scriptsize $3$} (A1);

  \node[below=50pt of A] (B) {$B \to$};
  \begin{scope}[every node/.style={circle, draw, minimum size=9pt, inner sep=2pt, outer sep=0pt,
    node distance=20pt}]
    \node[right=5pt of B, fill=black, text=white] (B2) {\idsize 2};
    \node[right=of B2] (B1) {\idsize 1};
    \node[below=of B1, fill=black, text=white] (B3) {\idsize 3};
  \end{scope}
  \begin{scope}[every path/.style={-stealth'}]
    \draw (B2) -- node[above] {\lblsize $a$} (B1);
    \draw (B1) -- node[right] {\lblsize $a$} (B3);
  \end{scope}

  \node[below=80pt of B] (C) {$C \to$};
  \begin{scope}[every node/.style={circle, draw, minimum size=9pt, inner sep=2pt, outer sep=0pt,
    node distance=20pt}]
    \node[right=5pt of C, fill=black, text=white] (C4) {\idsize 4};
    \node[right=of C4] (C1) {\idsize 1};
    \node[above=of C1, fill=black, text=white] (C2) {\idsize 2};
    \node[below=of C1, fill=black, text=white] (C3) {\idsize 3};
  \end{scope}
  \begin{scope}[every path/.style={-stealth'}]
    \draw (C2) -- node[right] {\lblsize $a$} (C1);
    \draw (C1) -- node[above] {\lblsize $a$} (C4);
    \draw (C1) -- node[right] {\lblsize $D$} (C3);
  \end{scope}

  \node[below=50pt of C] (D) {$D \to$};
  \begin{scope}[every node/.style={circle, draw, minimum size=9pt, inner sep=2pt, outer sep=0pt,
    node distance=20pt}]
    \node[right=5pt of D, fill=black, text=white] (D2) {\idsize 2};
    \node[right=of D2] (D1) {\idsize 1};
    \node[right=of D1, fill=black, text=white] (D3) {\idsize 3};
  \end{scope}
  \begin{scope}[every path/.style={-stealth'}]
    \draw (D2) -- node[above=1pt] {\lblsize $b$} (D1);
    \draw (D1) -- node[above=1pt] {\lblsize $b$} (D3);
    \draw[bend left=45] (D3) to node[below=1pt] {\lblsize $b$} (D2);
  \end{scope}

  \node[right=9 of S, dashed, circle, draw, inner sep=0pt, outer sep=0pt] (u1)
  {
    \begin{tikzpicture}[solid, remember picture]
      \begin{scope}[every node/.style={circle, draw, minimum size=9pt, inner sep=2pt, outer sep=0pt,
        node distance=20pt}]
        \node[] (u1S1) {\idsize 1};
        \node[right=of u1S1] (u1S2) {\idsize 2};
      \end{scope}
      \begin{scope}[every path/.style={-stealth'}]
        \draw[bend left] (u1S2) to node[above] {\lblsize $A$} (u1S1);
        \draw[bend left] (u1S1) to node[above] {\lblsize $A$} (u1S2);
      \end{scope}
      \node[below left=7pt and -0pt of u1S2] {$u_1 = \varepsilon$};
    \end{tikzpicture}
  };

  \node[dashed, circle, draw, inner sep=0pt, outer sep=0pt, below left=20pt and 20pt of u1] (u2)
  {
    \begin{tikzpicture}[solid, remember picture]
      \node[] (u2A) {$A\to$};
      \begin{scope}[every node/.style={circle, draw, minimum size=9pt, inner sep=2pt, outer sep=0pt,
        node distance=20pt}]
        \node[right=5pt of u2A, fill=black, text=white] (u2A3) {\idsize 3};
        \node[right=of u2A3, rectangle, inner sep=2pt] (u2AC) {\lblsize $C$};
        \node[below=of u2AC, Red] (u2A2) {\idsize 2};
        \node[above=of u2AC, Red] (u2A1) {\idsize 1};
        \node[left=of u2A1, fill=black, text=white] (u2A4) {\idsize 4};
      \end{scope}
      \begin{scope}[every path/.style={-stealth'}]
        \draw (u2A1) -- node[above] {\lblsize $B$} (u2A4);
      \end{scope}
      \draw (u2AC) -- node[above, pos=.3] {\scriptsize $1$} (u2A3);
      \draw (u2AC) -- node[right, pos=.3] {\scriptsize $2$} (u2A2);
      \draw (u2AC) -- node[right, pos=.3] {\scriptsize $3$} (u2A1);
    \end{tikzpicture}
  };

  \node[dashed, circle, draw, inner sep=0pt, outer sep=0pt, below left=20pt and 5pt of u2] (u3)
  {
    \begin{tikzpicture}[solid, remember picture]
      \node[] (u3B) {$B \to$};
      \begin{scope}[every node/.style={circle, draw, minimum size=9pt, inner sep=2pt, outer sep=0pt,
        node distance=20pt}]
        \node[right=5pt of u3B, fill=black, text=white] (u3B2) {\idsize 2};
        \node[right=of u3B2, Orange] (u3B1) {\idsize 1};
        \node[below=of u3B1, fill=black, text=white] (u3B3) {\idsize 3};
      \end{scope}
      \begin{scope}[every path/.style={-stealth'}]
        \draw (u3B2) -- node[above] {\lblsize $a$} (u3B1);
        \draw (u3B1) -- node[right] {\lblsize $a$} (u3B3);
      \end{scope}
    \end{tikzpicture}
  };

  \node[dashed, circle, draw, inner sep=0pt, outer sep=0pt, below right=20pt and 5pt of u2] (u4)
  {
    \begin{tikzpicture}[solid, remember picture]
      \node[] (u4C) {$C \to$};
      \begin{scope}[every node/.style={circle, draw, minimum size=9pt, inner sep=2pt, outer sep=0pt,
        node distance=20pt}]
        \node[right=5pt of u4C, fill=black, text=white] (u4C4) {\idsize 4};
        \node[right=of u4C4, blue] (u4C1) {\idsize 1};
        \node[above=of u4C1, fill=black, text=white] (u4C2) {\idsize 2};
        \node[below=of u4C1, fill=black, text=white] (u4C3) {\idsize 3};
      \end{scope}
      \begin{scope}[every path/.style={-stealth'}]
        \draw (u4C2) -- node[right] {\lblsize $a$} (u4C1);
        \draw (u4C1) -- node[above] {\lblsize $a$} (u4C4);
        \draw (u4C1) -- node[right] {\lblsize $D$} (u4C3);
      \end{scope}
    \end{tikzpicture}
  };

  \node[dashed, circle, draw, inner sep=0pt, outer sep=0pt, below=20pt of u4] (u5)
  {
    \begin{tikzpicture}[solid, remember picture]
      \node[] (u5D) {$D \to$};
      \begin{scope}[every node/.style={circle, draw, minimum size=9pt, inner sep=2pt, outer sep=0pt,
        node distance=20pt}]
        \node[right=5pt of u5D, fill=black, text=white] (u5D2) {\idsize 2};
        \node[right=of u5D2, olive] (u5D1) {\idsize 1};
        \node[right=of u5D1, fill=black, text=white] (u5D3) {\idsize 3};
      \end{scope}
      \begin{scope}[every path/.style={-stealth'}]
        \draw (u5D2) -- node[above=1pt] {\lblsize $b$} (u5D1);
        \draw (u5D1) -- node[above=1pt] {\lblsize $b$} (u5D3);
        \draw[bend left=45] (u5D3) to node[below=1pt] {\lblsize $b$} (u5D2);
      \end{scope}
    \end{tikzpicture}
  };

  \node[below right=20pt and 20pt of u1, shape=isosceles triangle, shape border rotate=90, draw,
  align=center, anchor=north, Green] (Atree) {$\mathsf{dt}(A)$};

  \draw (u1) -- (u2);
  \draw (u2) -- (u3);
  \draw (u2) -- (u4);
  \draw (u4) -- (u5);
  \draw (u1) -- (Atree.north);

  \node[below right=1em and -1em of u2.north west] {$u_2 = 1$};
  \node[right=2pt of u3.south west] {$u_3 = 1.1$};
  \node[right=2pt of u4.south west] {$u_4 = 1.2$};
  \node[right=2pt of u5.south west] {$u_5 = 1.2.1$};

  \begin{scope}[overlay, every path/.style={dashed, Black!20}]
    \draw (u2A3) -- (u1S1);
    \draw (u2A4) -- (u1S2);

    \draw (u3B2) -- (u2A1);
    \draw (u3B3) -- (u2A4);

    \draw (u4C2) -- (u2A3);
    \draw (u4C3) -- (u2A2);
    \draw (u4C4) -- (u2A1);

    \draw (u5D2) -- (u4C1);
    \draw (u5D3) -- (u4C3);
  \end{scope}


  \node[below right=1 and 1 of u4C1, align=left, inner sep=0pt, outer sep=1pt, anchor=north west] (id1) {$\begin{aligned}
      \id(u_4, 1) &= 6.
  \end{aligned}$};
  \begin{scope}[overlay, every path/.style={loosely dashed, very thick, -stealth'}]
      \draw (u4C1) -- (id1.168);
  \end{scope}
  
\end{tikzpicture}}
  \caption{An SL HR grammar (left), and it's derivation tree (right). Light gray lines indicate for
    every external vertex, the vertex in the parent which they are merged with. 
}
  \label{fig:annotatedDT}
\end{figure}
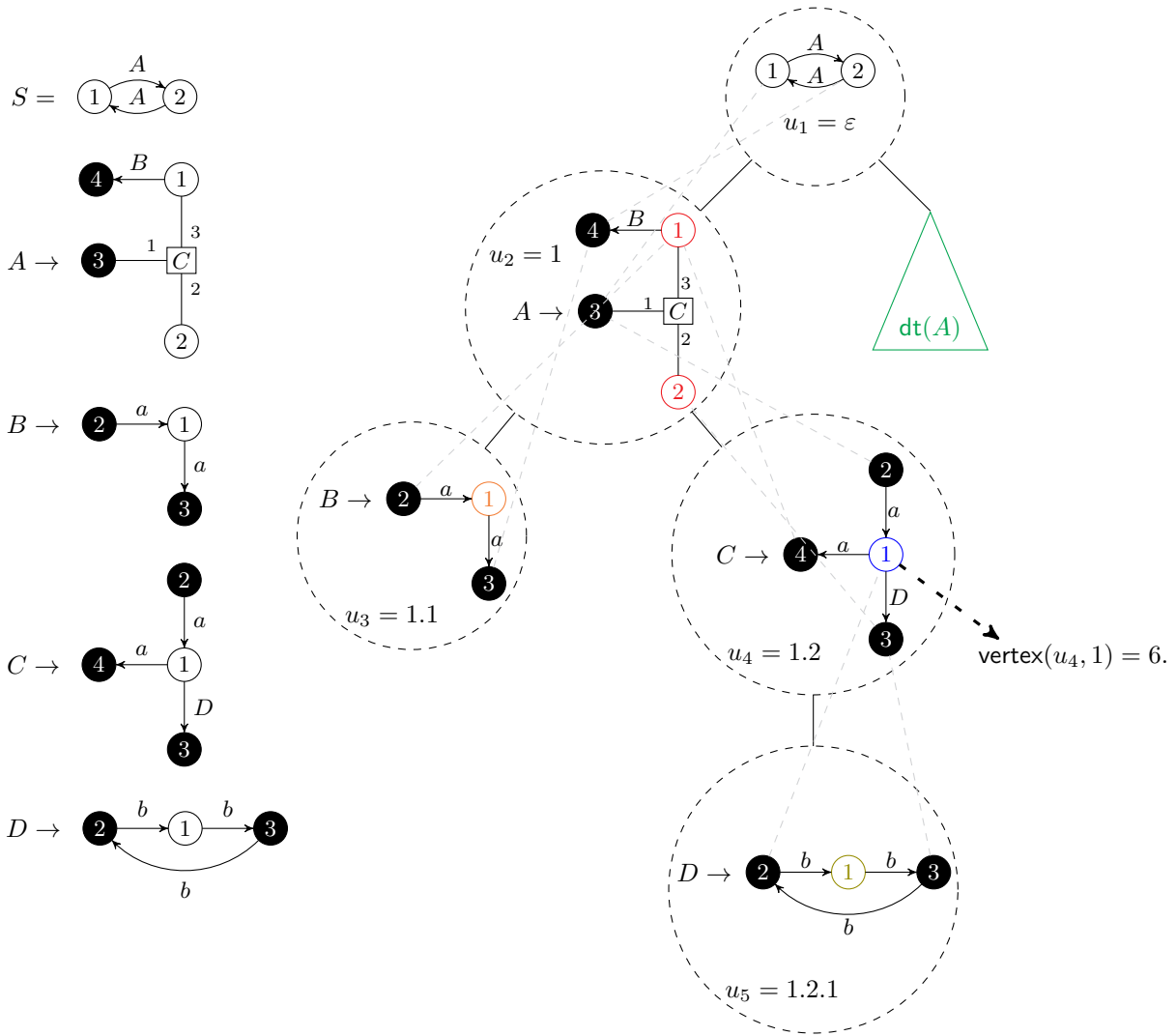
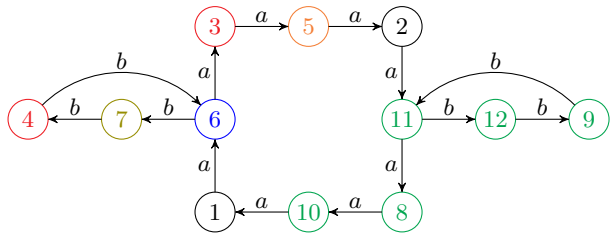
\begin{figure}[!t]
  \centering
  \begin{tikzpicture}
  \begin{scope}[every node/.style={circle, draw, inner sep=1pt, outer sep=0pt, node distance=20pt,
    minimum size=15pt}]
    \node[Green] (10) {\idsize 10};
    \node[left=of 10] (1) {\idsize 1};
    \node[above=of 1, blue] (6) {\idsize 6};
    \node[left=of 6, olive] (7) {\idsize 7};
    \node[left=of 7, Red] (4) {\idsize 4};
    \node[above=of 6, Red] (3) {\idsize 3};
    \node[right=of 3, Orange] (5) {\idsize 5};
    \node[right=of 5] (2) {\idsize 2};
    \node[below=of 2, Green] (11) {\idsize 11};
    \node[right=of 11, Green] (12) {\idsize 12};
    \node[right=of 12, Green] (9) {\idsize 9};
    \node[below=of 11, Green] (8) {\idsize 8};
  \end{scope}

  \begin{scope}[every path/.style={-stealth'}]
    \draw (10) -- node[above=-2pt] {\lblsize $a$} (1);
    \draw (1) -- node[left=-2pt] {\lblsize $a$} (6);
    \draw (6) -- node[above=-2pt] {\lblsize $b$} (7);
    \draw (7) -- node[above=-2pt] {\lblsize $b$} (4);
    \draw[bend left=45] (4) to node[above=-2pt] {\lblsize $b$} (6);
    \draw (6) -- node[left=-2pt] {\lblsize $a$} (3);
    \draw (3) -- node[above=-2pt] {\lblsize $a$} (5);
    \draw (5) -- node[above=-2pt] {\lblsize $a$} (2);
    \draw (2) -- node[left=-2pt] {\lblsize $a$} (11);
    \draw (11) -- node[above=-2pt] {\lblsize $b$} (12);
    \draw (12) -- node[above=-2pt] {\lblsize $b$} (9);
    \draw[bend right=45] (9) to node[above=-2pt] {\lblsize $b$} (11);
    \draw (11) -- node[left=-2pt] {\lblsize $a$} (8);
    \draw (8) -- node[above=-2pt] {\lblsize $a$} (10);
  \end{scope}
\end{tikzpicture}
  \caption{The graph represented by the grammar given on the left of Figure~\ref{fig:annotatedDT}.
    Vertices are colored to show their origin within the derivation tree (right of
  Figure~\ref{fig:annotatedDT}).}
  \label{fig:full_graph}
\end{figure}

With these conventions regarding vertex numbers and derivation order, we make the following
observation: for a graph $g[e/h]$ we can identify whether a vertex originates from $g$ or $h$ from
its number. Vertices originating from $g$ are numbered $1$ to $|V_g|$, whereas vertices from $h$ are
numbered starting with $|V_g|+1$. Thus, given only $g$ and $h$ we know the vertex numbers in
$g[e/h]$ without the need to explicitly compute the edge replacement. Moreover, this still applies
when there are other nonterminal edges in $g$ that are derived before $e$. Assume there are
nonterminal edges $e_1,\ldots,e_n$ which are derived before $e$. For an edge $e_i$ ($1 \leq i \leq
n$) let $V_i$ be the set of internal vertices in $\val(A_i)$ where $A_i$ is the label of $e_i$. Then
the first available vertex number in the replacement of $e$ is $|V_g| + |V_1| + \cdots + |V_n|$.
Doing this recursively along the path of the derivation tree allows us, given a node $u$ in
$\dt(G)$ and a vertex $x$ in $g_u$, to compute the number of the vertex represented by $x$ in
$\val(G)$.
Formally we let $u$ be a node in the derivation tree $\dt(A)$ for some nonterminal $A$ and define a
mapping between tuples $(u, x)$ and the internal vertices of $\val(A)$. For a nonterminal $B$ we let
$\nodes(B)$ be the number of internal vertices in $\val(B)$. We recursively define $\firstID(u)$ as
$\firstID(u) = 0$ if $u = \varepsilon$ and if $u = v.i$ for some $i \in \N$ and $\NTSet_{g_u} =
\{e_1,\ldots,e_n\}$ with $e_1 \leq_{\dt} \cdots e_n\sib(E_{g_{u}}^{\nt}) = (e_1,\ldots,e_n)$ then we
define the offset
\[\firstID(u) = \firstID(v) + \sum_{j < i} \nodes(\lbl(e_j)) + |V_{g_{s,v}}| -
|\ext_{g_{v}}|.\]
Now we define \[\id(u,x) = \firstID(u) + x\] for any internal vertex $x$ in $g_{u}$, which
represents the vertex within $\val(A)$.
Figure~\ref{fig:annotatedDT} shows an example of a derivation tree with the root-node labeled by $S$
and its left child labeled by $A \to \rhs(A)$. The Figure also includes a specific example for
the $\id$-mapping (vertex 1 in context $u_4$). Its full computation is as follows:
\begin{align*}
  \id(u_4, 1) &=\firstID(u_4) + 1 \\
              &=\firstID(u_2) + |V_{\rhs(A)}| - |\ext_{\rhs(A)}| + \nodes(B) + 1\\
              &= \firstID(u_1) + |V_S| - |\ext_S| + |V_{\rhs(A)}| - |\ext_{\rhs(A)}| + \nodes(B)
              + 1 \\
              &= 0 + 2 - 0 + 4 - 2 + 1 +1 = 6. 
\end{align*}
Thus the vertex $1$ in $\rhs(C)$ represents the vertex $6$ in $\val(G)$. Do note that the same node
$u$ given as a Dewey-address may refer to different nodes if evaluated on different trees, e.g. for
two nonterminals $A$ and $B$ the address $u$ will refer to different nodes in $\dt(A)$ and $\dt(B)$,
assuming it is valid in both. The value $\firstID(u,x)$ can therefore change when referring to
different trees. It should always be clear from context on which tree this mapping is evaluated. In
our final algorithm it is always $\dt(G)$.

\section{Traversal of graphs represented by SL HR grammars}\label{sse:traversing}
Let us clarify what we mean by traversing a graph. Let $g$ be a graph, $x \in V_g$ a vertex, and $e
\in E_g$ a hyperedge attached to $x$. Let $k$ such that $\att_g(e)[k] = x$, and $l \in [\rank(e)]$.
The operation we wish to support is $\outEdge_{k,l}(x, \sigma) = y$ where $\sigma = \lbl_g(e)$ and
$y = \att_g(e)[l]$. Figure~\ref{fig:traverseIllus} shows a schematic example of this operation. We
call this operation a \emph{traversal step} and refer to the vertices $x$ and $y$ as
\emph{$\sigma$-$k$-$l$-neighbors}. We therefore are not looking for random access, but instead,
given a node, we wish to traverse to one of its neighbors. Note that in a unique-labeled graph every
vertex $x$ only has at most one $\sigma$-$k$-$l$-neighbor for any combination of $\sigma$, $k$, and
$l$. Further note that for ``regular'' directed edges only one of the numbers $k$ and $l$ is really
needed, as there are only two valid combinations: a $1$-$2$-neighbor is a successor of the current
node, and a $2$-$1$-neighbor a predecessor. The more general notation used here is to allow for any
traversal along hyperedges of higher rank.
\begin{figure}[!t]
  \centering
  \begin{tikzpicture}
  \begin{scope}[every node/.style={rectangle, draw, minimum size=9pt, inner sep=2pt, outer sep=0pt,
    node distance=20pt}]
    \node (e) {\lblsize $\sigma$};
  \end{scope}
  \node[coordinate, above left=of e] (x1) {};
  \node[coordinate, below right=of e] (x2) {};

  \begin{scope}[every node/.style={circle, draw, minimum size=12pt, inner sep=2pt, outer sep=0pt,
    node distance=20pt}]
    \path (x1) .. controls +(1.5,1.5) and +(1.5,1.5) .. 
      node[pos=0] (u) {\idsize $x$}
      node[pos=.3] (1) {}
      node[pos=.5,draw=none] (d) {\dots}
      node[pos=.7] (2) {}
      node[pos=1] (v) {\idsize $y$}
      (x2);
  \end{scope}
  \draw (e) -- node[above right=-4pt,pos=.3] {\scriptsize $k$} (u);
  \draw (e) -- node[above right=-4pt,pos=.3] {\scriptsize $l$} (v);
  \draw (e) -- (1);
  \draw (e) -- (2);

  \draw[very thick, dotted, Blue, -stealth'] (u) .. controls +(-.5,-.5) and +(-1,-.5) .. node[sloped, below] {$\outEdge_{k,l}(x, \sigma)$}
  (v);
\end{tikzpicture}
  \caption{Illustration of the $\outEdge$-mapping}
  \label{fig:traverseIllus}
\end{figure}

For a grammar $G$ we wish to traverse $\val(G)$ in this fashion, but without explicitly computing
$\val(G)$. We explain how using an example and intuitively define the necessary notions along the
way. Figure~\ref{fig:traversal_dt} contains the graph $\val(G)$ on the right and one branch of the
derivation tree $\dt(G)$ on the left. Consider first $\val(G)$: starting from vertex $1$, we intend
to follow the $a$-edge once and then the $b$-edge twice. The appropriate traversal steps are
indicated by annotations. We next explain for each of these traversal steps, how they can be
recovered in $\dt(G)$. In a first step we need to find the vertex that represents vertex $1$ in
$\val(G)$, which is found immediately in the start graph: $\id(u_1,1) = 1$. We next extend the
$\outEdge$-mapping to rules of a grammar. For a nonterminal $A$ and a vertex $x$ in $\rhs(A)$ the
mapping $\outEdge_{k,l}(A,x,\sigma)$ intuitively points us to the vertex $y$ within a node label of
the derivation tree $\dt(A)$ such that the vertices represented by $x$ and $y$ in $\val(A)$ are
$\sigma$-$k$-$l$-neighbors. In our example the mapping $\outEdge_{1,2}(S,1,a)$ points towards the
vertex $1$ in the node $u_4$ of $\dt(S)$. Note that the arrow visualizing this mapping is drawn in
such a way, as to also visualize how this vertex is found: from vertex $1$ in $u_1$ we ``enter'' the
$A$-rule and then the $C$-rule from their first external vertices each until the edge labeled $a$ is
found. Thus, the mapping $\outEdge_{1,2}(S,1,a) = (u_4, 1)$, i.e., it returns a vertex specified by
a tuple consisting of a node in $\dt(S)$ and a vertex in $g_{u_4}$. To conclude the traversal step
we compute $\id(u_4, 1) = 6$, which is the vertex number in $\val(G)$.

The next step starts at vertex $(u_4, 1)$ and we intend to follow the $b$-edge. We begin in the same
way as in the previous step: ``entering'' the $D$-rule from its first external vertex and following
the edge labeled $b$ we arrive at vertex $(u_5, 1)$. However, this is not quite how the
$\outEdge$-mapping should be defined in this case: There may be multiple nodes in the derivation
tree labeled $C \to \rhs(C)$. And since the grammar is straight-line all of these nodes have the
same children, and thus should have the same result for $\outEdge_{1,2}(C,1,b)$. Therefore
$\outEdge_{1,2}(C,1,b)$ should map to a Dewey address relative to the derivation tree $\dt(C)$.
Since the $D$-node is the first (and only) child of the $C$-node its Dewey address within $\dt(C)$
is $1$ and thus $\outEdge_{1,2}(C,1,b) = (1, 1)$. As we know that we started the traversal step in
vertex $(u_4,1)$ we can recover the node relative to $\dt(S)$ in this way, by concatenating the
relative address to $u_4$ and the position after two traversal steps is the vertex $(u_4.1, 1)$ with
$\id(u_4.1, 1) = \id(u_5, 1) = 7$.

Finally, we wish to follow one more $b$-edge starting at $(u_5,1)$. Finding the neighboring vertex
is easy in this case, since the $b$-edge is found in the same rule. Therefore $\outEdge_{1,2}(D,1,b)
= (\varepsilon, 3)$, i.e., the vertex $(u_5, 3)$. However, this vertex is external. As an external
vertex does not directly represent a vertex of $\val(G)$ the final step is to find the internal
vertex it is merged with. Fortunately, this can always be found in an ancestor within the derivation
tree. As we can see in Figure~\ref{fig:traversal_dt} (green dashed arrow) the vertex is merged first
with $(u_4, 3)$ and then with $(u_2, 2)$, which is internal. Therefore we define $\internal(u_5,3) =
(u_2,2)$ and in general $\internal(u, x) = (v, y)$ where $u$ is a node of the derivation tree, $x$
is a vertex in $g_u$, $v$ is an ancestor of $u$ and $y$ the internal vertex in $g_v$ that is merged
with $x$ during derivation. If $x$ already is an internal vertex, we just let $\internal(u,x) =
(u, x)$.

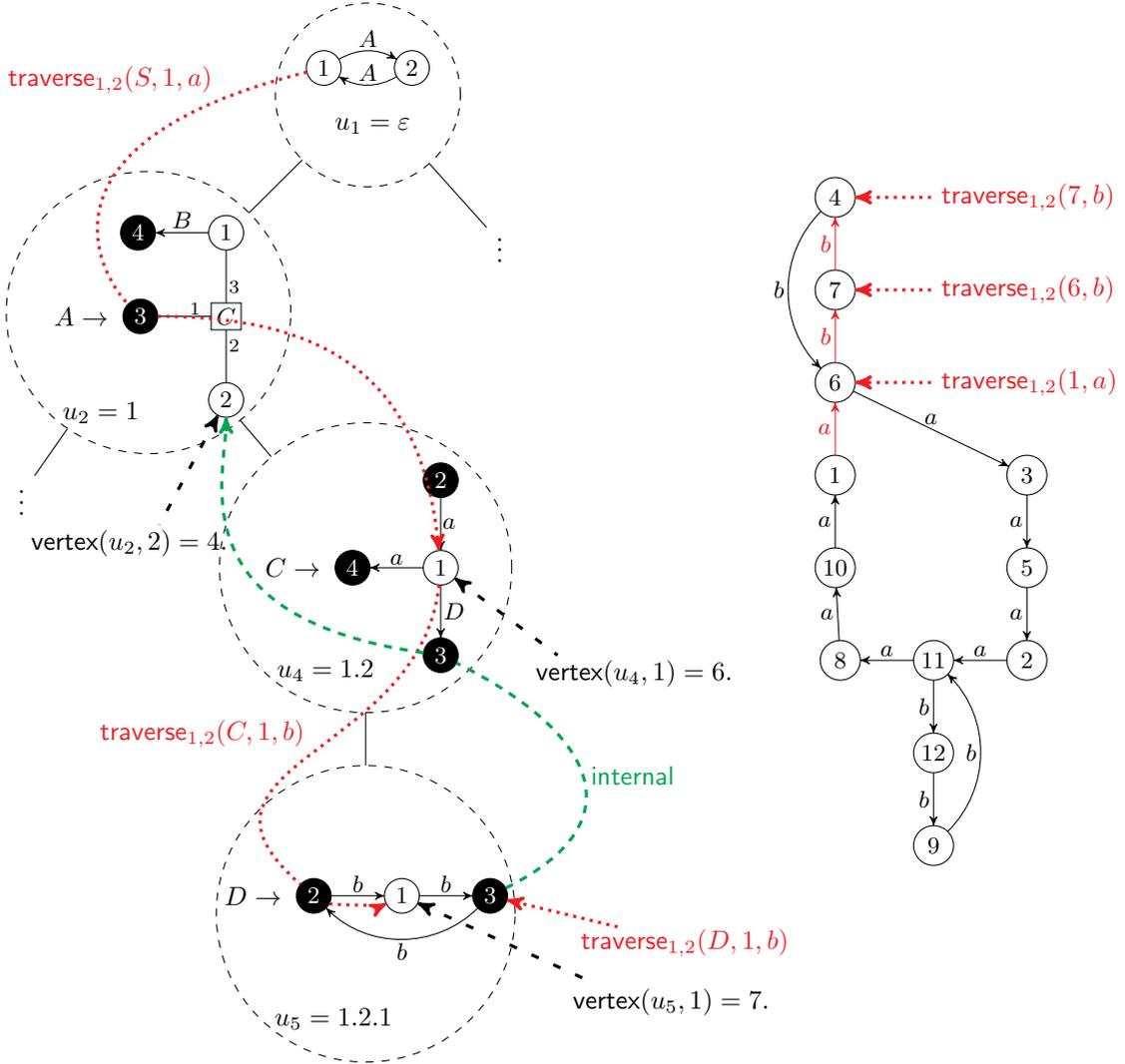
\begin{figure}[!t]
  \centering
  \begin{tikzpicture}[remember picture]
  \node[dashed, circle, draw, inner sep=0pt, outer sep=0pt] (u1)
  {
    \begin{tikzpicture}[solid, remember picture]
      \begin{scope}[every node/.style={circle, draw, minimum size=9pt, inner sep=2pt, outer sep=0pt,
        node distance=20pt}]
        \node[] (u1S1) {\idsize 1};
        \node[right=of u1S1] (u1S2) {\idsize 2};
      \end{scope}
      \begin{scope}[every path/.style={-stealth'}]
        \draw[bend left] (u1S2) to node[above] {\lblsize $A$} (u1S1);
        \draw[bend left] (u1S1) to node[above] {\lblsize $A$} (u1S2);
      \end{scope}
      \node[below left=7pt and -0pt of u1S2] {$u_1 = \varepsilon$};
    \end{tikzpicture}
  };

  \node[dashed, circle, draw, inner sep=0pt, outer sep=0pt, below left=20pt and 20pt of u1] (u2)
  {
    \begin{tikzpicture}[solid, remember picture]
      \node[] (u2A) {$A\to$};
      \begin{scope}[every node/.style={circle, draw, minimum size=9pt, inner sep=2pt, outer sep=0pt,
        node distance=20pt}]
        \node[right=5pt of u2A, fill=black, text=white] (u2A3) {\idsize 3};
        \node[right=of u2A3, rectangle, inner sep=2pt] (u2AC) {\lblsize $C$};
        \node[below=of u2AC] (u2A2) {\idsize 2};
        \node[above=of u2AC] (u2A1) {\idsize 1};
        \node[left=of u2A1, fill=black, text=white] (u2A4) {\idsize 4};
      \end{scope}
      \begin{scope}[every path/.style={-stealth'}]
        \draw (u2A1) -- node[above] {\lblsize $B$} (u2A4);
      \end{scope}
      \draw (u2AC) -- node[above, pos=.3] {\scriptsize $1$} (u2A3);
      \draw (u2AC) -- node[right, pos=.3] {\scriptsize $2$} (u2A2);
      \draw (u2AC) -- node[right, pos=.3] {\scriptsize $3$} (u2A1);
    \end{tikzpicture}
  };

  \node[below left=20pt and 5pt of u2] (u3) {$\vdots$};

  \node[dashed, circle, draw, inner sep=0pt, outer sep=0pt, below right=20pt and 5pt of u2] (u4)
  {
    \begin{tikzpicture}[solid, remember picture]
      \node[] (u4C) {$C \to$};
      \begin{scope}[every node/.style={circle, draw, minimum size=9pt, inner sep=2pt, outer sep=0pt,
        node distance=20pt}]
        \node[right=5pt of u4C, fill=black, text=white] (u4C4) {\idsize 4};
        \node[right=of u4C4] (u4C1) {\idsize 1};
        \node[above=of u4C1, fill=black, text=white] (u4C2) {\idsize 2};
        \node[below=of u4C1, fill=black, text=white] (u4C3) {\idsize 3};
      \end{scope}
      \begin{scope}[every path/.style={-stealth'}]
        \draw (u4C2) -- node[right] {\lblsize $a$} (u4C1);
        \draw (u4C1) -- node[above] {\lblsize $a$} (u4C4);
        \draw (u4C1) -- node[right] {\lblsize $D$} (u4C3);
      \end{scope}
    \end{tikzpicture}
  };

  \node[dashed, circle, draw, inner sep=0pt, outer sep=0pt, below=20pt of u4] (u5)
  {
    \begin{tikzpicture}[solid, remember picture]
      \node[] (u5D) {$D \to$};
      \begin{scope}[every node/.style={circle, draw, minimum size=9pt, inner sep=2pt, outer sep=0pt,
        node distance=20pt}]
        \node[right=5pt of u5D, fill=black, text=white] (u5D2) {\idsize 2};
        \node[right=of u5D2] (u5D1) {\idsize 1};
        \node[right=of u5D1, fill=black, text=white] (u5D3) {\idsize 3};
      \end{scope}
      \begin{scope}[every path/.style={-stealth'}]
        \draw (u5D2) -- node[above=1pt] {\lblsize $b$} (u5D1);
        \draw (u5D1) -- node[above=1pt] {\lblsize $b$} node[below=1pt] (bEdge) {} (u5D3);
        \draw[bend left=45] (u5D3) to node[below=1pt] {\lblsize $b$} (u5D2);
      \end{scope}
    \end{tikzpicture}
  };

  \node[below right=20pt and 20pt of u1] (Atree) {$\vdots$};

  \draw (u1) -- (u2);
  \draw (u2) -- (u3);
  \draw (u2) -- (u4);
  \draw (u4) -- (u5);
  \draw (u1) -- (Atree);

  \node[right=2pt of u2.south west] {$u_2 = 1$};
  \node[right=2pt of u4.south west] {$u_4 = 1.2$};
  \node[right=2pt of u5.south west] {$u_5 = 1.2.1$};

  \begin{scope}[overlay, every path/.style={very thick, -stealth', Red, dotted}]
    \draw (u1S1) .. controls +(-4,-1) and +(-.5,.5) .. node[above left=-2pt,pos=.12]
    {$\outEdge_{1,2}(S, 1, a)$} (u2A3) .. controls +(2.5,0) and +(-.25,2.5) .. (u4C1) ;
    \draw (u4C1) .. controls +(-0.25,-2.5) and +(-2,1.75) .. node[above left=-2pt,pos=.5] {$\outEdge_{1,2}(C,
    1, b)$} (u5D2) .. controls +(.25,-.125) and +(-.25,-.125) .. (u5D1);
    \node[below right=.25 and 1 of u5D3, outer sep=1pt, inner sep=0pt] (trav) {$\outEdge_{1,2}(D, 1, b)$};
    \draw (trav) -- (u5D3);
  \end{scope}
  \begin{scope}[overlay, every path/.style={very thick, dashed, -stealth', Green}]
    \draw (u5D3) .. controls +(2,1) and +(2,-1) .. node[right=-1pt,pos=.5] {$\internal$} (u4C3) ..
    controls +(-3.5,.5) and +(0,-1) .. (u2A2);
  \end{scope}

  \node[below right=1 and 1 of u4C1, align=left, inner sep=0pt, outer sep=1pt, anchor=north west] (id1) {$\begin{aligned}
      \id(u_4, 1) &= 6.
  \end{aligned}$};

  \node[below left=1.5 and 2.5 of u2A2, align=left, inner sep=0pt, outer sep=1pt, anchor=north west] (id4) {$\begin{aligned}
      \id(u_2, 2) &= 4.
  \end{aligned}$};

  \node[below right=1 and 2 of u5D1, align=left, inner sep=0pt, outer sep=1pt, anchor=north west] (id7) {$\begin{aligned}
      \id(u_5, 1) &= 7.
  \end{aligned}$};

  \begin{scope}[overlay, every path/.style={loosely dashed, very thick, stealth'-}]
      \draw (u4C1) -- (id1.168);
      \draw (u2A2) -- (id4.45);
      \draw (u5D1) -- (id7.168);
  \end{scope}

  \begin{scope}[every node/.style={circle, draw, inner sep=1pt, outer sep=0pt, node distance=20pt,
    minimum size=15pt}]
    \node[right=4 of u4] (10) {\idsize 10};
    \node[above=of 10] (1) {\idsize 1};
    \node[above=of 1] (6) {\idsize 6};
    \node[above=of 6] (7) {\idsize 7};
    \node[above=of 7] (4) {\idsize 4};
    \node[right=2 of 1] (3) {\idsize 3};
    \node[below=of 3] (5) {\idsize 5};
    \node[below=of 5] (2) {\idsize 2};
    \node[left=of 2] (11) {\idsize 11};
    \node[below=of 11] (12) {\idsize 12};
    \node[below=of 12] (9) {\idsize 9};
    \node[left=of 11] (8) {\idsize 8};
  \end{scope}

  \begin{scope}[every path/.style={-stealth'}]
    \draw (10) -- node[left=-2pt] {\lblsize $a$} (1);
    \draw[Red] (1) -- node[left=-2pt] {\lblsize $a$} (6);
    \draw[Red] (6) -- node[left=-2pt] {\lblsize $b$} (7);
    \draw[Red] (7) -- node[left=-2pt] {\lblsize $b$} (4);
    \draw[bend right=45] (4) to node[left=-2pt] {\lblsize $b$} (6);
    \draw (6) -- node[above=-2pt] {\lblsize $a$} (3);
    \draw (3) -- node[left=-2pt] {\lblsize $a$} (5);
    \draw (5) -- node[left=-2pt] {\lblsize $a$} (2);
    \draw (2) -- node[above=-2pt] {\lblsize $a$} (11);
    \draw (11) -- node[left=-2pt] {\lblsize $b$} (12);
    \draw (12) -- node[left=-2pt] {\lblsize $b$} (9);
    \draw[bend right=45] (9) to node[left=-2pt] {\lblsize $b$} (11);
    \draw (11) -- node[above=-2pt] {\lblsize $a$} (8);
    \draw (8) -- node[left=-2pt] {\lblsize $a$} (10);
  \end{scope}
  
  \begin{scope}[overlay, every path/.style={very thick, -stealth', Red, dotted}]
    \node[right=of 6] (trav1) {$\outEdge_{1,2}(1,a)$};
    \node[right=of 7] (trav2) {$\outEdge_{1,2}(6,b)$};
    \node[right=of 4] (trav3) {$\outEdge_{1,2}(7,b)$};

    \draw (trav1) -- (6);
    \draw (trav2) -- (7);
    \draw (trav3) -- (4);
  \end{scope}
\end{tikzpicture}
  \caption{Part of the derivation tree from Figure~\ref{fig:annotatedDT} and the full graph with
  added annotations about 3 traversal steps.}
  \label{fig:traversal_dt}
\end{figure}
We can formally define the $\outEdge$-mapping recursively in the following way:
\begin{itemize}
    \item $\outEdge_{k,l}(A,x,\sigma) = (\varepsilon, y)$, if there exists $e \in E_{\rhs(A)}$ with
      $\att(e)[k] = x$, $\att(e)[l] = y$ and $\lbl(e) = \sigma$.
    \item $\outEdge_{k,l}(A,x,\sigma) = (v.i,y)$ if $x = \att(e_i)[j]$ and $\outEdge(\lbl(e_i),
      \ext_{\rhs(\lbl(e_i))}[j], \sigma) = (v,y)$.
\end{itemize}
Note that this definition implies the mapping can be precomputed for every combination of $A$, $x$,
and $\sigma$ by one bottom-up (i.e., in reverse $\SLOrder$) pass of the grammar.
A similar definition can be given of the aforementioned $\nodes(A)$, which is the number of internal
nodes in $\val(A)$ for a nonterminal $A$. This mapping can be defined as $\nodes(A) = \sum
\{\nodes(\lbl(e)) \mid e \in E^{\nt}_{\rhs(A)}\} + |V_{\rhs(A)}| - |\ext_{\rhs(A)}|$ which also
allows for a bottom-up computation in one pass.

In summary, to traverse a
$\sigma$-edge that is attached at index $k$ to $\id(u,x)$ in $\val(G)$ towards the node attached at
index $l$ we
\begin{enumerate}
  \item compute $\outEdge_{k,l}(A_u, x, \sigma) = (v', y')$,
  \item let $w = u.v'$ be the address of the node containing $y'$ in its label with respect to
    $\dt(G)$,
  \item compute $\internal(w, y') = (v,y)$, and
  \item compute $\id(v,y)$.
\end{enumerate}

In Section~\ref{sss:tableaux} below we introduce a data structure that allows us to do each of these
steps in constant time for unique-label SL-HR grammars of bounded rank. This is our main result.
\subsection{Delay of naive implementation and other approaches}\label{sss:naive_delay}
For the methods outlined in the introduction to Section~\ref{sse:traversing} the delay of a naive
implementation can only be given as $O(\height(G))$. It is not hard to construct a grammar (e.g.
Figure~\ref{fig:rootToLeafJump}) where every step from one node to another corresponds to a step
from the root to a leaf in the derivation tree (or vice versa), and thus the time needed to compute
$\internal$ or $\outEdge$ for this grammar is $O(\height(G))$.
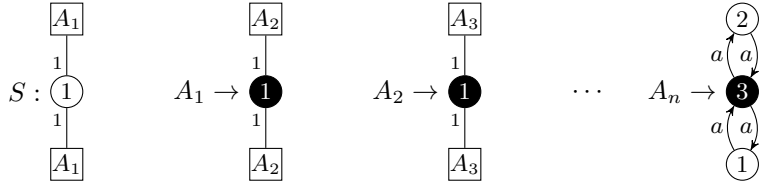
\begin{figure}[!t]
  \centering
  \begin{tikzpicture}
  \node (S) {$S:$};
  \begin{scope}[every node/.style={circle, draw, inner sep=0pt, minimum size=12pt, node distance=15pt}]
    \node[right=0pt of S] (S1) {\idsize 1};
    \node[above=of S1, rectangle] (S2) {\lblsize $A_1$};
    \node[below=of S1, rectangle] (S3) {\lblsize $A_1$};
  \end{scope}
  \begin{scope}
    \draw (S1) --node[left=-2pt, pos=.3] {\scriptsize $1$} (S2);
    \draw (S1) --node[left=-2pt, pos=.3] {\scriptsize $1$} (S3);
  \end{scope}

  \node[right=1.5 of S] (A1) {$A_1 \to$};
  \begin{scope}[every node/.style={circle, draw, inner sep=0pt, minimum size=12pt, node distance=15pt}]
    \node[right=0pt of A1, fill=black, text=white] (A11) {\idsize 1};
    \node[above=of A11, rectangle] (A12) {\lblsize $A_2$};
    \node[below=of A11, rectangle] (A13) {\lblsize $A_2$};
  \end{scope}
  \begin{scope}
    \draw (A11) --node[left=-2pt, pos=.3] {\scriptsize $1$} (A12);
    \draw (A11) --node[left=-2pt, pos=.3] {\scriptsize $1$} (A13);
  \end{scope}

  \node[right=1.5 of A1] (A2) {$A_2 \to$};
  \begin{scope}[every node/.style={circle, draw, inner sep=0pt, minimum size=12pt, node distance=15pt}]
    \node[right=0pt of A2, fill=black, text=white] (A21) {\idsize 1};
    \node[above=of A21, rectangle] (A22) {\lblsize $A_3$};
    \node[below=of A21, rectangle] (A23) {\lblsize $A_3$};
  \end{scope}
  \begin{scope}
    \draw (A21) --node[left=-2pt, pos=.3] {\scriptsize $1$} (A22);
    \draw (A21) --node[left=-2pt, pos=.3] {\scriptsize $1$} (A23);
  \end{scope}
  
  \node[right=1.5 of A2] (dots) {$\cdots$};

  \node[right=2.5 of A2] (An) {$A_n \to$};
  \begin{scope}[every node/.style={circle, draw, inner sep=0pt, minimum size=12pt, node distance=15pt}]
    \node[right=0pt of An, fill=black, text=white] (An3) {\idsize 3};
    \node[above=of An3] (An2) {\idsize 2};
    \node[below=of An3] (An1) {\idsize 1};
  \end{scope}
  \begin{scope}[every path/.style={-stealth'}]
    \draw (An3) to[bend left] node[left=-2pt] {\lblsize $a$} (An2);
    \draw (An3) to[bend left] node[left=-2pt] {\lblsize $a$} (An1);
    \draw (An1) to[bend left] node[left=-2pt] {\lblsize $a$} (An3);
    \draw (An2) to[bend left] node[left=-2pt] {\lblsize $a$} (An3);
  \end{scope}
\end{tikzpicture}
  \caption{Grammar where every traversal step corresponds to a jump between the root and a leaf in the
  derivation tree.}
  \label{fig:rootToLeafJump}
\end{figure}

For straight-line programs (context-free string grammars with the straight-line restriction), or
SLPs, it is known that one traversal step (here: going from one symbol to the next) can be done in
constant time using linear time/space preprocessing~\cite{DBLP:conf/dcc/GasieniecKPS05}. This works
by visiting the leafs of the derivation tree of an SLP in order, which results in the string
represented. A data structure based on tries is used to guarantee that the move from one leaf of the
derivation tree to the next can always be done in constant time. The data structure solves the
``next link problem'' in constant time. For a given leaf $u$ this problem asks for the node $w$ and
the number $k$ such that $u$ is in the $k$-th subtree of $w$ and the next leaf $v$ is in the
$(k+1)$-th subtree of $w$, i.e., $w$ is the lowest common ancestor (LCA) of $u$ and $v$. This operation
can be carried out in constant time (after linear-time preprocessing) using any of several
well-known tree indexing data structures (e.g. a variant of finding
  LCA~\cite{DBLP:journals/siamcomp/SchieberV88} described in the
  book~\cite{DBLP:books/cu/Gusfield1997}, or over first-child/next-sibling encoded binary trees,
  with simple and efficient LCA structures~\cite{DBLP:journals/tcs/BenderF04} --- cf. the discussion
at the end of the article~\cite{DBLP:journals/algorithmica/LohreyMR18}). The
article~\cite{DBLP:journals/algorithmica/LohreyMR18} further generalizes the method to SL tree
grammars provided the grammar is given in a normal form, which structures the derivation into
``string-like'' parts that are branched off from. It is unlikely that this method generalizes to
SL-HR grammars for at least two reasons: first there is a major difference in the grammar
formalisms. In string and tree grammars the objects that are traversed (consecutive symbols of the
string, nodes in the tree) are also the terminals of the grammar. In HR grammars however, terminals
and nonterminals are (hyper-)edges of the graph, but we wish to traverse the graph's vertices along
the edges. An important effect is that it is not reasonably possible to confine the objects of
interest (i.e., the vertices) to the leafs of the derivation tree. Such a restriction would be very
strong and would make it impossible to, for example, construct a grammar that represents a string
graph of $2^n$ consecutive $a$-edges with a grammar of $O(n)$ size. It can therefore be assumed that
traversal of the represented graph requires navigation on arbitrary nodes in the derivation tree,
instead of only navigating from leaf to leaf.  Furthermore, as we see in the example discussed in
the introduction to Section~\ref{sse:traversing}, traversing from one vertex in $\val(G)$ to another
may incur ``jumps'' in the derivation tree, i.e., the navigation in the derivation tree is not
restricted to single parent/child steps, but may require longer paths.

\subsection{Tableaux and how to use them}\label{sss:tableaux}
We seek a data structure that allows to compute, in constant time, each of the three operations
$\outEdge$, $\internal$, and $\id$, and which can be precomputed in polynomial time. The data
structure we construct in the following essentially achieves this by precomputing $\internal$ and
$\firstID$ for every valid $\outEdge$-input. To get an intuitive understanding of this, we first
make some more observations about the example from Figure~\ref{fig:traversal_dt} above and then
extend the example with our data structure. We then define it and explain how it is built
afterwards.

We note the following property of $\outEdge$ and $\internal$ that will be important: starting from a
node $u$ in the derivation tree, $\outEdge$ will always lead to a successor of $u$, whereas
$\internal$ will always lead to an ancestor. Neither mapping ever branches to a sibling. For the
traversal we therefore need to maintain a structure that represents only one branch of the
derivation tree. As we already observed before, we can compute $\outEdge(A,\sigma,x)$ for any
combination of $A$, $\sigma$, and $x$ by considering only $\dt(A)$, i.e., a subtree of $\dt(S)$.
This yields a branch of $\dt(A)$ which we can concatenate with the current branch of $\dt(S)$ to get
the full branch for the target vertex (compare $\outEdge_{1,2}(C,1,b)$ in the above example). Since
$\outEdge$ only depends on $\dt(A)$ and there are at most $|N||\Sigma||G|$ possible combinations of
$A$, $\sigma$, and $x$ these ``partial branches'' can be precomputed in polynomial time. This is the
core of our data structure: we model every partial branch using a matrix (we call it a ``tableau'')
of pointers, where each row of the matrix corresponds directly to a node in the branch and each
column corresponds to the external vertices. The intention is that a cell -- again, row and column
denote a node $v$ and external vertex within $g_v$ -- has a pointer to another cell which contains
the internal vertex that is merged with the external vertex the first cell refers to. The immediate
problem is, that while $\outEdge$ is defined exclusively with relation to $\dt(A)$,
$\internal$ is only valid on full branches of $\dt(S)$ since the internal vertex merged with an
external vertex may be in any ancestor within the derivation tree. Compare again the above example
in Figure~\ref{fig:traversal_dt}: here $\internal(u_5, 3) = (u_2, 2)$, but we actually reached $u_5$
by traversing from a vertex in $u_4$ using $\outEdge_{1,2}(C,1,b)$. We thus cannot precompute the
full mapping if we only consider $\dt(C)$. The solution is simple: we precompute as much of it as we
can, and when concatenating a partial branch with the current full branch of $\dt(S)$ we connect the
ends. Similarly for $\firstID$: for any $\outEdge(A,\sigma,x)$ we precompute $\firstID$ relative to
$\dt(A)$, and then add an appropriate offset once the branches are concatenated.

To exemplify these intuitions, let us consider how our proposed data structure would handle the
example from Figure~\ref{fig:traversal_dt}. Figure~\ref{fig:traversal_tab_ex} contains the branch of
the derivation tree next to two matrices which model the branches $\outEdge_{1,2}(S,1,a)$ and
$\outEdge_{1,2}(C,1,B)$, respectively. Both matrices have additional annotations $\nt$ and
$\firstID$, and we usually refer to the entire structure (with the additional annotations) as a
``tableau''. Consider first the upper tableau $t_1$. It models the branch $u_4 = 1.2$ of the
derivation tree $\dt(G)$ which is what we need for $\outEdge_{1,2}(S,1,a)$. Each of the tableau's
three rows directly corresponds to one of the nodes in the derivation tree (denoted by light gray
lines). Every cell of the main matrix of the tableau is split into two parts: a pointer $\reg$ and a
number $\nde$. We explain in the following how this tableau is used to determine $\id$ and
$\internal$. To compute the vertex number of vertex $1$ in node $u_4$ with respect to $\val(G)$ we
can use the $\firstID$-values provided by the tableau: the third row corresponds to $u_4$ and has
$\firstID = 5$. Thus $\id(u_4,1) = 5+1 = 6$, just as discussed previously. Analogously, if we wanted
to determine the vertex number of vertex $2$ in node $u_2$ we would use the $\firstID$-value in the
second row of the tableau, which is $2$ and thus $\id(u_2,2) = 2+2 = 4$. The other information we
can take from the tableau is the $\internal$-mapping. In Figure~\ref{fig:traversal_tab_ex} we
denoted by a dashed green edge the mapping $\internal(u_4,3)$, which is the second external vertex
of the graph $\rhs(C)$. Consequently we find, in the second column of the third row of $t_1$, an
edge (also colored green) pointing to a cell in the row above. This cell has a $\nde$-value of $2$,
which tells us that the vertex $2$ in the node corresponding to the second row (i.e., $u_2$) is the
internal vertex merged with this external vertex. The reader may have noticed that some arrows in
the tableau point to rows below instead of above, which contradicts the previous intuition that the
$\internal$-mapping always points to an ancestor in the derivation tree. We will explain the reason
for this in the next paragraph, but for now already mention the rules for these arrows:
\begin{itemize}
  \item An arrow going up (we speak of an \emph{upwards pointer}) \emph{always} ends at a cell
    with a $\nde$-value containing the number of the vertex that is merged with the external vertex
    corresponding to the start of the arrow. If such a vertex does not exist within the scope of
    the tableau, the pointer instead ends at the cell of the first row representing the external
    vertex the current vertex is merged with.
  \item An arrow going down (we refer to these as \emph{downwards pointers}) \emph{always} ends at a
    cell containing an upwards pointer, which, when followed, leads to the
    correct internal vertex that is merged with the external vertex at either arrows end.
\end{itemize}
Thus there are two cases for computing the $\internal$-mapping using the tableau: if $\reg$ contains
an upwards pointer, we follow it and are done. If $\reg$ contains a downwards pointer, we follow it
and also the upwards pointer at its end. This way, after at most 2 steps, we always end at the
correct position.

\begin{figure}[!t]
  \centering
  \begin{tikzpicture}[remember picture]
  \node[dashed, circle, draw, inner sep=0pt, outer sep=0pt] (u1)
  {
    \begin{tikzpicture}[solid, remember picture]
      \begin{scope}[every node/.style={circle, draw, minimum size=9pt, inner sep=2pt, outer sep=0pt,
        node distance=20pt}]
        \node[] (u1S1) {\idsize 1};
        \node[right=of u1S1] (u1S2) {\idsize 2};
      \end{scope}
      \begin{scope}[every path/.style={-stealth'}]
        \draw[bend left] (u1S2) to node[above] {\lblsize $A$} (u1S1);
        \draw[bend left] (u1S1) to node[above] {\lblsize $A$} (u1S2);
      \end{scope}
      \node[below left=7pt and -0pt of u1S2] {$u_1 = \varepsilon$};
    \end{tikzpicture}
  };

  \node[dashed, circle, draw, inner sep=0pt, outer sep=0pt, below left=20pt and 20pt of u1] (u2)
  {
    \begin{tikzpicture}[solid, remember picture]
      \node[] (u2A) {$A\to$};
      \begin{scope}[every node/.style={circle, draw, minimum size=9pt, inner sep=2pt, outer sep=0pt,
        node distance=20pt}]
        \node[right=5pt of u2A, fill=black, text=white] (u2A3) {\idsize 3};
        \node[right=of u2A3, rectangle, inner sep=2pt] (u2AC) {\lblsize $C$};
        \node[below=of u2AC] (u2A2) {\idsize 2};
        \node[above=of u2AC] (u2A1) {\idsize 1};
        \node[left=of u2A1, fill=black, text=white] (u2A4) {\idsize 4};
      \end{scope}
      \begin{scope}[every path/.style={-stealth'}]
        \draw (u2A1) -- node[above] {\lblsize $B$} (u2A4);
      \end{scope}
      \draw (u2AC) -- node[above, pos=.3] {\scriptsize $1$} (u2A3);
      \draw (u2AC) -- node[right, pos=.3] {\scriptsize $2$} (u2A2);
      \draw (u2AC) -- node[right, pos=.3] {\scriptsize $3$} (u2A1);
    \end{tikzpicture}
  };

  \node[below left=20pt and 5pt of u2] (u3) {$\vdots$};

  \node[dashed, circle, draw, inner sep=0pt, outer sep=0pt, below right=20pt and 5pt of u2] (u4)
  {
    \begin{tikzpicture}[solid, remember picture]
      \node[] (u4C) {$C \to$};
      \begin{scope}[every node/.style={circle, draw, minimum size=9pt, inner sep=2pt, outer sep=0pt,
        node distance=20pt}]
        \node[right=5pt of u4C, fill=black, text=white] (u4C4) {\idsize 4};
        \node[right=of u4C4] (u4C1) {\idsize 1};
        \node[above=of u4C1, fill=black, text=white] (u4C2) {\idsize 2};
        \node[below=of u4C1, fill=black, text=white] (u4C3) {\idsize 3};
      \end{scope}
      \begin{scope}[every path/.style={-stealth'}]
        \draw (u4C2) -- node[right] {\lblsize $a$} (u4C1);
        \draw (u4C1) -- node[above] {\lblsize $a$} (u4C4);
        \draw (u4C1) -- node[right] {\lblsize $D$} (u4C3);
      \end{scope}
    \end{tikzpicture}
  };

  \node[dashed, circle, draw, inner sep=0pt, outer sep=0pt, below=20pt of u4] (u5)
  {
    \begin{tikzpicture}[solid, remember picture]
      \node[] (u5D) {$D \to$};
      \begin{scope}[every node/.style={circle, draw, minimum size=9pt, inner sep=2pt, outer sep=0pt,
        node distance=20pt}]
        \node[right=5pt of u5D, fill=black, text=white] (u5D2) {\idsize 2};
        \node[right=of u5D2] (u5D1) {\idsize 1};
        \node[right=of u5D1, fill=black, text=white] (u5D3) {\idsize 3};
      \end{scope}
      \begin{scope}[every path/.style={-stealth'}]
        \draw (u5D2) -- node[above=1pt] {\lblsize $b$} (u5D1);
        \draw (u5D1) -- node[above=1pt] {\lblsize $b$} node[below=1pt] (bEdge) {} (u5D3);
        \draw[bend left=45] (u5D3) to node[below=1pt] {\lblsize $b$} (u5D2);
      \end{scope}
    \end{tikzpicture}
  };

  \node[below right=20pt and 20pt of u1] (Atree) {$\vdots$};

  \draw (u1) -- (u2);
  \draw (u2) -- (u3);
  \draw (u2) -- (u4);
  \draw (u4) -- (u5);
  \draw (u1) -- (Atree);

  \node[right=2pt of u2.south west] {$u_2 = 1$};
  \node[right=2pt of u4.south west] {$u_4 = 1.2$};
  \node[right=2pt of u5.south west] {$u_5 = 1.2.1$};

  \begin{scope}[overlay, every path/.style={very thick, -stealth', Red, dotted}]
    \draw (u1S1) .. controls +(-4,-1) and +(-.5,.5) .. node[above left=-2pt,pos=.12]
    {$\outEdge_{1,2}(S, 1, a)$} (u2A3) .. controls +(2.5,0) and +(-.25,2.5) .. (u4C1) ;
    \draw (u4C1) .. controls +(-0.25,-2.5) and +(-2,1.75) .. node[above left=-2pt,pos=.5] {$\outEdge_{1,2}(C,
    1, b)$} (u5D2) .. controls +(.25,-.125) and +(-.25,-.125) .. (u5D1);
  \end{scope}
  \begin{scope}[overlay, every path/.style={very thick, dashed, -stealth', Green}]
    \draw (u4C3) .. controls +(-3.5,.5) and +(0,-1) .. node[left=-1pt,pos=.6] {$\internal$} (u2A2);
  \end{scope}

  \node[below right=1.8 and -0.65 of u4C1, align=left, inner sep=0pt, outer sep=1pt, anchor=north west] (id6) {$\begin{aligned}
      \id(u_4, 1) &= 6.
  \end{aligned}$};

  \node[below left=2.0 and 2.5 of u2A2, align=left, inner sep=0pt, outer sep=1pt, anchor=north west] (id4) {$\begin{aligned}
      \id(u_2, 2) &= 4.
  \end{aligned}$};


  \begin{scope}[overlay, every path/.style={loosely dashed, very thick, stealth'-}]
    \draw (u4C1) .. controls +(1,0) and +(0,0) .. (id6.45);
    \draw (u2A2) .. controls +(-1,0) and +(0,1) .. (id4.135);
  \end{scope}

  \tabBox{right=2 of Atree}{100}{(1,1)}{1}
  \tabBox{at=(100o4)}{101}{(1,2)}{2}
  \tabBox{at=(101o4)}{102}{(1,3)}{0}
  \tabBox{below=35pt of 100o1}{110}{(2,1)}{0}
  \tabBox{at=(110o4)}{111}{(2,2)}{2}
  \tabBox{at=(111o4)}{112}{(2,3)}{1}
  \tabBox{below=35pt of 110o1}{120}{(3,1)}{0}
  \tabBox{at=(120o4)}{121}{(3,2)}{0}
  \tabBox{at=(121o4)}{122}{(3,3)}{0}

  \tabBox{below=100pt of 120o1}{200}{(1,1)}{1}
  \tabBox{at=(200o4)}{201}{(1,2)}{0}
  \tabBox{at=(201o4)}{202}{(1,3)}{0}
  \tabBox{below=35pt of 200o1}{210}{(2,1)}{0}
  \tabBox{at=(210o4)}{211}{(2,2)}{0}
  \tabBox{at=(211o4)}{212}{(2,3)}{0}

  \draw[black!20] (u1) -- (100ol);
  \draw[black!20] (u2) -- (110ol);
  \draw[black!20] (u4) -- (120ol);

  \draw[black!20] (u4) -- (200ol);
  \draw[black!20] (u5) -- (210ol);

  \node[above=2pt of 100o2, anchor=south west] (t1) {$t_1$};
  \node[above=2pt of 200o2, anchor=south west] (t2) {$t_2$};

  \node[right=10pt of 102ir] (t1nt1) {$S$};
  \node[right=10pt of 112ir] (t1nt2) {$A$};
  \node[right=10pt of 122ir] (t1nt3) {$C$};
  \node[at=(t1nt1 |- t1)] (t1ntl) {$\nt$};
  \node[right=10pt of t1nt1] (t1id1) {$0$};
  \node[right=10pt of t1nt2] (t1id2) {$2$};
  \node[right=10pt of t1nt3] (t1id3) {$5$};
  \node[at=(t1id1 |- t1)] (t1idl) {$\firstID$};

  \node[right=10pt of 202ir] (t2nt1) {$C$};
  \node[right=10pt of 212ir] (t2nt2) {$D$};
  \node[at=(t2 -| t2nt1)] (t2ntl) {$\nt$};
  \node[right=10pt of t2nt1] (t2id1) {$0$};
  \node[right=10pt of t2nt2] (t2id2) {$1$};
  \node[at=(t2 -| t2id1)] (t2idl) {$\firstID$};


  \begin{scope}[every path/.style={{Circle[length=3.5pt, width=3.5pt]}-{Stealth[length=4pt, width=3pt]}, shorten <=-2.8pt}]
    \draw (111pl) -- (101p);
    \draw[] (110pr) -- (120pr);
    \draw[] (120pl) .. controls +(-0.4,0.5) and +(-0.4,-0.5) .. (100p);
    \draw (111pl) -- (101p);
    \draw[Green] (121p) -- (111pr);
    \draw (122p) -- (112pr);

      \draw (210p) -- (200p);
      \draw (201pl) .. controls +(-0.4,-0.5) and +(-0.4,0.5) .. (211pl);
      \draw[] (200pl) .. controls +(-1, 1.2) and +(1,1.2) .. (200pr);
      \draw[] (211pr) .. controls +(0.4,0.5) and +(0.4,-0.5) .. (201pr);
      \draw[] (202pl) .. controls +(-1, 1.2) and +(1,1.2) .. (202pr);

  \end{scope}
\end{tikzpicture}
  \caption{Branch of a derivation tree (left) and two tableaux, each representing parts of the
  branch.}
  \label{fig:traversal_tab_ex}
\end{figure}
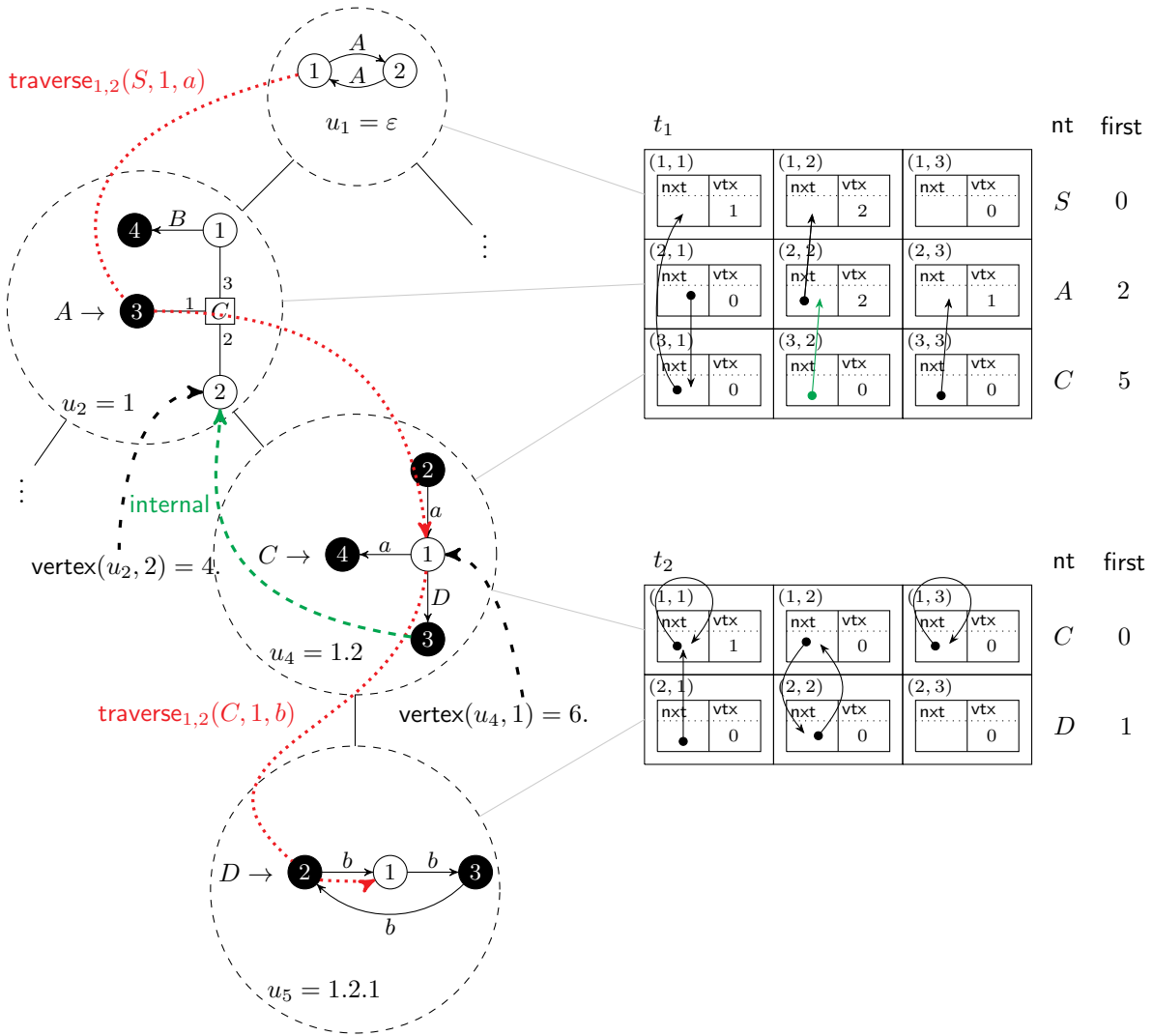

We did not yet mention the second tableau $t_2$ in Figure~\ref{fig:traversal_tab_ex}. It is built in
the same way as $t_1$, and correctly models the branch of the derivation tree relevant for
$\outEdge_{1,2}(C,1,b)$, \emph{with respect to $\val(C)$}. In $\val(C)$ the external vertices of
$\rhs(C)$ are never merged with any other node, therefore all three $\reg$ pointers in the first row
of $t_2$ are cyclic. If a $\reg$ pointer just cycles back to the origin (even if that happens in two
steps as is the case for the second external vertex here), this means that there is no internal
vertex in the scope of this tableau merged with the external vertex. Note that this can never happen
for full branches of $\dt(S)$, since the start graph is defined to be of rank $0$. This behavior
matches the definition of $\outEdge_{1,2}(C,1,b)$ which gives a Dewey address relative to $\dt(C)$.
Just like this address is concatenated with the one of the initial node ($u_4$ in our example), the
tableau $t_2$ needs to be ``concatenated'' to $t_1$ to model the entire branch of the derivation
tree. The result of this process can be seen in Figure~\ref{fig:traversal_tab_att_ex}. Since both
the last row of $t_1$ and the first row of $t_2$ correspond to the same node in the derivation tree
we can simply connect them column-wise. Herein lies the reason for the existence of downwards
pointers: they occur if more than one external vertex within the branch merges with the same
internal vertex.  To make sure that in these cases we only ever have to change one pointer when
concatenating two tableaux, we use the two types of pointers and only need to change the
upwards pointer (seen here in the second column of $t_2$). Otherwise the time to concatenate two
tableaux could not be bounded by $O(\maxRank)$. One more addition needs to be made at this point.
The $\firstID$ values in $t_2$ are given with respect to $\dt(C)$. The final part of the
tableau-data structure is thus an offset value. When concatenating $t_2$ to $t_1$, we use the
$\firstID$-value of the last row in $t_1$ (and any offset it may already have) as the offset in
$t_2$. This ensures that the last row of $t_1$ and the first row of $t_2$ have the same $\firstID$
values, which should be the case as they both represent the same node. In
Figure~\ref{fig:traversal_tab_att_ex} we can thus see that $\internal(u_5, 3)$ is correctly modeled
by the concatenated tableaux (green arrows) and that $\id(u_5,1)$ can be correctly computed using the
$\firstID$-value of the corresponding row in $t_2$ (its last row) and the offset: $\id(u_5,1) = 1 +
1 + 5 = 7$. We invite the reader to check that the other external vertices are also correctly mapped
to their internal equivalents.

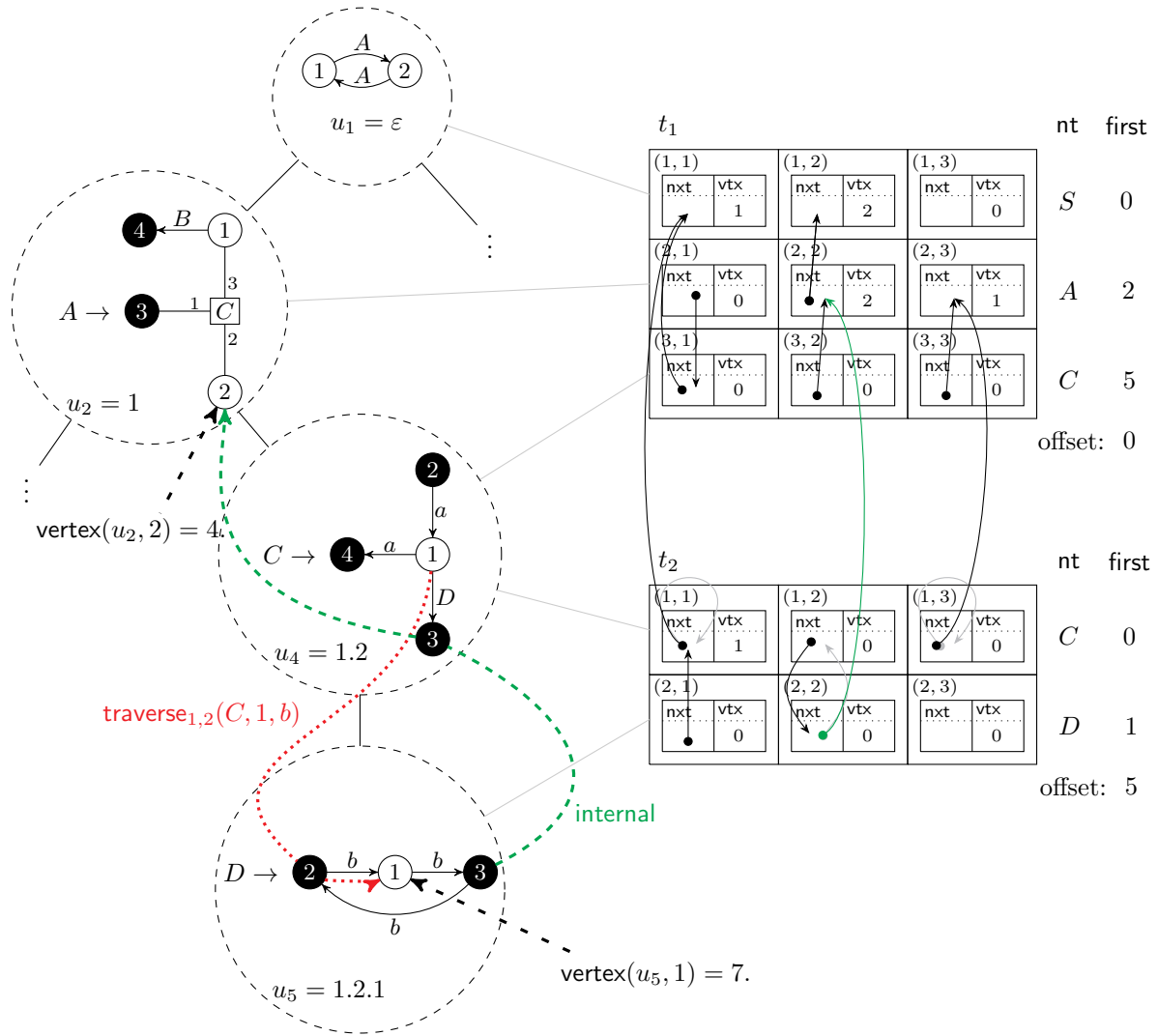
\begin{figure}[!t]
  \centering
  \begin{tikzpicture}[remember picture]
  \node[dashed, circle, draw, inner sep=0pt, outer sep=0pt] (u1)
  {
    \begin{tikzpicture}[solid, remember picture]
      \begin{scope}[every node/.style={circle, draw, minimum size=9pt, inner sep=2pt, outer sep=0pt,
        node distance=20pt}]
        \node[] (u1S1) {\idsize 1};
        \node[right=of u1S1] (u1S2) {\idsize 2};
      \end{scope}
      \begin{scope}[every path/.style={-stealth'}]
        \draw[bend left] (u1S2) to node[above] {\lblsize $A$} (u1S1);
        \draw[bend left] (u1S1) to node[above] {\lblsize $A$} (u1S2);
      \end{scope}
      \node[below left=7pt and -0pt of u1S2] {$u_1 = \varepsilon$};
    \end{tikzpicture}
  };

  \node[dashed, circle, draw, inner sep=0pt, outer sep=0pt, below left=20pt and 20pt of u1] (u2)
  {
    \begin{tikzpicture}[solid, remember picture]
      \node[] (u2A) {$A\to$};
      \begin{scope}[every node/.style={circle, draw, minimum size=9pt, inner sep=2pt, outer sep=0pt,
        node distance=20pt}]
        \node[right=5pt of u2A, fill=black, text=white] (u2A3) {\idsize 3};
        \node[right=of u2A3, rectangle, inner sep=2pt] (u2AC) {\lblsize $C$};
        \node[below=of u2AC] (u2A2) {\idsize 2};
        \node[above=of u2AC] (u2A1) {\idsize 1};
        \node[left=of u2A1, fill=black, text=white] (u2A4) {\idsize 4};
      \end{scope}
      \begin{scope}[every path/.style={-stealth'}]
        \draw (u2A1) -- node[above] {\lblsize $B$} (u2A4);
      \end{scope}
      \draw (u2AC) -- node[above, pos=.3] {\scriptsize $1$} (u2A3);
      \draw (u2AC) -- node[right, pos=.3] {\scriptsize $2$} (u2A2);
      \draw (u2AC) -- node[right, pos=.3] {\scriptsize $3$} (u2A1);
    \end{tikzpicture}
  };

  \node[below left=20pt and 5pt of u2] (u3) {$\vdots$};

  \node[dashed, circle, draw, inner sep=0pt, outer sep=0pt, below right=20pt and 5pt of u2] (u4)
  {
    \begin{tikzpicture}[solid, remember picture]
      \node[] (u4C) {$C \to$};
      \begin{scope}[every node/.style={circle, draw, minimum size=9pt, inner sep=2pt, outer sep=0pt,
        node distance=20pt}]
        \node[right=5pt of u4C, fill=black, text=white] (u4C4) {\idsize 4};
        \node[right=of u4C4] (u4C1) {\idsize 1};
        \node[above=of u4C1, fill=black, text=white] (u4C2) {\idsize 2};
        \node[below=of u4C1, fill=black, text=white] (u4C3) {\idsize 3};
      \end{scope}
      \begin{scope}[every path/.style={-stealth'}]
        \draw (u4C2) -- node[right] {\lblsize $a$} (u4C1);
        \draw (u4C1) -- node[above] {\lblsize $a$} (u4C4);
        \draw (u4C1) -- node[right] {\lblsize $D$} (u4C3);
      \end{scope}
    \end{tikzpicture}
  };

  \node[dashed, circle, draw, inner sep=0pt, outer sep=0pt, below=20pt of u4] (u5)
  {
    \begin{tikzpicture}[solid, remember picture]
      \node[] (u5D) {$D \to$};
      \begin{scope}[every node/.style={circle, draw, minimum size=9pt, inner sep=2pt, outer sep=0pt,
        node distance=20pt}]
        \node[right=5pt of u5D, fill=black, text=white] (u5D2) {\idsize 2};
        \node[right=of u5D2] (u5D1) {\idsize 1};
        \node[right=of u5D1, fill=black, text=white] (u5D3) {\idsize 3};
      \end{scope}
      \begin{scope}[every path/.style={-stealth'}]
        \draw (u5D2) -- node[above=1pt] {\lblsize $b$} (u5D1);
        \draw (u5D1) -- node[above=1pt] {\lblsize $b$} node[below=1pt] (bEdge) {} (u5D3);
        \draw[bend left=45] (u5D3) to node[below=1pt] {\lblsize $b$} (u5D2);
      \end{scope}
    \end{tikzpicture}
  };

  \node[below right=20pt and 20pt of u1] (Atree) {$\vdots$};

  \draw (u1) -- (u2);
  \draw (u2) -- (u3);
  \draw (u2) -- (u4);
  \draw (u4) -- (u5);
  \draw (u1) -- (Atree);

  \node[right=2pt of u2.south west] {$u_2 = 1$};
  \node[right=2pt of u4.south west] {$u_4 = 1.2$};
  \node[right=2pt of u5.south west] {$u_5 = 1.2.1$};

  \begin{scope}[overlay, every path/.style={very thick, -stealth', Red, dotted}]
    \draw (u4C1) .. controls +(-0.25,-2.5) and +(-2,1.75) .. node[above left=-2pt,pos=.5] {$\outEdge_{1,2}(C,
    1, b)$} (u5D2) .. controls +(.25,-.125) and +(-.25,-.125) .. (u5D1);
  \end{scope}
  \begin{scope}[overlay, every path/.style={very thick, dashed, -stealth', Green}]
    \draw (u5D3) .. controls +(2,1) and +(2,-1) .. node[right=1pt,pos=.25] {$\internal$} (u4C3) ..
    controls +(-3.5,.5) and +(0,-1) .. (u2A2);
  \end{scope}


  \node[below left=1.5 and 2.5 of u2A2, align=left, inner sep=0pt, outer sep=1pt, anchor=north west] (id4) {$\begin{aligned}
      \id(u_2, 2) &= 4.
  \end{aligned}$};

  \node[below right=1 and 2 of u5D1, align=left, inner sep=0pt, outer sep=1pt, anchor=north west] (id7) {$\begin{aligned}
      \id(u_5, 1) &= 7.
  \end{aligned}$};

  \begin{scope}[overlay, every path/.style={loosely dashed, very thick, stealth'-}]
      \draw (u2A2) -- (id4.45);
      \draw (u5D1) -- (id7.168);
  \end{scope}

  \tabBox{right=2 of Atree}{100}{(1,1)}{1}
  \tabBox{at=(100o4)}{101}{(1,2)}{2}
  \tabBox{at=(101o4)}{102}{(1,3)}{0}
  \tabBox{below=35pt of 100o1}{110}{(2,1)}{0}
  \tabBox{at=(110o4)}{111}{(2,2)}{2}
  \tabBox{at=(111o4)}{112}{(2,3)}{1}
  \tabBox{below=35pt of 110o1}{120}{(3,1)}{0}
  \tabBox{at=(120o4)}{121}{(3,2)}{0}
  \tabBox{at=(121o4)}{122}{(3,3)}{0}

  \tabBox{below=100pt of 120o1}{200}{(1,1)}{1}
  \tabBox{at=(200o4)}{201}{(1,2)}{0}
  \tabBox{at=(201o4)}{202}{(1,3)}{0}
  \tabBox{below=35pt of 200o1}{210}{(2,1)}{0}
  \tabBox{at=(210o4)}{211}{(2,2)}{0}
  \tabBox{at=(211o4)}{212}{(2,3)}{0}

  \draw[black!20] (u1) -- (100ol);
  \draw[black!20] (u2) -- (110ol);
  \draw[black!20] (u4) -- (120ol);

  \draw[black!20] (u4) -- (200ol);
  \draw[black!20] (u5) -- (210ol);

  \node[above=2pt of 100o2, anchor=south west] (t1) {$t_1$};
  \node[above=2pt of 200o2, anchor=south west] (t2) {$t_2$};

  \node[right=10pt of 102ir] (t1nt1) {$S$};
  \node[right=10pt of 112ir] (t1nt2) {$A$};
  \node[right=10pt of 122ir] (t1nt3) {$C$};
  \node[at=(t1nt1 |- t1)] (t1ntl) {$\nt$};
  \node[right=10pt of t1nt1] (t1id1) {$0$};
  \node[right=10pt of t1nt2] (t1id2) {$2$};
  \node[right=10pt of t1nt3] (t1id3) {$5$};
  \node[below=10pt of t1id3] (t1ide) {$0$};
  \node[at=(t1id1 |- t1)] (t1idl) {$\firstID$};
  \node[left=0pt of t1ide] (t1off) {offset:};

  \node[right=10pt of 202ir] (t2nt1) {$C$};
  \node[right=10pt of 212ir] (t2nt2) {$D$};
  \node[at=(t2 -| t2nt1)] (t2ntl) {$\nt$};
  \node[right=10pt of t2nt1] (t2id1) {$0$};
  \node[right=10pt of t2nt2] (t2id2) {$1$};
  \node[below=10pt of t2id2] (t2ide) {$5$};
  \node[at=(t2 -| t2id1)] (t2idl) {$\firstID$};
  \node[left=0pt of t2ide] (t2off) {offset:};


  \begin{scope}[every path/.style={{Circle[length=3.5pt, width=3.5pt]}-{Stealth[length=4pt, width=3pt]}, shorten <=-2.8pt}]
    \draw (111pl) -- (101p);
    \draw[] (110pr) -- (120pr);
    \draw[] (120pl) .. controls +(-0.4,0.5) and +(-0.4,-0.5) .. (100p);
    \draw (111pl) -- (101p);
    \draw (121p) -- (111pr);
    \draw[] (122p) -- (112pr);

      \draw (210p) -- (200p);
      \draw (201pl) .. controls +(-0.4,-0.5) and +(-0.4,0.5) .. (211pl);
      \draw[Black!30] (200pl) .. controls +(-1, 1.2) and +(1,1.2) .. (200pr);
      \draw[Black!30] (211pr) .. controls +(0.4,0.5) and +(0.4,-0.5) .. (201pr);
      \draw[Black!30] (202pl) .. controls +(-1, 1.2) and +(1,1.2) .. (202pr);

      \draw[] (200pl) .. controls +(-0.7,0.5) and +(-0.7,-0.5) .. (100p);
      \draw[Green] (211pr) .. controls +(0.7,0.5) and +(0.7,-0.5) .. (111pr);
      \draw[] (202pl) .. controls +(0.7,0.5) and +(0.7,-0.5) .. (112pr);
  \end{scope}
\end{tikzpicture}
  \caption{Concatenating the pointers of the two tableaux and setting an offset turns them into one
  tableau that represents the full branch of the derivation tree.}
  \label{fig:traversal_tab_att_ex}
\end{figure}

\subsection{Precomputing Tableaux}
In this section we will explain how to precompute the necessary tableaux in advance, but first let
us formalize their definition following the description in the previous section. We begin by
defining an auxiliary function to streamline how the tableau is used to support the $\internal$
mapping. As detailed in the previous section, we always follow one or two $\reg$ pointers depending on
whether the first one was downwards or not. Let $t(i,j)$ be the cell at row $i$ and column $j$ of
the matrix $t$. We then use $\reg(t(i,j)) = t'(i',j')$ to refer to the cell referenced by the $\reg$
pointer in $t(i,j)$, i.e., the $\reg$ pointer in cell $(i,j)$ of $t$ references the cell $(i',j')$
of $t'$. We formalize the distinction between upwards and downwards pointers we introduced above: we
call $\reg$ an upwards pointer if $t' \neq t$ or $t' = t$ and $i' \leq i$ and a downwards pointer if
$t' = t$ and $i' > i$. Then we define $\findNode(t(i,j))$ as 

\begin{equation*}
  \findNode(t(i,j)) = 
  \begin{cases}
    \reg(t(i,j)) & \text{if $\reg(t(i,j))$ is upwards} \\
    \reg(\reg(t(i,j))) & \text{if $\reg(t(i,j))$ is downwards}
  \end{cases}.
\end{equation*}

This function always maps to the cell containing the internal vertex merged with the $j$-th external
vertex of row $i$. As with the $\reg$ pointer, we use $\nde(t(i,j))$ to reference the vertex number
given in the cell $t(i,j)$. Thus the number of the internal vertex merged with the $j$-th external
vertex of row $i$ is $\nde(\findNode(t(i,j)))$. We formally define the properties of a tableau:

\begin{definition}
  Let $A$ be a nonterminal, let $u_m$ be a node in $\dt(A)$, and let $u_1, u_2, \ldots, u_{m-1}$ be
  the nodes on the path from the root of $\dt(A)$ ($= u_1$) to $u_m$. For any $i \in [m]$ let $u_i$
  be labeled with $A_i \to g_i$. Then $\tbl(u_m)$ is any tuple $(t, t^{\nt}, t^{\firstID},
  t^{\offset})$ where $t$ is an $m \times \maxRank$ matrix, $t^\nt$ and $t^{\firstID}$ are $m$
  column-vectors, $t^{\offset}$ is a number, and for every $i \in [m]$ the following properties are
  fulfilled:
  \begin{enumerate}
      \item $t^{\nt}(i) = A_i$
      \item $\tab^{\firstID}(i) = \firstID(u_i)$ (with respect to $\dt(A)$)
      \item  For $j \in [\rank(A_i)]$ let $y$ be the $j$-th external vertex of $g_i$. The cells of
        $t$ are filled such that they fulfill the following conditions:
        \begin{enumerate}
          \item $\internal(u_i,y) = (u_r,z)$ where $u_r$ is an ancestor of $u_i$ in $\dt(A)$ (i.e.,
            $r \leq i$). Then $\findNode(t(i,j)) = t(r, j')$ for some $j'$ and $\nde(t(r,j')) = z$.
          \item $\internal(u_i,y)$ is undefined. Then $\findNode(t(i,j)) = t(1,j')$ for some $j'$
            and $\nde(t(1,j')) = 0$.
          \item The pointer $\reg(t(i,j))$ is an upwards pointer if and only if there is no $i' > i$
            and $j'$ such that $\findNode(t(i,j)) = \findNode(t(i', j'))$.
          \item Unless otherwise specified by the above conditions, $\reg$ and $\nde$ for a cell are
            $0$.
        \end{enumerate}
      \item $\tab^{\offset} = 0$
  \end{enumerate}
    \label{def:correctTableau}
\end{definition}

This definition specifies the informal definition of the previous section: There is one row per
ancestor of $u_m$, each with appropriate nonterminal, and $\firstID$-values (items 1 and 2 in the
definition). Further, the cells are specified to contain tuples of $(\reg, \nde)$ such that the
internal vertex merged with a given external vertex is found in at most two steps (item 3). We
say that such a tableau \emph{models} the path from the root of $\dt(A)$ to the node $u_m$, or
simply the tableau models $u_m$. Note that, while a tableau $t$ by this definition models the node
$u_m$, its first $i$ rows also model its ancestor $u_i$ for any $i \in [m]$. A tableau thus can be
interpreted as an encoding of a specific branch of the derivation tree. The tableau for a traversal
step is consequently defined as the tableau modeling the path in the derivation tree:

\begin{definition}
  For a given nonterminal $A$, node $x$ in $\rhs(A)$, edge label $\sigma$ and indices $k,l$ let
  $\outEdge_{k,l}(A,x,\sigma) = (u, y)$. Then we define the tableau $\tbl_k(A,x,\sigma) = \tbl(u)$.
  \label{def:tableau}
\end{definition}

We precompute the tableaux for every possible combination of nonterminal $A \in N$, label $\sigma
\in \Sigma$, vertex $x \in V_{\rhs(A)}$, and outgoing neighbor index $k$. It is not necessary to
specify the incoming neighbor index $l$ at this point: let $l$ and $r$ be two different such
indices, then $\outEdge_{k,l}(A,x,\sigma) = (u,y)$ and $\outEdge_{k,r}(A,x,\sigma) = (u,y')$. So
while the vertex component of the tuples is different, in both cases the same node of the derivation
tree is referenced. Since the tableaux only represent the nodes of the derivation tree both
traversals can be represented by the same tableau. The tableaux are computed inductively using
Algorithm~\ref{alg:create}. For any given nonterminal $A$ we can represent the root node of $\dt(A)$
using the tableau $(t_\varepsilon, (A), (0), 0)$ where $t_{\varepsilon}$ is a $1 \times \maxRank$
matrix with

\begin{itemize}
  \item $\reg(t_{\varepsilon}(1,j)) = t_{\varepsilon}(1,j)$ for $j \leq \rank(A)$
  \item $\reg(t_{\varepsilon}(1,j)) = 0$ for $j > \rank(A)$, and
  \item $\nde(t_{\varepsilon}(1,j)) = 0$ for all $j$.
\end{itemize}

This generates upwards pointers for every external vertex of $\rhs(A)$, but since $A \to \rhs(A)$ is
the root of $\dt(A)$ these pointers just loop back to their origin. Note that this tableau models
the path consisting only of the root of $\dt(A)$ by Definition~\ref{def:correctTableau}. Starting
from this, the tableau $\tbl_k(A, x, \sigma)$ is constructed row by row, such that every
intermediate step is a valid tableau modeling the path in the derivation tree up to that point.
Assume therefore, that we already have a tableau modeling the path to $v$ in $\dt(A)$, and we wish
to extend it to model $u$, which is a child of $v$. This situation is illustrated in
Figure~\ref{fig:creation_illus}. The derivation tree $\dt(A)$ is on the left of the figure, the
nodes $v$ and $u$ are expanded with their labels ($B \to g_v$ and $C \to g_u$, respectively), though
only the relevant parts of the graphs $g_v$ and $g_u$ are shown. The external vertices of $g_u$ are
connected by colored lines with the vertices in the parent node $v$ they are merged with.
Information on the ancestors of $v$ is not necessary for the extension of the tableau $t_v$ and
therefore left out of the figure. The bottom row of the tableau $t_v$ is shown on the upper right of
the figure. Even though only the bottom row is shown, we can deduce some information on the external
vertices of $g_v$ from the $\reg$ pointers in it: namely that the first two external vertices (the
ones numbered 2 and 3 in $g_v$) are merged with other external vertices in the parent of $v$,
whereas the third external vertex (numbered 4) is merged with an internal vertex. We know this,
because the first two columns have incoming downwards pointers, whereas the third one does not.
Recall that downwards pointers are only used to denote that more than one external vertex is merged
with the same internal vertex. The bottom right of the figure shows the tableau $t_u$, which is the
result of Algorithm~\ref{alg:create} with input $A$ and $u$. We now explain how the algorithm
extends $t_v$ to obtain $t_u$ (lines~\ref{line:start} to~\ref{line:return} of the algorithm).

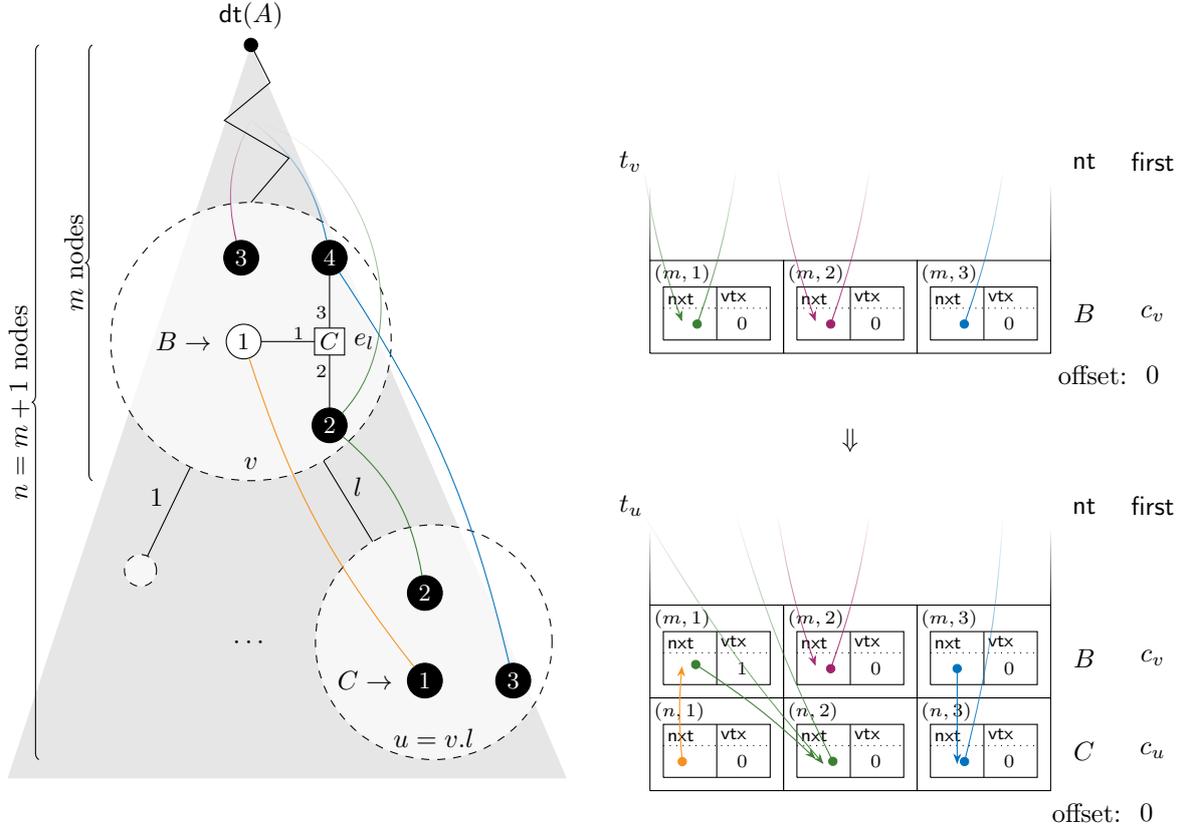
\begin{figure}[!t]
  \centering
  \colorlet{color1}{BurntOrange}
\colorlet{color2}{OliveGreen}
\colorlet{color3}{RoyalBlue}
\colorlet{color4}{RedViolet}

\begin{tikzpicture}[remember picture]
  \node[draw, circle, dashed, inner sep=0pt, outer sep=0pt, fill=white, fill opacity=.7, text opacity=1] (v) {
    \begin{tikzpicture}[solid, remember picture]
      \node[] (vB) {$B\to$};
      \begin{scope}[every node/.style={circle, draw, minimum size=9pt, inner sep=2pt, outer sep=0pt,
        node distance=20pt, fill=white}]
        \node[right=5pt of vB] (vB1) {\idsize 1};
        \node[right=of vB1, rectangle, inner sep=2pt] (vBC) {\lblsize $C$};
        \node[below=of vBC, fill=black, text=white] (vB2) {\idsize 2};
        \node[above=of vBC, fill=black, text=white] (vB4) {\idsize 4};
        \node[left=of vB4, fill=black, text=white] (vB3) {\idsize 3};
      \end{scope}
      \draw (vBC) -- node[above, pos=.3] {\scriptsize $1$} (vB1);
      \draw (vBC) -- node[left, pos=.3] {\scriptsize $2$} (vB2);
      \draw (vBC) -- node[left, pos=.3] {\scriptsize $3$} (vB4);
    \end{tikzpicture}
  };
  \node[right=0pt of vBC] (e) {$e_l$};

  \node[below right=45pt and 0pt of v, dashed, circle, draw, inner sep=0pt, outer sep=0pt,
  fill=white, fill opacity=.7, text opacity=1] (u)
  {
    \begin{tikzpicture}[solid, remember picture]
      \node[] (uC) {$C\to$};
      \begin{scope}[every node/.style={circle, draw, minimum size=9pt, inner sep=2pt, outer sep=0pt,
        node distance=20pt}]
        \node[right=5pt of uC, fill=black, text=white] (uC1) {\idsize 1};
        \node[right=of uC1, fill=black, text=white] (uC3) {\idsize 3};
        \node[above= of uC1, fill=black, text=white] (uC2) {\idsize 2};
      \end{scope}
    \end{tikzpicture}
  };

  \node[below left=45pt and 0pt of v, draw, circle, dashed, minimum size=12pt, inner sep=0pt, outer
  sep = 0pt,fill = white, fill opacity=.7] (x) {};
  \node[at=(v |- u)] (dots) {$\cdots$};

  \draw (v) -- node[above right=-2pt] {$l$} (u);
  \draw (v) -- node[above left=-2pt] {$1$} (x);

  \node[above=2 of v, circle, draw, fill=black, minimum size=5pt, inner sep=0pt, outer sep=0pt] (root) {};
  \node[above=0pt of root] {$\dt(A)$};
  \draw (root) -- +(.25, -.5) -- +(-.35, -1) -- +(.5, -1.5) -- (v.north);

  \node[above=1pt of v.south] {$v$};
  \node[above=1pt of u.south] {$u=v.l$};

  \begin{scope}[every path/.style={draw}]
    \path[color=color1, bend left=10] (uC1) to (vB1);
    \path[color=color2, bend right=20] (uC2) to (vB2);
    \path[color=color3, bend right=10] (uC3) to (vB4);
  \end{scope}

  \begin{scope}[every path/.style={draw, path fading=north}]
    \path[color=color2, bend right=60] (vB2) to ($(root) + (0,-1)$);
    \path[color=color4, bend left=20] (vB3) to ($(root) + (0,-1)$);
    \path[color=color3, bend right=20] (vB4) to ($(root) + (0,-1)$);
  \end{scope}

  \node[coordinate, left=2 of root] (m) {};
  \node[coordinate, left=2em of m] (n) {};

  \tabBox{below right=2 and 5.25 of v.north}{100}{(m,1)}{0}
  \tabBox{at=(100o4)}{101}{(m,2)}{0}
  \tabBox{at=(101o4)}{102}{(m,3)}{0}

  \tabBox{below=130pt of 100o1}{200}{(m,1)}{1}
  \tabBox{at=(200o4)}{201}{(m,2)}{0}
  \tabBox{at=(201o4)}{202}{(m,3)}{0}
  \tabBox{below=35pt of 200o1}{210}{(n,1)}{0}
  \tabBox{at=(210o4)}{211}{(n,2)}{0}
  \tabBox{at=(211o4)}{212}{(n,3)}{0}

  \node[above left=30pt and 1.5em of 100o2, anchor=south west] (t1) {$t_v$};
  \node[above left=30pt and 1.5em of 200o2, anchor=south west] (t2) {$t_u$};

  \node[right=10pt of 102ir] (t1nt1) {$B$};
  \node[at=(t1nt1 |- t1)] (t1ntl) {$\nt$};
  \node[right=10pt of t1nt1] (t1id1) {$c_v$};
  \node[at=(t1id1 |- t1)] (t1idl) {$\firstID$};
  \node[below=10pt of t1id1] (t1ide) {$0$};
  \node[left=0pt of t1ide] (t1off) {offset:};

  \node[right=10pt of 202ir] (t2nt1) {$B$};
  \node[right=10pt of 212ir] (t2nt2) {$C$};
  \node[at=(t2nt1 |- t2)] (t2ntl) {$\nt$};
  \node[right=10pt of t2nt1] (t2id1) {$c_v$};
  \node[right=10pt of t2nt2] (t2id2) {$c_u$};
  \node[at=(t2id1 |- t2)] (t2idl) {$\firstID$};
  \node[below=10pt of t2id2.south west, anchor=north west] (t2ide) {$0$};
  \node[left=0pt of t2ide] (t2off) {offset:};

  \begin{scope}[every path/.style={path fading=north, draw}]
    \path (100o2) to +(0,1) ++(1pt,0); 
    \path (102o3) to +(0,1) ++(1pt,0); 
    \path (200o2) to +(0,1) ++(1pt,0);
    \path (202o3) to +(0,1) ++(1pt,0);
  \end{scope}

  \begin{scope}[every path/.style={{Circle[length=3.5pt, width=3.5pt]}-, shorten <=-2.8pt, path fading=north, draw}]
    \path[bend right=05, color2] (100pr) to ($(100pr) + (.5,2)$);
    \path[bend right=05, color4] (101pr) to ($(101pr) + (.5,2)$);
    \path[bend right=05, color3] (102pr) to ($(102pr) + (.5,2)$);
    \path[bend left=05, color2] (211pr) to ($(200pr) + (.5,2)$);
    \path[bend right=05, color4] (201pr) to ($(201pr) + (.5,2)$);
    \path[bend right=05, color3] (212pr) to ($(202pr) + (.5,2)$);
  \end{scope}

  \begin{scope}[every path/.style={-{Stealth[length=4pt, width=3pt]}, shorten <=-2.8pt, path fading=north, draw}]
    \path[bend right=05, color2] ($(100pl) + (-.5,2)$) to (100pl);
    \path[bend right=05, color4] ($(101pl) + (-.5,2)$) to (101pl);
    \path[bend right=05, color2] ($(200pl) + (-.5,2)$) to (211p);
    \path[bend right=05, color4] ($(201pl) + (-.5,2)$) to (201pl);
  \end{scope}
  \begin{scope}[every path/.style={{Circle[length=3.5pt, width=3.5pt]}-{Stealth[length=4pt, width=3pt]}, shorten <=-2.8pt, draw}]
    \path[bend left=05, color1] (210pl) to (200pl);
    \path[bend left=05, color2] (200pr) to (211pl);
    \path[color3] (202p) to (212p);
  \end{scope}
  
  \path[draw, decorate, decoration=brace] (m |- v.south) to node[sloped, above] {$m$ nodes} (m);
  \path[draw, decorate, decoration=brace] (n |- u.south) to node[sloped, above] {$n = m+1$ nodes} (n);

  \node[at=(v.south east -| 101i6)] {$\Downarrow$};

  \begin{scope}[on background layer]
    \path[fill=black!10] (root) -- ($(x |- u.south) + (-1.75,-0.25)$) -- ($(u.south) + (1.75,-0.25)$) -- (root);
  \end{scope}
\end{tikzpicture}
  \caption{Extending the tableau modeling $v$ to make a valid tableau modeling $u$, a child of $v$.}
  \label{fig:creation_illus}
\end{figure}

To extend $t_v$ such that the resulting tableau models $t_u$, we need to add a row to the matrix
$t_v$, and elements to $t_v^{\nt}$ and $t_{v}^{\firstID}$. We focus on the former for now, since
extending the vectors $t_v^{\nt}$ and $t_v^{\firstID}$ is comparatively straightforward. We let $u$
be the $l$-th child of $v$. Thus $v$ has at least $l$ nonterminal edges, and we let $e_l$ be the
$l$-th one by the derivation order. To now extend $t_v$ into $t_u$ we need to make sure the $\reg$
pointers on the newly added last row all correctly represent the external vertices of $u$. Doing
this may involve making changes to already existing $\reg$ pointers in the rows above, if an
external vertex of $u$ is merged with an external vertex of $v$.  Let $y$ be the $j$-th external
vertex of $g_u$. During derivation it is merged with the $j$-th vertex attached to $e_l$, which we
name $x$ (cf. line~\ref{line:10} in Algorithm~\ref{alg:create}). There are now two possible cases:
$x$ could be internal or external. The simpler case is, if $x$ is internal. Then we add a $\reg$
pointer in column $j$ of the last row to reference the same column in the row above.  The $\nde$
element of that cell is further set to contain the vertex number of $x$. In
Figure~\ref{fig:creation_illus} this happens for the first ($j=1$) external vertex (orange color).
Lines~\ref{line:11} and~\ref{line:12} in Algorithm~\ref{alg:create} handle this case. In the second
case $x$ is itself an external vertex. Therefore both $x$ and $y$ ultimately represent the same
internal vertex found in an ancestor of $v$. In this case we intuitively move the entire cell of
representing $x$ in $t_v$ to the cell representing $y$ in $t_u$ and add one downwards pointer ``from
$x$ to $y$''. This way $\findNode$ yields the same result in both cases and Condition~3 of
Definition~\ref{def:correctTableau} is preserved.  In Figure~\ref{fig:creation_illus} this happens
for the second and third external vertices of $u$ (colored green and blue). We encourage the reader
to verify that lines~\ref{line:13} to~\ref{line:17} of Algorithm~\ref{alg:create} have exactly this
effect on $t_u$ in Figure~\ref{fig:creation_illus}. As previously mentioned, extending $t_v^{\nt}$
and $t_v^{\firstID}$ is straightforward by comparison: $t_v^{\nt}$ is extended by one element
containing $C$ since that is the nonterminal in the label of $u$. The element added to
$t_v^{\firstID}$ is the sum of
\begin{enumerate}
  \item the last element of $t_v^{\firstID}$ ($c_v$ in Figure~\ref{fig:creation_illus}),
  \item the number of internal vertices in $g_v$, and
  \item the sum of internal vertices generated by the $l-1$ nonterminal edges in $v$ that are
    derived before $e_l$.
\end{enumerate}
This sum equates to the definition of $\firstID(u)$. Algorithm~\ref{alg:create} computes
this sum in line~\ref{line:id}. 

\begin{algorithm}[!t]
  \begin{algorithmic}[1]
    \Function{createTableau}{Nonterminal $A$, node $u$ of $\dt(A)$}
    \If{$u = \varepsilon$}\Comment{Return empty tableau $t_\varepsilon$}
      \State $\tab_\varepsilon \gets $ empty $1 \times \maxRank$ tableau
      \State $\reg(\tab_\varepsilon(1,j)) \gets \tab(1,j)$ for $j \in [\rank(A)]$
      \State $\tab_\varepsilon^{\nt} = (A)$
      \State $\tab_\varepsilon^{\firstID} = (0)$
      \State $\tab_\varepsilon^{\offset} = 0$
      \State \textbf{return} $\tab_\varepsilon$
    \Else
      \State $u = v.l$ for some $l \in \N$\label{line:start}
      \State $\tab_v = $\textsf{createTableau}$(A,v)$ \Comment{Use tableau for $v$ as base of
      tableau for $u$.}
      \State Let $m$ be the number of rows in $\tab_v$ and $n = m+1$
      \State $\tab_u \gets$ extend $\tab_v$ by one additional row creating an $n \times \maxRank$ tableau
      \State Let $B = \tab_{u}^{\nt}(m)$ and $\NTSet_{\rhs(B)} = \{e_1,\ldots,e_k\}$ with $e_1
      \leq_{\dt} \cdots \leq_{\dt} e_k$
      \For{$j \in [\rank(e_l)]$}
        \State $x \gets \att(e_l)[j]$ \label{line:10}
        \State $\nde(\tab_u(n,j)) \gets 0$
        \If{$x$ is internal}
          \State $\reg(\tab_u(n,j)) \gets \tab_u(m,j)$\label{line:11}
          \State $\nde(\tab_u(m,j)) \gets x$ \label{line:12}
        \Else
          \State $j' \gets $position of $x$ in $\ext_{\rhs(B)}$\label{line:13}
          \State $X_j \gets \{(a,b) \mid \reg(\tab_u(a,b)) = \tab_u(m,j') $ where $a < m\}$
          \State $\reg(\tab_u(n,j)) \gets \reg(\tab_u(m,j'))$
          \State $\reg(\tab_u(m,j')) \gets \tab_u(n,j)$
          \State For all $(a,b) \in X_j:$ $\reg(\tab_u(a,b)) \gets \tab_u(n,j)$ \label{line:17}
        \EndIf
      \EndFor
      \State $\tab_u^{\nt}(n) \gets \lbl(e_l)$ 
      \State $\tab_u^{\firstID}(n) \gets \tab_u^{\firstID}(m) + |V_{g_{\dt(A),v}}| -
      |\ext_{g_{\dt(A),v}}| + \sum_{j < l}\nodes(\lbl(e_j))$ \label{line:id}
      \State \textbf{return} $\tab_u$\label{line:return}
    \EndIf
    \EndFunction
  \end{algorithmic}
  \caption{Algorithm to create a tableau.} 
  \label{alg:create}
\end{algorithm}

We can now prove Algorithm~\ref{alg:create} correct in the
following sense: 
\begin{lemma}
  For $A\in N$, $x \in V_{\rhs(A)}$, $\sigma \in \Sigma$, and indices $k,l$ let
  $\outEdge_{k,l}(A,x,\sigma) = (u, y)$. Then the tableau $t = \mathsf{createTableau}(A, u)$
  computed by Algorithm~\ref{alg:create} equals the tableau $\tbl_k(A,x,\sigma)$ as defined in
  Definition~\ref{def:tableau}.
    \label{lem:correctConst}
\end{lemma}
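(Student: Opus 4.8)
The plan is to prove the statement by induction on the length $m$ of the path from the root of $\dt(A)$ down to $u$; this is exactly the recursion depth of $\textsf{createTableau}(A,u)$ and the number of rows of the matrix it returns. Since Definition~\ref{def:tableau} sets $\tbl_k(A,x,\sigma)=\tbl(u)$, and $\tbl(u)$ denotes \emph{any} tuple meeting the four requirements of Definition~\ref{def:correctTableau}, the claim reduces to showing that the tuple returned by Algorithm~\ref{alg:create} satisfies those four properties for the node $u$. For the base case $u=\varepsilon$ I would verify the single-row tableau $t_\varepsilon$ directly: property~(1) holds as $t_\varepsilon^{\nt}=(A)$, property~(2) as $\firstID(\varepsilon)=0$, and property~(4) is immediate. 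For property~(3) the root has no ancestor, so $\internal(\varepsilon,y)$ is undefined for each external vertex $y$; the self-loops $\reg(t_\varepsilon(1,j))=t_\varepsilon(1,j)$ are upwards pointers with $\nde=0$, giving case~(3b), while~(3c) is vacuous and~(3d) covers the columns $j>\rank(A)$.

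In the inductive step I write $u=v.l$ and take $t_v=\textsf{createTableau}(A,v)$, which by the induction hypothesis models $v$ and has $m-1$ rows. The easy obligations come first: property~(1) for the new row is $t_u^{\nt}(n)=\lbl(e_l)$, and property~(2) amounts to checking that the quantity assembled in line~\ref{line:id} — the previous $\firstID$ value, plus the internal vertices of $g_v$, plus those produced by the $l-1$ nonterminal edges derived before $e_l$ — is exactly the recursive definition of $\firstID(u)$; property~(4) is unchanged. The real content is property~(3), which I would establish by a case analysis driven, for each column $j\in[\rank(e_l)]$, by the vertex $x=\att(e_l)[j]$ with which the $j$-th external vertex $y$ of $g_u$ is merged. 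When $x$ is internal (lines~\ref{line:11}--\ref{line:12}) we have $\internal(u,y)=(v,x)$ with $v$ an ancestor, and the new upwards pointer $\reg(t_u(n,j))=t_u(m,j)$ together with $\nde(t_u(m,j))=x$ yields $\findNode(t_u(n,j))=t_u(m,j)$ carrying value $x$, which is case~(3a). When $x$ is itself the $j'$-th external vertex of $g_v$ (lines~\ref{line:13}--\ref{line:17}), $x$ and $y$ denote the same vertex of $\val(A)$, and I would show that transplanting the pointer of $t_u(m,j')$ onto $t_u(n,j)$, turning $t_u(m,j')$ into a downwards pointer to $t_u(n,j)$, and redirecting the cells of $X_j$ to $t_u(n,j)$, leaves $\findNode$ unchanged on every affected cell.

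The hard part will be property~(3c): I must maintain as an invariant that the cells resolving, under $\findNode$, to a common anchor form a class in which exactly the lowest cell carries an upwards pointer and every other cell points downwards to it in a single hop. A useful auxiliary observation is that the bottom row of any valid tableau never contains a downwards pointer (there is no lower row to target), so the pointer transplanted from $t_u(m,j')$ in the else-branch is always upwards; this makes the freshly inserted cell $t_u(n,j)$ the new unique lowest member of its class, while $t_u(m,j')$ and the stale references collected in $X_j$ become one-hop downwards pointers into it, so that $\findNode$, which follows at most two pointers, still terminates at the correct anchor. The most delicate point, which I expect to require the most care, is the undefined subcase~(3b): when the chain from $x$ traces back through self-loops to a row-one cell, I must argue that this cell still carries $\nde=0$ — that is, that a cell acting as the row-one resolution of an undefined external vertex is never simultaneously overwritten as the anchor of an internal merge — so that $\findNode(t_u(n,j))$ lands in row~$1$ with $\nde=0$ as required. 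Verifying that the if- and else-branches cannot interfere in this way, and that the rewrites to the rows inherited from $t_v$ preserve their previously established instances of~(3a)--(3d), is where the bulk of the argument lies.
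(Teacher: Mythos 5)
Your plan reproduces the paper's proof almost step for step: the same reduction (show that the output of Algorithm~\ref{alg:create} models the path to $u$ in the sense of Definition~\ref{def:correctTableau}, since $\tbl_k(A,x,\sigma)=\tbl(u)$), the same induction over $u$, the same routine disposal of conditions~1, 2 and~4, the same case split on whether $x=\att(e_l)[j]$ is internal or external, and even the same key observation --- the paper phrases it as ``it has to be an upwards pointer, because there is no row below $m$'' --- that the bottom row of a tableau never carries a downwards pointer, so the pointer transplanted in the else-branch is upwards.

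However, the step you single out as the most delicate point is a genuine gap, and the invariant you propose to establish there is false. Consider the paper's own tableau $t_2$ of Figure~\ref{fig:traversal_tab_ex}, i.e.\ $\mathsf{createTableau}(C,1)$ for the grammar of Figure~\ref{fig:annotatedDT}: in $\rhs(C)$ vertex $1$ is internal, vertices $2,3,4$ are external, and the $D$-edge has $\att = 1\cdot 3$. Processing column $1$ of the new row takes the internal branch (lines~\ref{line:11}--\ref{line:12}) and sets $\nde(t(1,1))=1$, while cell $(1,1)$ keeps the self-loop it was given in $t_\varepsilon$, because column $1$ is also the column of the first external vertex of $\rhs(C)$. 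Hence $\findNode(t(1,1))=t(1,1)$ with $\nde=1\neq 0$, violating condition~(3b) for row $1$, and since also $\findNode(t(2,1))=t(1,1)=\findNode(t(1,1))$ with $2>1$, condition~(3c) fails as well: exactly the ``dual role'' you feared. So the lemma is not literally provable as stated; notably, the paper's published proof has the very same unacknowledged hole --- it dismisses this case with the remark that ``the same argument holds if $\internal_{\dt(A)}(v,x)$ is undefined.'' The defect is benign for the data structure itself: the $\nde$ value of such a dual-role cell is only ever read through pointers coming from (3a)-cells, for which it is correct, and its (3b)-role is redirected into $t_1$ by $\mathsf{concat}$ before it is ever read on a full chain rooted at $S$. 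But closing the gap requires weakening Definition~\ref{def:correctTableau} (e.g.\ dropping or restricting the $\nde=0$ requirement in~(3b) and adjusting~(3c) accordingly), not a more careful argument within your plan; your instinct about where the proof is hardest was exactly right.
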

\begin{proof}
  To prove the claim it suffices to show that $t$ models the path to $u$ in $\dt(A)$ as defined in
  Definition~\ref{def:correctTableau}, since $\tbl_k(A,x,\sigma) = \tbl(u)$. We prove this
  inductively over $u$. For $u = \varepsilon$ the tableau $\tab_\varepsilon$ obviously
  $\dt(A)$-models $u$. Otherwise $u$ can be expressed as $u = v.l$ for some $l \in \N$.  Let the
  labels of $v$ and $u$ be $B \to g_v$ and $C \to g_u$, respectively. By induction we obtain
  $\tab_v$, which models the path from the root of $\dt(A)$ to $v$. Let $\tab_v$ have $m$ rows, and
  let $\tab_u$ be the $n = m+1$ row tableau computed by extending $\tab_v$ using
  Algorithm~\ref{alg:create}. Then $\tab_u$ is correct according to
  Definition~\ref{def:correctTableau}: Condition~1 of the definition obviously holds. As previously
  noted, Line~\ref{line:id} of Algorithm~\ref{alg:create} is just the definition of $\firstID(u)$,
  thus condition~2 is also fulfilled. This leaves condition~3. Let $j \in [\rank(C)]$ and let $x$ be
  defined as in Line~\ref{line:10} of Algorithm~\ref{alg:create}. If $x$ is internal then
  $\findNode(\tab_u(m,j))$ fulfills condition~3 by induction and $\findNode(\tab_u(n,j))$ correctly
  references the cell containing $x$ by Line~\ref{line:12} of the algorithm. If otherwise $x$ is
  external then let $\internal_{\dt(A)}(v,x) = (w,y)$ and let $x$ be the $j'$-th external node of
  $g_v$.  By induction $\findNode(\tab_v(m,j'))$ references a cell $c$ in the row representing $w$
  of $\tab_v$, which contains $y$ as its $\nde$-part. Note that it has to be an upwards pointer,
  because there is no row below $m$ in $\tab_v$. After Line~\ref{line:17} of the algorithm
  $\reg(\tab_u(n,j)) = c$ is an upwards pointer and every downwards pointer previously pointing to
  $\tab_u(m,j)$ now instead points to $\tab_u(n,j)$ and thus has the same $\findNode$-result as
  before (i.e.,~$c$). The same argument holds if $\internal_{\dt(A)}(v,x)$ is undefined, where
  $\tab_v(m,j')$ is some cell in the first row of $\tab_v$. Finally, $t^{\offset} = 0$ in
  $t_{\varepsilon}$ and remains unchanged in the further iterations of the algorithm.
\end{proof}
\subsection{Concatenating tableaux}\label{sss:attaching}
As explained in Section~\ref{sss:tableaux} above, to support arbitrary traversal, it is necessary to
concatenate tableaux to model longer paths in the derivation tree. The algorithm to do so is 
straightforward (given in Algorithm~\ref{alg:attach}). An example of this operation was already
shown in Figure~\ref{fig:traversal_tab_att_ex}. The example is of a concatenation in the intuitive
sense: $t_2$ is attached to $t_1$ at the latter's bottom row. One could however also say that the
last row of $t_1$ is in fact replaced by $t_2$. This may be a better point of view to take, because
the concatenation is not always an attachment at the bottom row. Instead a tableau may be
concatenated at any intermediate row. This happens when, due to a traversal ending at an external
vertex, the current position within the derivation tree is not represented by the bottom row of the
tableau, but instead by row $i$. If the next traversal then adds a tableau to concatenate, it needs
to be concatenated to row $i$ of the first tableau to correctly model the full branch of the
derivation tree. Algorithm~\ref{alg:attach} thus takes as a third parameter a row number specifying
from where in $t_1$ to concatenate $t_2$. The example in Figure~\ref{fig:traversal_tab_att_ex} is
the result of the operation $\mathsf{concat}(t_2, t_1, 3)$.
\begin{algorithm}[!t]
\begin{algorithmic}[1]
  \Require{$\tab_1^{\nt}(\row) = \tab_2^{\nt}(1) = B$}
    \Function{concat}{Tableaux $\tab_2$, $\tab_1$, number $\row$}
        \For{$j \in [\rank(B)]$}
          \State $c \gets \textbf{if } \reg(\tab_2(1,j))\text{ is a downwards pointer} \textbf{ then
          } \reg(\tab_2(1,j)) \textbf{ else } \tab_2(1,j)$
            \State $\reg(c) \gets \findNode(\tab_1(\row,j))$\label{importantLine}
        \EndFor
        \State $\tab_2^{\offset} \gets \tab_1^{\offset} + \tab_1^{\firstID}(\row)$
    \EndFunction
  \end{algorithmic}
\caption{Algorithm to attach $\tab_2$ to the $\row$-th row of $\tab_1$.}
\label{alg:attach}
\end{algorithm}

Concatenating a tableau $t_2$ with $m$ rows to a tableau $t_1$ at its $k$-th row results in a
``virtual'' tableau $t$ with $(k-1) + m$ rows. To this end we define for any $i \in
\{1,\ldots,(k-1)+m\}$ and $j \in [\maxRank]$ the following mappings:

\begin{align*}
  t(i,j) =&
  \begin{cases}
    t_1(i,j) & \text{if } i < k\\
    t_2(i-(k-1), j) & \text{if } i \geq k
  \end{cases} \\
  t^{\nt}(i) =&
  \begin{cases}
    t_1^{\nt}(i) & \text{if } i < k\\
    t_2^{\nt}(i-(k-1)) & \text{if } i \geq k
  \end{cases} \\
  t^{\firstID}(i) =&
  \begin{cases}
    t_1^{\firstID}(i) & \text{if } i < k\\
    t_2^{\firstID}(i-(k-1)) + t_2^{\offset} & \text{if } i \geq k
  \end{cases} \\
  t^{\offset} =& 0
\end{align*}

The intention being that we continue to use the first $k-1$ rows of $t_1$ as before, but virtually
replace row $k$ with $t_2$. Using this we can show that two concatenated tableaux model a
concatenated path in a derivation tree.

\begin{lemma}
  Let $A_1$ be a nonterminal, let $u_m$ be a node in $\dt(A_1)$, and let $u_1,\ldots,u_{m-1}$ be the
  nodes on the path from the root of $\dt(A_1)$ to $u_m$. For $i \in [m]$ let $u_i$ be labeled by
  $A_i \to g_i$. Let $t_1$ be a tableau whose first $r$ rows
  model the path $u_1,\ldots, u_r$ in $\dt(A_1)$ and let $t_2$ be a tableau that models the path
  $u_r,\ldots,u_m$ in $\dt(A_r)$. Then the virtual tableau $t$ resulting of
  $\mathsf{concat}(t_2,t_1,r)$ as per Algorithm~\ref{alg:attach} models the path from $u_1$ to $u_m$
  in $\dt(A_1)$. Computing $\mathsf{concat}(t_2,t_1,r)$ takes $O(\maxRank)$ time, where $\maxRank$
  is the maximal nonterminal rank.
  \label{lem:attachmentCorrect}
\end{lemma}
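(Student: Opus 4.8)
The plan is to verify that the virtual tableau $t=\mathsf{concat}(t_2,t_1,r)$ satisfies all four conditions of Definition~\ref{def:correctTableau} for the path $u_1,\dots,u_m$ in $\dt(A_1)$, and then to read off the $O(\maxRank)$ running time directly from the loop of Algorithm~\ref{alg:attach}. First I would check the precondition: since the first $r$ rows of $t_1$ model $u_1,\dots,u_r$ we have $t_1^{\nt}(r)=A_r$, and since $t_2$ models $u_r,\dots,u_m$ in $\dt(A_r)$ its first row is the root, so $t_2^{\nt}(1)=A_r$; hence $B=A_r$. Conditions~1 and~4 are then immediate: $t^{\nt}(i)=A_i$ by the definition of the virtual tableau together with the fact that $t_1,t_2$ model their paths, and $t^{\offset}=0$ by definition.

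For Condition~2 I would invoke the \emph{path-additivity} of $\firstID$. Because $\firstID$ is built by accumulating along the root-to-node path the local contributions $|V_{g_v}|-|\ext_{g_v}|+\sum_{j<l}\nodes(\lbl(e_j))$ (line~\ref{line:id}), and $\dt(A_r)$ is literally the subtree of $\dt(A_1)$ rooted at $u_r$, splitting the path at $u_r$ gives $\firstID(u_i)=\firstID(u_r)+\firstID_{\dt(A_r)}(u_i)$ for $i\ge r$, the first term with respect to $\dt(A_1)$ and the last with respect to $\dt(A_r)$. Algorithm~\ref{alg:attach} sets $t_2^{\offset}=t_1^{\offset}+t_1^{\firstID}(r)=\firstID(u_r)$, so $t^{\firstID}(i)=t_2^{\firstID}(i-(r-1))+t_2^{\offset}=\firstID_{\dt(A_r)}(u_i)+\firstID(u_r)=\firstID(u_i)$; for $i<r$ the value is inherited verbatim from $t_1$.

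The heart of the proof is Condition~3, the correctness of $\findNode$. I would argue by the position of the external vertex $y$ (column $j$ of row $i$) relative to the split point $r$, using the invariant from Lemma~\ref{lem:correctConst} that each maximal group of mutually merged external vertices has a single \emph{representative} cell whose $\reg$ is the one upward pointer that reaches, in one further step, the cell storing the shared internal vertex in its $\nde$-field. For $i<r$ the cell belongs to $t_1$ and is untouched by $\mathsf{concat}$, which only writes into $t_2$; its $\reg$ is exactly as in $t_1$. The one delicate point is a downward pointer from a row $i<r$ into the old row $r$ of $t_1$: that target cell is physically preserved (the virtual tableau merely re-indexes row $r$ as $t_2(1,\cdot)$), and its upward $\reg$, which reaches the cell holding $z$ in some ancestor row $a<r$, is unchanged, so $\findNode$ still returns that cell. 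For $i\ge r$ the cell belongs to $t_2$, and I would distinguish two cases. In case (a), $y$ merges with an internal vertex of $u_r,\dots,u_m$; standalone $t_2$ is already correct, and $\mathsf{concat}$ does not disturb the result because it rewrites only the $\reg$ of first-row representatives of the undefined-at-$u_r$ groups, pointers that $\findNode$ follows \emph{from} $u_r$'s own external vertices but merely \emph{stops at} when reading an $\nde$ reached from below. In case (b), $y$ traces up to a vertex still external at the $u_r$ level, so in standalone $t_2$ it resolves to a cyclic first-row representative for $u_r$'s $j'$-th external vertex; line~\ref{importantLine} redirects exactly that representative to $\findNode(t_1(r,j'))$, which, since $u_r$'s $j'$-th external vertex merges with the same $z$, is the cell in row $a<r$ holding $z$, so $y$ now resolves correctly (case (b) with $i=r$ is the direct action of the loop). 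If that $j'$-th external vertex of $u_r$ is itself undefined in $\dt(A_1)$, then $\findNode(t_1(r,j'))$ is the designated row-$1$ cell with $\nde=0$, so the undefined case is handled uniformly.

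For the running time, the loop of Algorithm~\ref{alg:attach} runs $\rank(B)\le\maxRank$ times; each iteration performs one downward/upward test, one $\findNode$ call on $t_1$ (at most two dereferences), and a constant number of pointer writes, all $O(1)$, and the closing offset update is $O(1)$, giving $O(\maxRank)$ in total. I expect the main obstacle to be Condition~3, specifically showing that rewriting only the $O(\maxRank)$ first-row representatives of $t_2$ simultaneously re-links \emph{every} external vertex in the $t_2$ part that previously pointed into ``undefined at $u_r$'', while leaving intact both the internal resolutions within $t_2$ and the downward pointers reaching in from the $t_1$ part. This reduces to checking that $\mathsf{concat}$ preserves the single-representative invariant under the two-level ``down-then-up'' discipline of $\findNode$, which is the crux of the argument.
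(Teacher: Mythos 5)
Your proof is correct and follows essentially the same route as the paper's: you verify Conditions 1--4 of Definition~\ref{def:correctTableau} for the virtual tableau, splitting Condition~3 into the untouched $t_1$ rows and the $t_2$ rows (resolution inside $\dt(A_r)$ versus resolution in an ancestor row of $t_1$, handled by the redirection in line~\ref{importantLine}), and you read the $O(\maxRank)$ bound directly off the loop of Algorithm~\ref{alg:attach}. Your extra care on fine points --- the explicit path-additivity of $\firstID$, downward pointers from $t_1$ rows into the replaced row $r$, the observation that $\findNode$ only \emph{reads} $\nde$ at cells whose $\reg$ gets rewritten, and the uniform treatment of the undefined case --- only sharpens details the paper's proof leaves implicit.
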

\begin{proof}
  We show the conditions of Definition~\ref{def:correctTableau} to be true for $t$ (Condition~4 is
  trivially fulfilled). 

  \paragraph*{Condition~1} As per the premise, $t_1^{\nt}(i) = A_i$ for $i \in [r-1]$ and
  $t_2^{\nt}(i-(r-1)) = A_i$ for $i \in \{r,\ldots,m\}$. Thus $t^{\nt}(i) = A_i$ for $i \in [m]$.
  
  \paragraph*{Condition~2} Again, per the premise $t_1^{\firstID}(i) + t_1^{\offset} =
  \firstID(u_i)$ for $1 \leq i < r$. After running $\mathsf{concat}(t_2, t_1, r)$ we further have
  $t_2^{\offset} = t_1^{\firstID}(r) + t_1^{\offset}$. It follows that $t_2^{\firstID}(1) +
  t_2^{\offset} = \firstID(u_r)$, since $t_1^{\firstID}(r) + t_1^{\offset} = \firstID(u_r)$ and
  $t_2^{\firstID}(1) = 0$ by Definition~\ref{def:correctTableau}. It follows that
  $t_2^{\firstID}(i-(r-1)) + t_2^{\offset} = \firstID(u_i)$ for $i \in \{r,\ldots,m\}$ and thus
  $t^{\firstID}(i) + t^{\offset} = \firstID(u_i)$ for $i \in [m]$.

  \paragraph*{Condition~3} The tableau $t_1$ is unchanged by $\mathsf{concat}(t_2,t_1,r)$ and thus
  for every $i \in [r-1]$ and $j \in [\maxRank]$ there exist indices $i' \in [i]$ and $j' \in
  [\maxRank]$ such that $\findNode(t_1(i,j)) = t_1(i',j')$ and $\internal(u_i,x) = (u_{i'},
  \nde(t_1(i',j')))$ where $x$ is the $j$-th external vertex of $g_i$.

  Similarly, most of $t_2$ is unchanged: $t_2$ is constructed such that $\findNode(t_2(i,j))$ for
  any $i,j$ yields the internal vertex merged with the $j$-th external vertex in the rule
  represented by the $i$-th row of $t_2$. Whenever such a node exists in $\dt(A_r)$ the
  corresponding $\reg$-pointers are unchanged. In all other cases the $\mathsf{concat}(t_2,t_1,r)$
  operation replaces the upwards pointers in $t_2$ by new ones referencing cells in $t_1$. These
  cells are found by taking the $\reg$-pointers of the $r$-th row of $t_1$, which models the
  node $u_r$ in $\dt(A_1)$.

  Formally let $i \in \{r,\ldots,m\}$, $j\in [\maxRank]$, and let $x$ be the $j$-th external vertex
  of $g_i$. Whenever $\internal(u_i, x) = (u_{i'}, y)$ such that $i' \in \{r, \ldots, i\}$ (i.e.,
  the target node is in $\dt(A_r)$) then $\findNode(t_2(i-(r-1),j)) = t_2(i'-(r-1), j')$ with
  $\nde(t_2(i'-(r-1),j')) = y$ by Definition~\ref{def:correctTableau}.

  This leaves cases where $\internal(u_i, x) = (u_{i'}, y)$ such that $i' \in [r-1]$. For this case
  note that before ever running a concatenation with $t_2$ the result of $\findNode(t_2(i-(r-1),j))
  = t_2(1,j')$ is the cell in the first row of $t_2$ representing the external vertex of $g_r$
  merged with the $j$-th external vertex of $g_i$ (i.e., $x$). Both of these external vertices thus
  represent the same internal vertex. Further, $\findNode(t_1(r, j')) = t_1(i', j'')$ is such that
  $\nde(t_1(i', j'')) =  y$ since $t_1$ models the path from $u_1$ to $u_r$ in $\dt(A')$. After
  line~\ref{importantLine} in Algorithm~\ref{alg:attach} $\findNode(t_2(i-(r-1),j)) =
  \findNode(t_1(r, j'))$, and thus the condition holds in these cases as well. It follows that for
  every $i \in [m]$ and $j\in [\maxRank]$ there exist indices $i' \in [i]$ and $j' \in [\maxRank]$
  such that $\findNode(t(i,j)) = t(i', j')$ and $\internal(u_i, x) = (u_{i'}, \nde(t(i',j')))$ where
  $x$ is the $j$-th external vertex of $g_i$.

  Finally the algorithm uses one call of $\findNode$ and follows one additional $\reg$ pointer per
  column of the tableaux. Since the tableaux have $\maxRank$ columns we obtain a runtime of
  $O(\maxRank)$.
\end{proof}
Note that the argument in the proof of Lemma~\ref{lem:attachmentCorrect} still holds if $t_1$ is
itself a ``virtual tableau''. It follows that concatenations of an arbitrary number of tableaux can
model a path in the derivation tree of a grammar.

\subsection{Traversals using tableaux}\label{sss:navigation}
To traverse a graph using the method outlined at the beginning of Section~\ref{sse:traversing} we
need to store the current node $u$ within $\dt(S)$. As detailed in the previous section we represent
this node using a sequence of tableaux $\tab_1,\ldots, \tab_n$. However as these tableaux are all
chained together via Algorithm~\ref{alg:attach}, we only need to store a pointer to $\tab_n =
\tbl(A,x,\sigma)$ for some $A\in N$, $x \in \rhs(A)$ and $\sigma \in \Sigma$. We call this pointer
$\curTab$. As tableaux usually represent more than one node in the derivation tree we further need a
row index to mark which row of the tableau actually represents $u$. We call this index $\pos$.
Finally our current position is a vertex $y \in g_u$, which we call $\cur$. To find the matching
internal vertex of a node $x \in g_u$ let $\internal(x, \curTab, \pos) = (x, \curTab, \pos)$ if $x$
is internal. Otherwise let $x$ be the $j$-th external vertex of $g_u$ and let $c =
\findNode(\curTab(\pos,j))$ be a cell in the $j'$-th row of $\tbl(A, y, \sigma)$ for some $A\in N$,
$y \in \rhs(A)$, and $\sigma \in \Sigma$. Then $\internal(x, \curTab, \pos) = (\nde(c),
\tbl(A,y,\sigma), j')$.
\begin{theorem}
  Given a unique-labeled SL HR grammar $G = (N,P,S)$ of bounded rank $\maxRank$, height $h$, and
  maximal terminal rank in $\val(G)$ of $r$ we can traverse the edges of the graph $\val(G)$
  starting from any node in the start graph with $O(\maxRank)$ time per traversal step. The
  precomputation uses $O(|G||\Sigma|\maxRank h^2r)$ time and $O(|G||\Sigma|\maxRank hr)$
  space.
  \label{thm:traversal}
\end{theorem}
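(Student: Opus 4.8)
The plan is to assemble the two preceding lemmas into a single traversal routine and then bound the preprocessing separately. For the running state I keep the triple $(\curTab,\pos,\cur)$ of Section~\ref{sss:navigation}: a pointer $\curTab$ to the topmost tableau of a chain of concatenated tableaux that models the current root-to-$u$ branch of $\dt(S)$, the row $\pos$ of that chain representing $u$, and an internal vertex $\cur\in g_u$. A traversal step, i.e.\ evaluating $\outEdge_{k,l}(\cdot,\sigma)$ at the current vertex $\id(u,\cur)$, then follows the four-step recipe of Section~\ref{sse:traversing}. First I read $A=\curTab^{\nt}(\pos)$ and fetch the precomputed tableau $t'=\tbl_k(A,\cur,\sigma)$; by Definition~\ref{def:tableau} and Lemma~\ref{lem:correctConst} its last row is exactly the node $u.v'$ where $\outEdge_{k,l}(A,\cur,\sigma)=(v',y')$. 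I also store, during precomputation, the terminal $\sigma$-edge $e$ witnessing this output, so that the target $y'=\att(e)[l]$ is recovered in $O(1)$ for the query index $l$. I then call $\mathsf{concat}(t',\curTab,\pos)$, which by Lemma~\ref{lem:attachmentCorrect} turns the chain into a valid tableau for the branch ending at $u.v'$; the remark after that lemma guarantees correctness even when $\curTab$ is itself a virtual (chained) tableau.

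The mappings $\internal$ and $\id$ are then read off directly. If $y'$ is internal we are done; otherwise, with $y'$ the $j$-th external vertex of $g_{u.v'}$, the cell $c=\findNode(\curTab(\pos,j))$ holds, by Condition~3 of Definition~\ref{def:correctTableau}, the internal vertex merged with $y'$, and I update the state to $(\nde(c),t_c,\pos_c)$ for the tableau $t_c$ and row $\pos_c$ of $c$. The global number is $\id=t_c^{\firstID}(\pos_c)+t_c^{\offset}+\nde(c)$, correct by Condition~2 and the offset bookkeeping of $\mathsf{concat}$. Each ingredient is cheap: the lookup, the edge indexing, the two hops of $\findNode$, and the arithmetic are $O(1)$, and $\mathsf{concat}$ is $O(\maxRank)$ by Lemma~\ref{lem:attachmentCorrect}, so a step costs $O(\maxRank)$. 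The one subtlety is that $\mathsf{concat}$ overwrites up to $\maxRank$ representative cells of its shared, precomputed argument $t'$; rather than corrupt $t'$, I record these redirected pointers in a small per-chain-link array indexed by column, each entry already resolved to its final target cell. Consulting this array is $O(1)$, so $\findNode$ still resolves in a constant number of hops across the whole chain and $\mathsf{concat}$ remains $O(\maxRank)$. Since the chain models a branch of $\dt(S)$ it has at most $h$ links (an $\internal$-step that lowers $\pos$ causes the next concatenation to truncate everything beneath it), so working space is $O(\maxRank h)$.

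For the preprocessing I count and size the tableaux. We build $\tbl_k(A,x,\sigma)$ for every $A\in N\cup\{S\}$, vertex $x\in V_{\rhs(A)}$, label $\sigma\in\Sigma$, and incident index $k\in[r]$; as $G$ is unique-labeled at most one edge realises each combination, so there are $O(|G||\Sigma|r)$ tableaux, using $\sum_A|V_{\rhs(A)}|=O(|G|)$. Each models a path of length at most $h$ with $\maxRank$ columns, hence occupies $O(h\maxRank)$ registers, giving total space $O(|G||\Sigma|\maxRank h r)$ as claimed. To build one tableau I run Algorithm~\ref{alg:create} along the path; its target node is located in $O(h)$ time by unfolding the recursive definition of $\outEdge$, and extending one row does $O(\maxRank)$ base work plus the downward-pointer redirection of line~\ref{line:17}.

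The step I expect to be the main obstacle is bounding that redirection, since line~\ref{line:17} may re-point the same downward pointer again each time a deeper external vertex joins the same merge-group, and this is precisely what contributes the extra factor of $h$. The clean bound is that, when extending to the $n$-th row, the redirected cells are exactly those currently pointing downward into row $n-1$, of which there are at most one per existing cell, i.e.\ $O(h\maxRank)$; summing over the at most $h$ extensions yields $O(h^2\maxRank)$ per tableau. Multiplying by the $O(|G||\Sigma|r)$ tableaux gives the claimed $O(|G||\Sigma|\maxRank h^2 r)$ preprocessing time, which dominates the $O(|G||\Sigma|rh)$ cost of locating all $\outEdge$-targets together with the one-pass bottom-up computation of $\nodes$.
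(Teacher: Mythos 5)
Your proposal is correct and follows essentially the same approach as the paper's proof: the same four-step traversal state $(\curTab,\pos,\cur)$ using the precomputed tableaux $\tbl_k(A,x,\sigma)$ together with a stored successor/witness edge, the same count of $O(|G||\Sigma|r)$ tableaux occupying $O(\maxRank h)$ space and taking $O(\maxRank h^2)$ construction time each, and a per-step cost dominated by the $O(\maxRank)$ concatenation of Lemma~\ref{lem:attachmentCorrect}. Your two refinements --- the per-link side arrays that keep $\mathsf{concat}$ from destructively rewriting the shared precomputed tableaux (a subtlety the paper's in-place algorithm leaves implicit), and the explicit amortized bound on the line~\ref{line:17} pointer redirections that justifies the $h^2$ factor the paper merely asserts --- add useful detail but do not change the underlying method.
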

\begin{proof}
  Let $M$ be the largest vertex number appearing in $G$. For any $A \in N$, $x \in [M]$,
  $\sigma \in \Sigma$, and $k,l \in [\rank(\sigma)]$ we precompute the tableaux $\tbl_k(A,x,\sigma)$
  (if defined, i.e., $x$ exists in $\rhs(A)$) and for $\outEdge_{k,l}(A,x,\sigma) = (v,y)$ we define
  $\success_{k,l}(A,x,\sigma) = y$, which is also stored during the precomputation. Then there are
  $|G| \cdot |\Sigma| \cdot r$ many such tableaux to compute and computing one takes
  $O(h^2\maxRank)$ time. Therefore the precomputation takes $O(|G||\Sigma|\maxRank h^2r)$ time but
  only $O(|G||\Sigma|\maxRank hr)$ space, because
  only the final tableaux constructed in Algorithm~\ref{alg:create} are stored. The
  $\success$-mapping needs an additional $O(r^2)$ space, but since $|G|r > r^2$ it is omitted in the
  upper bounds given. We then initialize the global state with $\curTab = \tab_\varepsilon$, $\pos =
  1$, and $\cur = x$ for any $x \in V_S$. Now assume we want to traverse the $k$-th outgoing
  $\sigma'$-edge of the current node towards the $l$-th neighbor, then we do the following: 
\begin{enumerate}
  \item Let $A = \curTab^{\nt}(\pos)$ and $x = \cur$.
  \item If $\tbl_k(A,x,\sigma') \neq t_\varepsilon$:\\
    $\attach(\curTab, \tbl_k(A, x, \sigma'))$ \\
    Let $\curTab = \tbl_k(A,x,\sigma')$ \\
    Let $\pos = $ index of last row of $\tbl_k(A, x, \sigma')$
  \item Let $\cur = \success_{k,l}(A, x, \sigma')$.
  \item Let $\internal(\cur, \curTab, \pos) = (y, \tab, j)$ and let $\cur = y$, $\curTab = \tab$,
    and $\pos = j$.
\end{enumerate}
  These steps take $O(1)$ time, except for the $\attach$ operation in Step 2, which takes
  $O(\maxRank)$ time. Note Step $2$ is only done if the tableau is not the empty tableau
  $\tab_\varepsilon$. The reason for this is that $\tbl_k(A,x,\sigma) = \tab_\varepsilon$ only if
  the path in the derivation tree does not change (i.e., the edge to be traversed is found in
  $\rhs(A)$). In this case only $\cur$ changes.
  Compute $\id(\cur) = \curTab^{\firstID}(\pos) + \curTab^{\offset} + \cur$ to find the
  corresponding node ID in $\val(G)$.
\end{proof}

The space needed can be given more precisely: using $\log(|G|)$-bit registers, four registers are
necessary for every cell of a tableau: three for the $\reg$ pointer (address of target tableau and
two numbers for row and column) and one for $\nde$.

\section{Limits of the method}\label{sse:limits}
We finally want to give some indication as to why it is challenging to lift the unique-label
restriction using the method presented here. Consider the grammar $G_n$ given in
Figure~\ref{fig:starGrammar1}. It generates a star-graph with one center vertex and $2^n$
$a$-neighbors. For the tableaux-method to work it would now be necessary to represent every node of
the derivation tree with only $O(n)$ precomputed tableaux. However consider the derivation tree and
tableaux given in Figure~\ref{fig:starDT} for $n=3$. Clearly the tableaux modeling half of the
derivation tree are all different. The other half (not shown in the figure) is also different but
the differences are located only in the $\nde$-values within the first rows, which is minor. That is
the fifth tableaux (counting from the left) would be identical to the first with the only difference
being that $\nde$ in cell $(0,1)$ is $3$ instead of $2$. Moreover every leaf contains a vertex that
is a neighbor of the $1$-node in the start graph. Thus if the goal is to support constant-time
navigation to the $i$-th neighbor all of these tableaux need to be precomputed. Since there are
$O(2^n)$ many such tableaux this is not feasible.
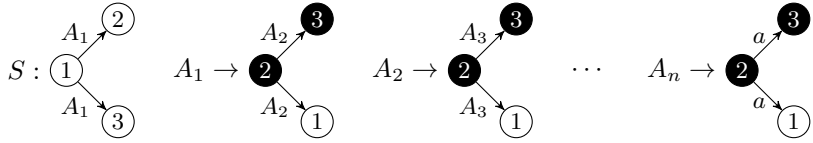
\begin{figure}[!t]
  \centering
  \begin{tikzpicture}
  \node (S) {$S:$};
  \begin{scope}[every node/.style={circle, draw, inner sep=0pt, minimum size=12pt, node distance=15pt}]
    \node[right=0pt of S] (S1) {\idsize 1};
    \node[above right=of S1] (S2) {\idsize 2};
    \node[below right=of S1] (S3) {\idsize 3};
  \end{scope}
  \begin{scope}[every path/.style={-stealth'}]
    \draw (S1) --node[above left=-4pt] {\lblsize $A_1$} (S2);
    \draw (S1) --node[below left=-4pt] {\lblsize $A_1$} (S3);
  \end{scope}

  \node[right=1.5 of S] (A1) {$A_1 \to$};
  \begin{scope}[every node/.style={circle, draw, inner sep=0pt, minimum size=12pt, node distance=15pt}]
    \node[right=0pt of A1, fill=black, text=white] (A12) {\idsize 2};
    \node[above right=of A12, fill=black, text=white] (A13) {\idsize 3};
    \node[below right=of A12] (A11) {\idsize 1};
  \end{scope}
  \begin{scope}[every path/.style={-stealth'}]
    \draw (A12) --node[above left=-4pt] {\lblsize $A_2$} (A13);
    \draw (A12) --node[below left=-4pt] {\lblsize $A_2$} (A11);
  \end{scope}

  \node[right=1.5 of A1] (A2) {$A_2 \to$};
  \begin{scope}[every node/.style={circle, draw, inner sep=0pt, minimum size=12pt, node distance=15pt}]
    \node[right=0pt of A2, fill=black, text=white] (A22) {\idsize 2};
    \node[above right=of A22, fill=black, text=white] (A23) {\idsize 3};
    \node[below right=of A22] (A21) {\idsize 1};
  \end{scope}
  \begin{scope}[every path/.style={-stealth'}]
    \draw (A22) --node[above left=-4pt] {\lblsize $A_3$} (A23);
    \draw (A22) --node[below left=-4pt] {\lblsize $A_3$} (A21);
  \end{scope}
  
  \node[right=1.5 of A2] (dots) {$\cdots$};

  \node[right=2.5 of A2] (An) {$A_n \to$};
  \begin{scope}[every node/.style={circle, draw, inner sep=0pt, minimum size=12pt, node distance=15pt}]
    \node[right=0pt of An, fill=black, text=white] (An2) {\idsize 2};
    \node[above right=of An2, fill=black, text=white] (An3) {\idsize 3};
    \node[below right=of An2] (An1) {\idsize 1};
  \end{scope}
  \begin{scope}[every path/.style={-stealth'}]
    \draw (An2) --node[above left=-4pt] {\lblsize $a$} (An3);
    \draw (An2) --node[below left=-4pt] {\lblsize $a$} (An1);
  \end{scope}
\end{tikzpicture}
  \caption{Grammar generating a Star such that there are exponentially many different contexts in
  the derivation tree.}
  \label{fig:starGrammar1}
\end{figure}
\begin{figure}[!t]
  \centering
  \input{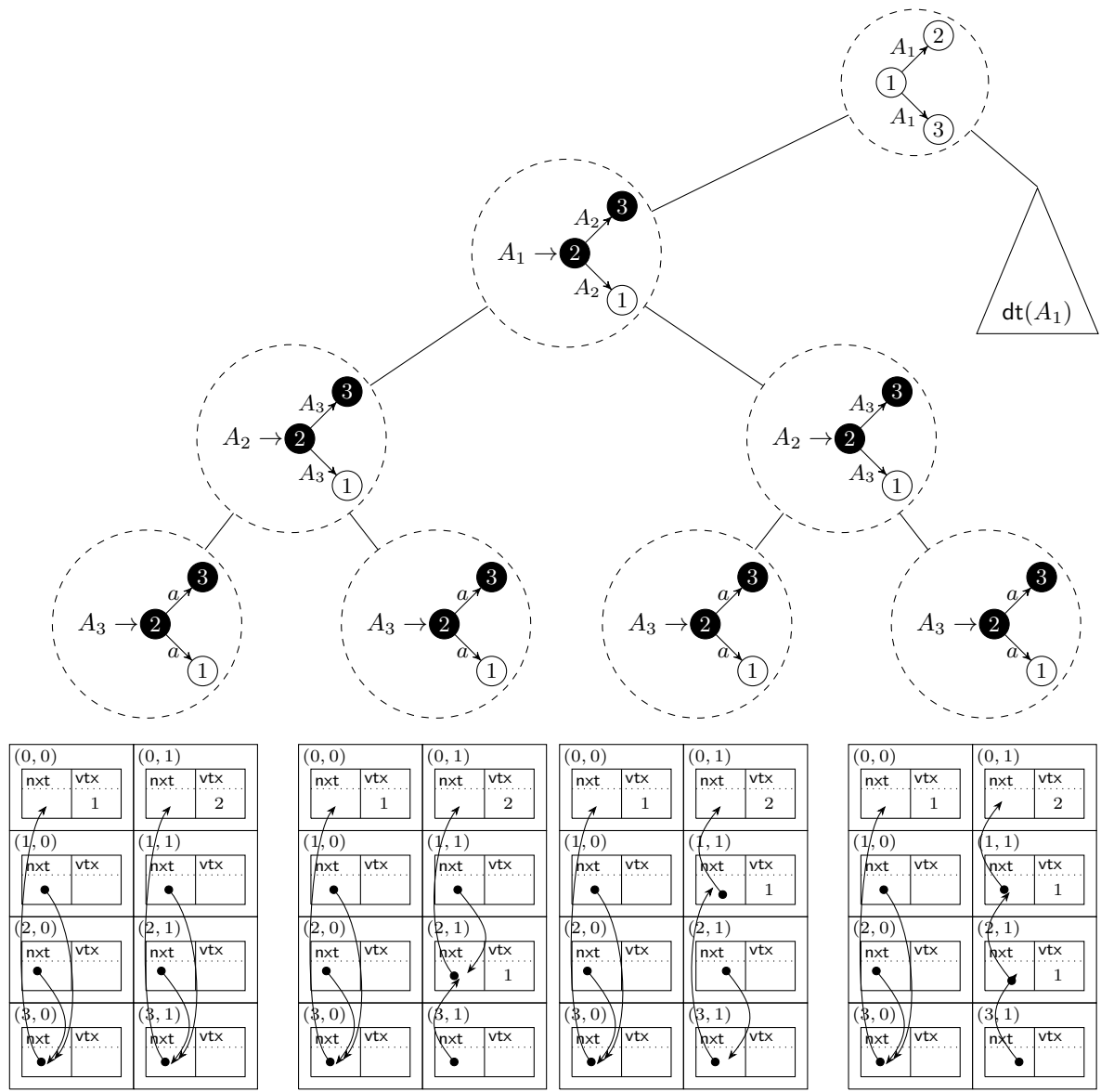}
  \caption{Derivation tree for the grammar in Figure~\ref{fig:starGrammar1} with $n = 3$. Every leaf
is annotated with the tableau that $\dt(S)$-models the leaf.}
  \label{fig:starDT}
\end{figure}

We do note however that the grammar is a somewhat unnatural way of generating such a star-pattern.
The more natural grammar in Figure~\ref{fig:starGrammar2} on the other hand has the property that
every relevant node of the derivation tree is indeed modeled by almost the same tableau: the only
difference between two tableaux here is that they use different $\firstID$-vectors. Using
precomputed offsets it would likely be possible to adapt the $\firstID$-vectors at runtime to the
concrete node.  This way only a polynomial number of tableaux would be needed to traverse this
grammar.
\begin{figure}[!t]
  \centering
  \begin{tikzpicture}
  \node (S) {$S:$};
  \begin{scope}[every node/.style={circle, draw, inner sep=0pt, minimum size=12pt, node distance=15pt}]
    \node[right=0pt of S] (S1) {\idsize 1};
    \node[above=of S1, rectangle] (S2) {\lblsize $A_1$};
    \node[below=of S1, rectangle] (S3) {\lblsize $A_1$};
  \end{scope}
  \begin{scope}
    \draw (S1) --node[left=-2pt, pos=.3] {\scriptsize $1$} (S2);
    \draw (S1) --node[left=-2pt, pos=.3] {\scriptsize $1$} (S3);
  \end{scope}

  \node[right=1.5 of S] (A1) {$A_1 \to$};
  \begin{scope}[every node/.style={circle, draw, inner sep=0pt, minimum size=12pt, node distance=15pt}]
    \node[right=0pt of A1, fill=black, text=white] (A11) {\idsize 1};
    \node[above=of A11, rectangle] (A12) {\lblsize $A_2$};
    \node[below=of A11, rectangle] (A13) {\lblsize $A_2$};
  \end{scope}
  \begin{scope}
    \draw (A11) --node[left=-2pt, pos=.3] {\scriptsize $1$} (A12);
    \draw (A11) --node[left=-2pt, pos=.3] {\scriptsize $1$} (A13);
  \end{scope}

  \node[right=1.5 of A1] (A2) {$A_2 \to$};
  \begin{scope}[every node/.style={circle, draw, inner sep=0pt, minimum size=12pt, node distance=15pt}]
    \node[right=0pt of A2, fill=black, text=white] (A21) {\idsize 1};
    \node[above=of A21, rectangle] (A22) {\lblsize $A_3$};
    \node[below=of A21, rectangle] (A23) {\lblsize $A_3$};
  \end{scope}
  \begin{scope}
    \draw (A21) --node[left=-2pt, pos=.3] {\scriptsize $1$} (A22);
    \draw (A21) --node[left=-2pt, pos=.3] {\scriptsize $1$} (A23);
  \end{scope}
  
  \node[right=1.5 of A2] (dots) {$\cdots$};

  \node[right=2.5 of A2] (An) {$A_n \to$};
  \begin{scope}[every node/.style={circle, draw, inner sep=0pt, minimum size=12pt, node distance=15pt}]
    \node[right=0pt of An, fill=black, text=white] (An3) {\idsize 3};
    \node[above=of An3] (An2) {\idsize 2};
    \node[below=of An3] (An1) {\idsize 1};
  \end{scope}
  \begin{scope}[every path/.style={-stealth'}]
    \draw (An3) --node[left=-2pt] {\lblsize $a$} (An2);
    \draw (An3) --node[left=-2pt] {\lblsize $a$} (An1);
  \end{scope}
\end{tikzpicture}
  \caption{Alternative grammar equivalent to the grammar in Figure~\ref{fig:starGrammar1} but with
polynomially many distinct tableaux.}
  \label{fig:starGrammar2}
\end{figure}
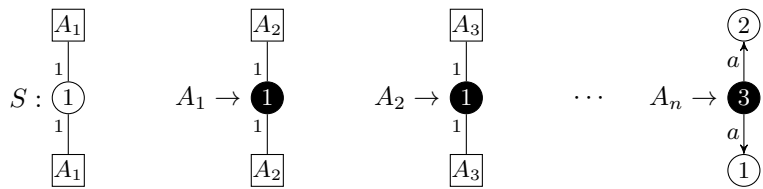

While these limitations are strong we do stress that any tree adheres to them and therefore the
method can be used to traverse trees represented by SL tree grammars without the need of a
conversion into a normal form. Furthermore if only unidirectional traversal is required the
restrictions can be lifted slightly: assume the graph $\val(G)$ for a grammar $G$ has only rank 2
edges, and we are only interested in following them from source to target. Then the unique-label
restriction only needs to apply to the source, not the target as we do not need to construct
tableaux for the other direction.

\section{Conclusions}
We have presented a data structure which supports the traversal of hypergraphs given as
straight-line hyperedge replacement (SL HR) graph grammars.  A single traversal step along one
hyperedge is carried out in constant time.  Our method requires two restrictions: 
\begin{enumerate}
  \item each node can be the $k$-th incident node of at most one $\sigma$-labeled hyperedge for
    every edge label $\sigma$, and 
  \item the number of nodes incident with any nonterminal edge is bounded by a constant.
\end{enumerate}
In the future we would like to investigate whether Restriction~1 can be removed or at least
weakened. We would like to implement our data structure.  It will be interesting to experimentally
compare the traversal of trees represented via our data structure, against existing solutions for
the constant delay traversal of grammar-compressed trees.  A direct (more or less naive)
implementation for grammar-compressed trees is investigated in a technical
report~\cite{DBLP:journals/corr/abs-1012-5696}; it was rudimentally compared against a constant-time
solution (using an implementation by Sadakane of succinct trees
\cite{DBLP:conf/alenex/ArroyueloCNS10,DBLP:journals/talg/NavarroS14} for solving the ``next link
problem'', cf. Section~\ref{sss:naive_delay}), giving slower times (on average) for the constant-time
solution~\cite{cla10}. A comparison with the top-tree method~\cite{DBLP:journals/iandc/BilleGLW15}
would also be very interesting, since this method allows for traversal with logarithmic delay
without any additional structures.

\section*{Acknowledgements}
This project has received funding from the European Union's Horizon 2020 research and innovation
programme under grant agreement No 682588. The authors thank the anonymous reviewers for their many
helpful comments.

\section*{References}
\bibliographystyle{elsarticle-num-names-alphasort}
\bibliography{refs}

\begin{thebibliography}{25}
\providecommand{\natexlab}[1]{#1}
\providecommand{\url}[1]{\texttt{#1}}
\providecommand{\urlprefix}{URL }
\expandafter\ifx\csname urlstyle\endcsname\relax
  \providecommand{\doi}[1]{doi:\discretionary{}{}{}#1}\else
  \providecommand{\doi}[1]{doi:\discretionary{}{}{}\begingroup
  \urlstyle{rm}\url{#1}\endgroup}\fi
\providecommand{\bibinfo}[2]{#2}

\bibitem[{Arroyuelo et~al.(2010)Arroyuelo, C{\'{a}}novas, Navarro, and
  Sadakane}]{DBLP:conf/alenex/ArroyueloCNS10}
\bibinfo{author}{D.~Arroyuelo}, \bibinfo{author}{R.~C{\'{a}}novas},
  \bibinfo{author}{G.~Navarro}, \bibinfo{author}{K.~Sadakane},
  \bibinfo{title}{Succinct Trees in Practice}, in: \bibinfo{booktitle}{ALENEX},
  \bibinfo{pages}{84--97}, \bibinfo{year}{2010}.

\bibitem[{Bannai(2016)}]{DBLP:reference/algo/Bannai16}
\bibinfo{author}{H.~Bannai}, \bibinfo{title}{Grammar Compression}, in:
  \bibinfo{booktitle}{Encyclopedia of Algorithms},
  \bibinfo{publisher}{Springer-Verlag, New York}, \bibinfo{pages}{861--866},
  \bibinfo{year}{2016}.

\bibitem[{Bender and Farach{-}Colton(2004)}]{DBLP:journals/tcs/BenderF04}
\bibinfo{author}{M.~A. Bender}, \bibinfo{author}{M.~Farach{-}Colton},
  \bibinfo{title}{The Level Ancestor Problem simplified},
  \bibinfo{journal}{Theor. Comput. Sci.}
  \bibinfo{volume}{321}~(\bibinfo{number}{1}) (\bibinfo{year}{2004})
  \bibinfo{pages}{5--12},
  \urlprefix\url{https://doi.org/10.1016/j.tcs.2003.05.002}.

\bibitem[{Bille et~al.(2015{\natexlab{a}})Bille, G{\o}rtz, Landau, and
  Weimann}]{DBLP:journals/iandc/BilleGLW15}
\bibinfo{author}{P.~Bille}, \bibinfo{author}{I.~L. G{\o}rtz},
  \bibinfo{author}{G.~M. Landau}, \bibinfo{author}{O.~Weimann},
  \bibinfo{title}{Tree compression with top trees}, \bibinfo{journal}{Inf.
  Comput.} \bibinfo{volume}{243} (\bibinfo{year}{2015}{\natexlab{a}})
  \bibinfo{pages}{166--177},
  \urlprefix\url{https://doi.org/10.1016/j.ic.2014.12.012}.

\bibitem[{Bille et~al.(2015{\natexlab{b}})Bille, Landau, Raman, Sadakane,
  Satti, and Weimann}]{DBLP:journals/siamcomp/BilleLRSSW15}
\bibinfo{author}{P.~Bille}, \bibinfo{author}{G.~M. Landau},
  \bibinfo{author}{R.~Raman}, \bibinfo{author}{K.~Sadakane},
  \bibinfo{author}{S.~R. Satti}, \bibinfo{author}{O.~Weimann},
  \bibinfo{title}{Random Access to Grammar-Compressed Strings and Trees},
  \bibinfo{journal}{{SIAM} J. Comp.} \bibinfo{volume}{44}~(\bibinfo{number}{3})
  (\bibinfo{year}{2015}{\natexlab{b}}) \bibinfo{pages}{513--539}.

\bibitem[{Busatto et~al.(2008)Busatto, Lohrey, and
  Maneth}]{DBLP:journals/is/BusattoLM08}
\bibinfo{author}{G.~Busatto}, \bibinfo{author}{M.~Lohrey},
  \bibinfo{author}{S.~Maneth}, \bibinfo{title}{Efficient memory representation
  of {XML} document trees}, \bibinfo{journal}{Inf. Syst.}
  \bibinfo{volume}{33}~(\bibinfo{number}{4-5}) (\bibinfo{year}{2008})
  \bibinfo{pages}{456--474}.

\bibitem[{Claude(2010)}]{cla10}
\bibinfo{author}{F.~Claude}, \bibinfo{title}{Personal Communication},
  \bibinfo{year}{2010}.

\bibitem[{Drewes et~al.(1997)Drewes, Kreowski, and
  Habel}]{DBLP:conf/gg/DrewesKH97}
\bibinfo{author}{F.~Drewes}, \bibinfo{author}{H.~Kreowski},
  \bibinfo{author}{A.~Habel}, \bibinfo{title}{Hyperedge {R}eplacement {G}raph
  {G}rammars}, in: \bibinfo{booktitle}{Handbook of Graph Grammars and Computing
  by Graph Transformations, Volume 1: Foundations}, \bibinfo{publisher}{World
  Scientific Publishing Co Pte Ltd}, \bibinfo{pages}{95--162},
  \bibinfo{year}{1997}.

\bibitem[{Engelfriet(1997)}]{eng97}
\bibinfo{author}{J.~Engelfriet}, \bibinfo{title}{Context-free graph grammars},
  in: \bibinfo{booktitle}{Handbook of Formal Languages, Volume 3: Beyond
  words}, \bibinfo{publisher}{Springer-Verlag, Berlin, Heidelberg},
  \bibinfo{pages}{125--213}, \bibinfo{year}{1997}.

\bibitem[{Ganardi et~al.(2019)Ganardi, Jez, and
  Lohrey}]{DBLP:journals/corr/abs-1902-03568}
\bibinfo{author}{M.~Ganardi}, \bibinfo{author}{A.~Jez},
  \bibinfo{author}{M.~Lohrey}, \bibinfo{title}{Balancing Straight-Line
  Programs}, \bibinfo{journal}{CoRR} \bibinfo{volume}{abs/1902.03568},
  \urlprefix\url{http://arxiv.org/abs/1902.03568}.

\bibitem[{Gasieniec et~al.(2005)Gasieniec, Kolpakov, Potapov, and
  Sant}]{DBLP:conf/dcc/GasieniecKPS05}
\bibinfo{author}{L.~Gasieniec}, \bibinfo{author}{R.~M. Kolpakov},
  \bibinfo{author}{I.~Potapov}, \bibinfo{author}{P.~Sant},
  \bibinfo{title}{Real-Time Traversal in Grammar-Based Compressed Files}, in:
  \bibinfo{booktitle}{{DCC}}, \bibinfo{pages}{458}, \bibinfo{year}{2005}.

\bibitem[{Gusfield(1997)}]{DBLP:books/cu/Gusfield1997}
\bibinfo{author}{D.~Gusfield}, \bibinfo{title}{Algorithms on Strings, Trees,
  and Sequences - Computer Science and Computational Biology},
  \bibinfo{publisher}{Cambridge University Press}, \bibinfo{year}{1997}.

\bibitem[{Habel(1992)}]{DBLP:books/sp/Habel92}
\bibinfo{author}{A.~Habel}, \bibinfo{title}{Hyperedge {R}eplacement: {G}rammars
  and {L}anguages}, Lecture Notes in Computer Science,
  \bibinfo{publisher}{Springer-Verlag, Berlin, Heidelberg},
  \bibinfo{year}{1992}.

\bibitem[{Jez and Lohrey(2016)}]{DBLP:journals/iandc/JezL16}
\bibinfo{author}{A.~Jez}, \bibinfo{author}{M.~Lohrey},
  \bibinfo{title}{Approximation of smallest linear tree grammar},
  \bibinfo{journal}{Inf. Comput.} \bibinfo{volume}{251} (\bibinfo{year}{2016})
  \bibinfo{pages}{215--251},
  \urlprefix\url{https://doi.org/10.1016/j.ic.2016.09.007}.

\bibitem[{Lohrey(2012)}]{loh12}
\bibinfo{author}{M.~Lohrey}, \bibinfo{title}{Algorithmics on {SLP}-compressed
  strings: A survey}, \bibinfo{journal}{Groups Complexity Cryptology}
  \bibinfo{volume}{4} (\bibinfo{year}{2012}) \bibinfo{pages}{241--299}.

\bibitem[{Lohrey(2015)}]{DBLP:conf/dlt/Lohrey15}
\bibinfo{author}{M.~Lohrey}, \bibinfo{title}{Grammar-Based Tree Compression},
  in: \bibinfo{booktitle}{{DLT}}, \bibinfo{pages}{46--57},
  \bibinfo{year}{2015}.

\bibitem[{Lohrey et~al.(2013)Lohrey, Maneth, and
  Mennicke}]{DBLP:journals/is/LohreyMM13}
\bibinfo{author}{M.~Lohrey}, \bibinfo{author}{S.~Maneth},
  \bibinfo{author}{R.~Mennicke}, \bibinfo{title}{{XML} tree structure
  compression using RePair}, \bibinfo{journal}{Inf. Syst.}
  \bibinfo{volume}{38}~(\bibinfo{number}{8}) (\bibinfo{year}{2013})
  \bibinfo{pages}{1150--1167}.

\bibitem[{Lohrey et~al.(2018)Lohrey, Maneth, and
  Reh}]{DBLP:journals/algorithmica/LohreyMR18}
\bibinfo{author}{M.~Lohrey}, \bibinfo{author}{S.~Maneth},
  \bibinfo{author}{C.~P. Reh}, \bibinfo{title}{{C}onstant-{T}ime {T}ree
  {T}raversal and {S}ubtree {E}quality {C}heck for {G}rammar-{C}ompressed
  {T}rees}, \bibinfo{journal}{Algorithmica}
  \bibinfo{volume}{80}~(\bibinfo{number}{7}) (\bibinfo{year}{2018})
  \bibinfo{pages}{2082--2105}.

\bibitem[{Lohrey et~al.(2012)Lohrey, Maneth, and
  Schmidt{-}Schau{\ss}}]{DBLP:journals/jcss/LohreyMS12}
\bibinfo{author}{M.~Lohrey}, \bibinfo{author}{S.~Maneth},
  \bibinfo{author}{M.~Schmidt{-}Schau{\ss}}, \bibinfo{title}{Parameter
  reduction and automata evaluation for grammar-compressed trees},
  \bibinfo{journal}{J. Comput. Syst. Sci.}
  \bibinfo{volume}{78}~(\bibinfo{number}{5}) (\bibinfo{year}{2012})
  \bibinfo{pages}{1651--1669}.

\bibitem[{Maneth(2019)}]{man18}
\bibinfo{author}{S.~Maneth}, \bibinfo{title}{Grammar-Based Compression}, in:
  \bibinfo{booktitle}{Encyclopedia of Big Data Technologies.},
  \urlprefix\url{https://doi.org/10.1007/978-3-319-63962-8\_56-1},
  \bibinfo{year}{2019}.

\bibitem[{Maneth and Peternek(2018{\natexlab{a}})}]{DBLP:conf/dcc/ManethP18}
\bibinfo{author}{S.~Maneth}, \bibinfo{author}{F.~Peternek},
  \bibinfo{title}{Constant Delay Traversal of Compressed Graphs}, in:
  \bibinfo{booktitle}{{DCC}}, \bibinfo{pages}{32--41},
  \bibinfo{year}{2018}{\natexlab{a}}.

\bibitem[{Maneth and Peternek(2018{\natexlab{b}})}]{DBLP:journals/is/ManethP18}
\bibinfo{author}{S.~Maneth}, \bibinfo{author}{F.~Peternek},
  \bibinfo{title}{Grammar-based graph compression}, \bibinfo{journal}{Inf.
  Syst.} \bibinfo{volume}{76} (\bibinfo{year}{2018}{\natexlab{b}})
  \bibinfo{pages}{19--45}.

\bibitem[{Maneth and Sebastian(2010)}]{DBLP:journals/corr/abs-1012-5696}
\bibinfo{author}{S.~Maneth}, \bibinfo{author}{T.~Sebastian},
  \bibinfo{title}{Fast and Tiny Structural Self-Indexes for {XML}},
  \bibinfo{journal}{CoRR} \bibinfo{volume}{abs/1012.5696}.

\bibitem[{Navarro and Sadakane(2014)}]{DBLP:journals/talg/NavarroS14}
\bibinfo{author}{G.~Navarro}, \bibinfo{author}{K.~Sadakane},
  \bibinfo{title}{Fully Functional Static and Dynamic Succinct Trees},
  \bibinfo{journal}{{ACM} Trans. Algorithms}
  \bibinfo{volume}{10}~(\bibinfo{number}{3}) (\bibinfo{year}{2014})
  \bibinfo{pages}{16:1--16:39}.

\bibitem[{Schieber and Vishkin(1988)}]{DBLP:journals/siamcomp/SchieberV88}
\bibinfo{author}{B.~Schieber}, \bibinfo{author}{U.~Vishkin}, \bibinfo{title}{On
  Finding Lowest Common Ancestors: Simplification and Parallelization},
  \bibinfo{journal}{{SIAM} J. Comput.}
  \bibinfo{volume}{17}~(\bibinfo{number}{6}) (\bibinfo{year}{1988})
  \bibinfo{pages}{1253--1262}.

\end{thebibliography}
\end{document}